\DeclareSymbolFont{LTL}{U}{ltlfonts}{m}{n}
\DeclareMathAccent \LTLhat               {\mathord}{LTL}{"02}
\DeclareMathSymbol \LTLsquare            {\mathord}{LTL}{"08}
\DeclareMathSymbol \LTLdiamond           {\mathord}{LTL}{"09}
\DeclareMathSymbol \LTLcircle            {\mathord}{LTL}{"0A}
\DeclareMathSymbol \LTLsquareminus       {\mathord}{LTL}{"0B}
\DeclareMathSymbol \LTLdiamondminus      {\mathord}{LTL}{"0C}
\DeclareMathSymbol \LTLcircleminus       {\mathord}{LTL}{"0D}
\DeclareMathSymbol \LTLcircletilde       {\mathord}{LTL}{"0E}
\DeclareMathSymbol \LTLsquareuc          {\mathord}{LTL}{"10}
\DeclareMathSymbol \LTLdiamonduc         {\mathord}{LTL}{"11}
\DeclareMathSymbol \LTLcircleuc          {\mathord}{LTL}{"12}
\DeclareMathSymbol \LTLsquareminusuc     {\mathord}{LTL}{"13}
\DeclareMathSymbol \LTLdiamondminusuc    {\mathord}{LTL}{"14}
\DeclareMathSymbol \LTLcircleminusuc     {\mathord}{LTL}{"15}
\DeclareMathSymbol \LTLcircletildeuc     {\mathord}{LTL}{"16}
\DeclareMathSymbol \LTLsquarehat         {\mathord}{LTL}{"18}
\DeclareMathSymbol \LTLdiamondhat        {\mathord}{LTL}{"19}
\DeclareMathSymbol \LTLsquareminushat    {\mathord}{LTL}{"1B}
\DeclareMathSymbol \LTLdiamondminushat   {\mathord}{LTL}{"1C}
\DeclareMathSymbol \LTLsquarehatuc       {\mathord}{LTL}{"20}
\DeclareMathSymbol \LTLdiamondhatuc      {\mathord}{LTL}{"21}
\DeclareMathSymbol \LTLsquareminushatuc  {\mathord}{LTL}{"23}
\DeclareMathSymbol \LTLdiamondminushatuc {\mathord}{LTL}{"24}
\let\mathbb\mathds
\renewcommand{\geq}{\geqslant}
\renewcommand{\leq}{\leqslant}
\renewcommand{\ge}{\geqslant}
\renewcommand{\le}{\leqslant}
\newcommand{\nw}[1]{\hbox to 0pt{\color{blue!60!black}\hss
    \textsuperscript{\bfseries*}}%
    \marginpar{\color{blue!60!black}NW: #1}}
\newcommand{\chana}[1]{\todo[color=green!20]{\small #1}}
\newcommand{\nats}{\mathbb{N}}
\newcommand{\set}[1]{\{#1\}}
\newcommand{\nset}[2]{[#1,#2]}
\newcommand{\nsetinfinity}[1]{\mathord{[#1, +\infty[}}
\newcommand{\deff}{:=}
\newcommand{\multisetsof}[1]{\mathcal{M}(#1)}
\newcommand{\ms}{\mu}
\newcommand{\concat}{\,}
\newcommand{\size}[1]{|#1|}
\newcommand{\norm}[1]{||#1||}
\newcommand{\biglor}{\bigvee}
\newcommand{\prot}{\mathcal{P}}
\newcommand{\trace}{\run}
\newcommand{\atrace}{\run}
\newcommand{\accrun}{\alpha}
\newcommand{\fairrunsfrom}[1]{\mathsf{FRuns}(#1)}
\newcommand{\config}{\gamma}
\newcommand{\aconfig}{\config}
\newcommand{\configs}{\Gamma}
\newcommand{\initialstates}{I}
\newcommand{\initialconfigs}{\mathcal{I}}
\newcommand{\preop}{\mathsf{pre}}
\newcommand{\postop}{\mathsf{post}}
\newcommand{\prestar}[1]{\preop^*(#1)}
\newcommand{\poststar}[1]{\postop^*(#1)}
\newcommand{\step}[1]{\xrightarrow{#1}}
\newcommand{\run}{\rho}
\newcommand{\transitions}{\Delta}
\newcommand*{\states}{Q}
\newcommand*{\atrans}{t}
\newcommand*{\transat}[2]{\mathsf{tr}_{#2}(#1)}
\newcommand{\aquantif}{Q}
\newcommand{\satset}[1]{\llbracket {#1} \rrbracket}
\newcommand*{\apoly}{\mathsf{P}}
\newcommand{\varx}{\mathsf{x}}
\newcommand*{\startstate}{\mathsf{start}}
\newcommand*{\statex}[1]{X_{#1}}
\newcommand*{\okaction}{\mathsf{ok}}
\newcommand*{\okstate}{q_f}
\newcommand{\statey}{Y}
\newcommand{\maxexp}{\delta}
\newcommand{\reservoirstate}{R}
\newcommand{\PP}{\mathcal{P}}                  %
\newcommand{\trans}[1]{\xrightarrow{#1}}       %
\newcommand{\initial}{\mathcal{I}}   %
\newcommand{\cSet}{\mathcal{S}}  %
\newcommand*{\buchiaut}{\mathcal{A}}
\newcommand*{\ltlaut}[1]{\buchiaut_{#1}}
\newcommand{\bstates}{\mathcal{L}}
\newcommand{\btransitions}{T}
\newcommand*{\abuchistate}{\ell}
\newcommand*{\controlloc}{\abuchistate}
\newcommand*{\bstate}{\abuchistate}
\newcommand*{\binitstate}{\bstate_0}
\newcommand*{\bwinning}{\mathcal{W}}
\newcommand{\prodsystem}{\mathcal{PS}}
\newcommand{\prodconfigs}{\mathcal{C}}
\newcommand{\prodconfigsproj}[1]{\prodconfigs_{#1}}
\newcommand{\controlsize}[1]{|#1|_{\mathrm{cont}}}
\newcommand{\protocolsize}[1]{|#1|_{\mathrm{prot}}}
\newcommand{\prodconfig}{c}
\newcommand{\accstep}[1]{\xrightarrow[\mathsf{acc}]{#1}}
\newcommand{\prodinitialconfigs}{\prodconfigs_0}
\newcommand{\structuralbound}{\mathsf{B}}
\newcommand*{\setcomplement}[1]{\overline{#1}}
\newcommand*{\lewqo}{\preceq}
\let\olduparrow\uparrow
\renewcommand*{\uparrow}{{\,\olduparrow}}
\let\olddownarrow\downarrow
\renewcommand*{\downarrow}{{\,\olddownarrow}}
\newcommand*{\downwardclosure}[1]{\downarrow #1}
\newcommand*{\upwardclosure}[1]{\uparrow #1}
\newcommand*{\dummysymb}{\#}
\newcommand*{\natsanddummy}{\nats_{\dummysymb}}
\newcommand*{\transferflows}{\mathcal{F}}
\newcommand*{\tfweight}[1]{\mathsf{weight}(#1)}
\newcommand*{\minwqo}[1]{\mathsf{basis}(#1)}
\newcommand*{\transfersof}[1]{F[#1]}
\newcommand*{\atf}{\mathsf{tf}}
\newcommand*{\tfset}{F}
\newcommand*{\tftimes}{\otimes}
\newcommand*{\flowstep}[1]{\xhookrightarrow{#1}}
\newcommand*{\letf}{\preceq}
\newcommand*{\emptysequence}{\epsilon}
\newcommand*{\omeganats}{\nats_{\omega}}
\newcommand*{\preflows}[2]{\preop_{#1}(#2)}
\newcommand*{\zerovectors}{V_0}
\tikzset{controlpart/.style = {color = blue!50!black, draw, shape = rectangle}}
\tikzset{transferedge/.style = {-stealth, thick}}
\tikzset{zerotransfer/.style = {-stealth, thick, dashed}}
\tikzset{flowstate/.style = {color = black, draw, shape = circle}}
\tikzset{sepline/.style = {densely dotted, thick}}
\definecolor{color1}{HTML}{648FFF}
\definecolor{color4}{HTML}{785EF0}
\definecolor{color3}{HTML}{DC267F}
\definecolor{color2}{HTML}{FE6100}
\definecolor{color5}{HTML}{FFB000}
\newcommand*{\confpair}[2]{(#1,#2)}
\newcommand*{\ams}{\ms}
\newcommand{\Tleq}[1]{\transfersof{\transitions^{\leq #1}}}
\newcommand{\Tpow}[1]{\transfersof{\transitions^{#1}}}
\newcommand{\T}{\transfersof{\transitions}}
\newcommand*{\neutraltransferflows}{\transfersof{\emptysequence}}
\newcommand*{\leprod}{\leq_{\times}}
\newcommand*{\idealnorm}[1]{||#1||}
\newcommand*{\finitemapping}{\theta}
\newcommand*{\mapencoding}{\chi}
\newcommand*{\relevantvectors}{\mapencoding(\transferflows)}
\newcommand*{\indexof}{\mathsf{index}}
\newcommand*{\tftrans}{\mathsf{t}}
\newcommand*{\Tmin}{\minwqo{\T}}
\newcommand*{\maxweight}{2}
\newcommand{\DefOR}{\ensuremath{\hspace{0.6em}\big|\hspace{0.6em}}}
\newcommand{\ltlnext}{\mathsf{X}}
\newcommand{\ltlglobally}{\mathsf{G}}
\newcommand{\ltlfinally}{\mathsf{F}}
\newcommand{\ltluntil}{\mathbin \mathcal{U}}
\newcommand{\cheatstate}{q_\bot}
\newcommand{\U}{\ltluntil}
\newcommand{\Next}{\ltlnext}
\newcommand{\modelsLTL}{\ensuremath{\models}}
\newcommand{\modelsHLTL}{\ensuremath{\models}}
\newcommand{\modelsforall}{\models^{\forall}}
\newcommand{\modelsexists}{\models^{\exists}}
\renewcommand{\And}{\mathrel{\wedge}}
\newcommand{\Or}{\mathrel{\vee}}
\newcommand{\true}{\mathit{true}}
\newcommand{\false}{\mathit{false}}
\newcommand{\IntoP}{\mathrel{\rightharpoonup}}
\newcommand{\partTo}{\IntoP}
\newcommand{\V}{\mathcal{V}}
\newcommand{\Vars}{\ensuremath{\mathsf{Vars}}\xspace}
\newcommand{\Dom}{\ensuremath{\textit{Dom}}}
\newcommand{\machine}{\mathcal{M}}
\newcommand{\incr}{\mathsf{inc}}
\newcommand{\decr}{\mathsf{dec}}
\newcommand{\zero}{\mathsf{test_0}}
\newcommand{\halt}{\mathsf{halt}}
\newcommand{\halttrans}{\mathsf{h}}
\newcommand{\res}{\mathsf{res}} %
\newcommand{\interm}{\mathsf{int}}
\newcommand{\aux}{\mathsf{aux}}
\newcommand{\bad}{\mathcal{B}}
\newcommand{\pr}[2]{\mathsf{Pr}[#1 \models #2]}
\newcommand{\prodnet}[2]{N(#1, \ltlaut{#2})}
\newcommand{\protproj}[1]{\mathsf{pr}(#1)}
\newcommand{\configout}{\config_t}
\newcommand{\bstateout}{\bstate_t}
\newcommand{\Cout}{C_t}
\newcommand{\prodout}{c_t}
\newcommand{\stateone}{q_{\mathsf{one}}}
\newcommand{\staterest}{q_{\mathsf{rest}}}
\newcommand{\tgood}{\mathsf{good}}
\newcommand{\tbad}[1]{\mathsf{bad}_{#1}}
\newcommand{\goodconfigs}{\configs_{\mathsf{good}}}
\newcommand{\ltlgre}{\cSet_\bwinning}
\newcommand{\hyperneg}[1]{\mathsf{neg}(#1)}
\newcommand{\simplify}[2]{#1[#2]} 
\title{Temporal Hyperproperties for Population Protocols}
\author[1]{Nicolas Waldburger}
\author[2]{Chana Weil-Kennedy}
\author[2]{Pierre Ganty}
\author[2]{C\'esar S\'anchez}
\affil[1]{Université de Rennes, IRISA, INRIA, France}
\affil[2]{IMDEA Software Institute, Pozuelo de Alarc\'{o}n, Spain}
\date{}
\begin{document}

\maketitle

\begin{abstract}
	Hyperproperties are properties over sets of traces (or runs) of a system, as opposed to properties of just one trace.
	They were introduced in 2010 and have been much studied since, in particular via an extension of the temporal logic LTL called HyperLTL.
	Most verification efforts for HyperLTL are restricted to finite-state systems, usually defined as Kripke structures.
	In this paper we study hyperproperties for an important class of infinite-state systems.
	We consider population protocols, a popular distributed computing model in which arbitrarily many identical finite-state agents interact in pairs.
	Population protocols are a good candidate for studying hyperproperties because the main decidable verification problem, well-specification, is a hyperproperty.
	We first show that even for simple (monadic) formulas, HyperLTL verification for population protocols is undecidable.
	We then turn our attention to immediate observation population protocols, a simpler and well-studied subclass of population protocols.
	We show that verification of monadic HyperLTL formulas without the next operator is decidable in 2-\EXPSPACE, but that all extensions make the problem undecidable.
\end{abstract}

%

%
%
%
\section{Introduction}
\label{sec:introduction}

Hyperproperties are properties that allow to relate multiple traces (also called runs) of a system simultaneously~\cite{clarkson10hyperproperties}.
They generalize regular run properties to properties of sets of runs, and formalize a wide range of important properties such as information-flow security policies like noninterference~\cite{goguen82security,mclean96general} and observational determinism~\cite{zdancewic03observational}, consistency models in concurrent computing~\cite{bonakdarpour18monitoring}, and robustness models in cyber-physical systems~\cite{wang19statistical,bonakdarpour20model}.

HyperLTL~\cite{clarkson14temporal} was introduced as an extension of LTL (linear temporal logic) with quantification over runs which can then be related across time.
HyperLTL enjoys a decidable model-checking problem for finite-state systems, expressed as Kripke structures.
Other logics for hyperproperties were later introduced, like HyperCTL\(^*\)~\cite{finkbeiner13atemporal}, HyperQPTL~\cite{rabe16temporal,coenen19hierarchy}, and HyperPDL-\(\Delta\)~\cite{gutsfeld20propositional} which extend CTL\(^*\), QPTL~\cite{sistla87complementation}, and PDL~\cite{fischer79propositional} respectively.
These logics also enjoy decidable model-checking problems for finite-state systems.

Most algorithmic verification results for verifying hyperproperties of temporal logics are restricted to finite-state systems.
In the case of software verification, which is inherently infinite-state, the analysis of hyperproperties~\cite{beutner22software,farzan19automated,shemer19property,sousa16cartesian,unno21constraint} has been limited to the class of \(k\)-safety properties --- which only allow to establish the absence of a bad interaction between any \(k\) runs --- and do not extend to a temporal logic for hyperproperties.
A notable exception is~\cite{beutner22software}, but the logic used (OHyperLTL) is a simple asynchronous logic for hyperproperties and it requires restrictions on the underlying theories of the data used in the program.

In this paper we focus on the verification of HyperLTL for an important class of infinite-state systems.
We consider population protocols (PP) \cite{First-Pop-Prot}, an extensively studied (see e.g. \cite{AlistarhG18,ElsasserR18,Esparza21}) model of distributed computation in which anonymous finite-state agents interact pairwise to change their states, following a common protocol.
In a \emph{well-specified} PP, the agents compute a predicate: the input is the initial configuration of the agents’ states, and the agents interact in pairs to eventually reach a consensus opinion corresponding to the evaluation of the predicate (for \emph{any} number of agents).
Interactions are selected at random, which is modelled by considering only \emph{fair} runs.
LTL verification has been investigated for PPs in \cite{EGLM16}.
The authors consider LTL over actions, where formulas are evaluated over fair runs.
They show that it is decidable, given a PP and an LTL formula, to check if all fair runs from initial configurations of the protocol verify the formula.
Another related work on LTL verification for infinite-state systems is \cite{FortinMW17}, where the authors consider stuttering-invariant LTL verification over shared-memory pushdown systems.
\chana{stutt-inv then "defined" below}

We consider PPs because, though they are infinite-state, they enjoy several decidable problems.
In particular, the central verification problem checking whether a protocol is well-specified is decidable \cite{EGLM17journal} and has a hyperproperty ``flavor''.
A PP is well-specified if for every initial configuration \(\config_0\), every fair run starting in \(\config_0\) stabilizes to the same opinion.
A run stabilizes to an opinion \(b \in \set{0,1}\) if from some position onwards it visits no configuration with an agent whose opinion is \(1-b\).
With \(\initial\) the set of initial configurations and \(\fairrunsfrom{\config}\) the set of fair runs starting in \(\config\), well-specification can be expressed as: \[\forall \config_0 \in \initial, \ \fairrunsfrom{\config_0} \modelsHLTL \forall \run_1.
	\forall \run_2 \ldotp \textstyle{\bigvee_{b \in \set{0,1}}} (\ltlfinally \ltlglobally (\run_1 \text{ sees } b) \land \ltlfinally \ltlglobally(\run_2 \text{ sees } b))\]
where ``\(\run \text{ sees } b\)'' means that the run takes a transition
that puts agents into states with opinion \(b\).
Then \(\ltlfinally \ltlglobally (\run_i \text{ sees } b)\) ensures that \(\run_i\) converges to \(b\).

We show that for the general PP model, HyperLTL verification is already undecidable for simple (\emph{monadic}) formulas which can be decomposed into formulas referring to only one run each (Section \ref{sec:undecidability}).
We turn our attention to \emph{immediate observation population protocols} (IOPP), a subclass of PP \cite{Comp-Power-Pop-Prot}.
We show that HyperLTL verification over IOPP is a problem decidable in 2-\EXPSPACE when the formula is monadic and does not use the temporal operator \(\ltlnext\) (the formula is then \emph{stuttering-invariant}).
This result delineates the decidability frontier for verification in PP: non-monadic or non-stuttering-invariant HyperLTL verification over IOPP is undecidable (Section \ref{sec:iopp-verif}).
The decidability result for HyperLTL verification of IOPP is the most technical result of the paper.
In particular, the technical results of Section \ref{sec:structural-bounds} reason on the flow of agents in runs of an IOPP in conjunction with reading the transitions in a Rabin automaton.

%

%
%
%
%
%
%
%
%
%
%
%
%
%
%
%
%

%
%
%
\section{Preliminaries}

A \emph{finite multiset} over a finite set \(S\) is a mapping \(\ams \colon S \rightarrow \nats\) such that for each \(s\in S\), \(\ams(s)\) denotes the number of occurrences of element \(s\) in \(\ams\).
Given a set \(S\), \(\multisetsof{S}\) denotes the set of finite multisets over \(S\).
Given \(s \in S\), we denote by \(\vec{s}\) the multiset \(\ams\) such that \(\ams(s) = 1\) and \(\ams(s') = 0\) for all \(s' \ne s\).
Given \(\ams, \ams' \in \multisetsof{S}\), the multiset \(\ams + \ams'\) is defined by \((\ams + \ams')(s) = \ams(s) + \ams'(s)\) for all \(s \in S\).
We let \(\ams \leq \ams'\) when \(\ams(s) \leq \ams'(s)\) for all \(s \in S\).
When \(\ams' \leq \ams\), we let \(\ams - \ams'\) be the multiset such that \((\ams- \ams')(s) = \ams(s) - \ams'(s)\) for all \(s \in S\).
We call \(|\ams| =\sum_{s\in S} \ams(s)\) the \emph{size} of \(\ams\).
A set \(\cSet \subseteq \multisetsof{S}\) is ""Presburger"" if it can be written as a formula in Presburger arithmetic, \emph{i.e.}, in \(FO(\nats,+)\).

A \emph{strongly connected component} (SCC) in a graph is a non-empty maximal set of mutually reachable vertices.
A SCC is \emph{bottom} if no path leaves it.

\subsection{Population Protocols}

A \emph{population protocol} (PP) is a tuple \(\PP=(Q,\transitions,I)\) where \(Q\) is a finite set of \emph{states}, \(\transitions \subseteq Q^2 \times Q^2\) is a set of \emph{transitions} and \(I \subseteq Q\) is the set of \emph{initial states}.
A transition \(\atrans=\big( (q_1, q_2), (q_3,q_4) \big) \in \transitions\) is denoted \((q_1, q_2)\trans{\atrans} (q_3, q_4)\).
We let \(\size{\prot} \deff \size{\states} + \size{\transitions}\) denote the \emph{size} of \(\prot\).
A ""configuration"" of \(\PP\) is a multiset over \(Q\).
We denote by \(\configs \deff \set{\ams \in \multisetsof{Q} \mid \size{\ams}>2}\) the set of configurations; configurations must have at least \(2\) agents.
We note \(\initial \deff \set{\config \in \configs \mid \forall q \notin I, \config(q) = 0}\) the set of \emph{initial configurations}.
Given \(\config,\config' \in \configs\) and \((q_1, q_2)\trans{\atrans} (q_3, q_4)\in \transitions\), there is a ""step"" \(\config \trans{t} \config'\) if \(\config \geq \vec{q_1} + \vec{q_2}\) and \(\config' = \config - \vec{q_1} - \vec{q_2} + \vec{q_3} + \vec{q_4}\).
A transition \((q_1, q_2)\trans{} (q_3, q_4)\) is ""activated"" at \(\config\) if \(\config \ge \vec{q_1}+\vec{q_2}\), \emph{i.e.}, if there is an agent in \(q_1\) and an agent in \(q_2\) (or two agents in \(q_1\) if \(q_1=q_2\)).
Henceforth, we assume that for every \(q_1,q_2 \in Q\), there exist \(q_3,q_4 \in Q\) such that \((q_1, q_2)\trans{} (q_3, q_4) \in \transitions\), so that there is always an activated transition.
This can be done by adding self-loops \((q_1, q_2)\trans{}(q_1, q_2)\).

A \emph{finite run} is a sequence \(\config_0, \atrans_0, \config_1,\dots, \atrans_{k-1}, \config_k\) where \(\config_i \step{\atrans_{i}} \config_{i+1}\) for all \(i\leq k-1\); we say \(\atrans_{i}\) is ""fired"" at \(\config_i\).
We write \(\config \trans{*} \config'\) if there exists a finite run from \(\config\) to \(\config'\), and we say \(\config'\) is ""reachable"" from \(\config\).
Given \(\cSet \subseteq \configs\), let \(\poststar{\cSet}\) be the set of configurations reachable from \(\cSet\), \emph{i.e.}, \( \poststar{\cSet} \deff \set{ \config \mid \exists \config' \in \cSet \, .
	\, \config' \trans{*} \config}\).
Similarly, let \(\prestar{\cSet} \deff \set{\config \mid \exists \config' \in \cSet \, .
	\, \config \trans{*} \config'}\).

An \emph{infinite run} is an infinite sequence \(\run = \config_0,\atrans_0,\config_1,\atrans_1,\dots\) with \(\config_i \step{\atrans_{i}} \config_{i+1}\) for all \(i\in \nats\).
A configuration \(\config\) is \emph{visited} in \(\run\) where there is \(i\) such that \(\config_i = \config\); it is \emph{visited infinitely often} when there are infinitely many such \(i\).
Similarly, \(t \in \transitions\) is \emph{fired infinitely often} in \(\run\) where there are infinitely many \(i\) such that \(\atrans_i = t\).
A finite run \(\config'_0, \atrans_0', \config_1',\dots,\atrans_{k-1}', \config_k'\) \emph{appears infinitely often} in \(\run\) when there are infinitely many \(i\) such that \(\config_{i+j} = \config_j'\) for all \(j \in \nset{0}{k}\) and \(\atrans_{i+j} = \atrans'_j\) for all \(j \in \nset{0}{k-1}\).
Also, \(\run\) is \emph{strongly fair} when, for every finite run \(\run'\), by letting \(\config_0'\) the first configuration in \(\run'\), if \(\config_0'\) is visited infinitely often in \(\run\) then \(\run'\) appears infinitely often in \(\run\).
Given a configuration \(\config_0\), the set of strongly fair runs from \(\config_0\) is denoted \(\fairrunsfrom{\config_0}\).
Note that this notion of fairness differs from the one usually used for PPs.
We will discuss this choice in \cref{why-strong-fairness}.

\subsection{LTL and HyperLTL}
Linear temporal logic~\cite{pnueli77temporal} (LTL) extends propositional logic with modalities to relate different positions in a run, allowing to define temporal properties of systems.
HyperLTL~\cite{clarkson14temporal} is an extension of LTL for hyperproperties, with explicit quantification over runs.
We here define LTL and HyperLTL for population protocols.
Let \(\prot = (\states, \transitions, \initialstates)\) be a PP.
Our atomic propositions are the transitions of the run(s); we discuss this choice at the end of this section.

\paragraph*{LTL.}
The syntax of LTL over \(\prot\) is:
\begin{alignat*}{6}
	                     & \varphi   ::=   \atrans &                                                                                  & \DefOR \varphi \lor \varphi &  & \DefOR \neg \varphi &  &
	\DefOR \Next \varphi &                         & \DefOR \varphi \U\varphi \qquad \text{ where } \atrans \in \transitions\enspace.
\end{alignat*}
The operators \(\Next\) (next) and \(\U\) (until) are the temporal modalities.
We use the usual additional operators: \(\true = t \Or \neg t\), \(\false =\neg\true\), \(\varphi\And\varphi = \neg(\neg\varphi\Or\neg\varphi)\), \(\ltlfinally\varphi= \true\U\varphi\) and \(\ltlglobally\varphi = \neg\ltlfinally\neg\varphi\).
The \emph{size} \(\size{\varphi}\) of an LTL formula \(\varphi\) is the number of (temporal and Boolean) operators of \(\varphi\).
The semantics of LTL is defined over runs in the usual way (\emph{e.g.}, \cite{BK08}) over \(\transitions^\omega\).
An infinite run \(\run = \config_0,\atrans_0, \config_1, \atrans_1,\dots\) \emph{satisfies} an LTL formula \(\varphi\), denoted \(\run \modelsLTL \varphi\), when \(w \modelsLTL \varphi\) where \(w = \atrans_0 \atrans_1 \atrans_2 \dots \in \transitions^\omega\).
A configuration \(\config\) satisfies an LTL formula \(\varphi\), denoted \(\config \modelsLTL \varphi\), when \(\run \modelsLTL \varphi\) for all \(\run \in \fairrunsfrom{\config}\), \emph{i.e.}, when \emph{all} strongly fair runs starting from \(\config\) satisfy \(\varphi\).
\begin{toappendix}
	Formally, the semantics of LTL is defined as follows.
	For all \(\run\) and \(i \in \nats\):
	\[
		\begin{array}{@{}rl@{\hspace{2em}}c@{\hspace{2em}}l@{}}
			(\run,i) & \modelsLTL t                       & \text{iff } & \atrans \in \transat{\run}{i}                                               \\
			(\run,i) & \modelsLTL \varphi_1 \Or \varphi_2 & \text{iff } & (\run,i)\modelsLTL \varphi_1 \text{ or } (\run,i)\modelsLTL \varphi_2       \\
			(\run,i) & \modelsLTL \neg \varphi            & \text{iff } & (\run,i) \not\modelsLTL \varphi                                             \\
			(\run,i) & \modelsLTL \Next \varphi           & \text{iff } & (\run,i+1)\modelsLTL \varphi                                                \\
			(\run,i) & \modelsLTL \varphi_1 \U \varphi_2  & \text{iff } & \text{for some } j\geq 0\;\; (\run,i+j)\modelsLTL \varphi_2                 \\
			         &                                    &             & \hspace{1em}\text{and for all \(0\leq k<j\),} (\run,i+k)\modelsLTL\varphi_1
		\end{array}
	\]
	where \(\transat{\run}{i}\) denotes the \((i+1)\)-th transition in \(\run\).
	A run \(\run\) \emph{satisfies} a property \(\varphi\), denoted \(\run\modelsLTL\varphi\), whenever \((\run,0)\modelsLTL\varphi\).
\end{toappendix}

\paragraph*{HyperLTL.}
The syntax of HyperLTL over \(\prot\) is:
\begin{alignat*}{9}
	                                        & \psi   ::= \exists \run. \psi &                             & \DefOR \forall \run.\psi &                     & \DefOR \varphi &   &
	\qquad \qquad \varphi  ::= \atrans_\run &                               & \DefOR \varphi \lor \varphi &                          & \DefOR \neg \varphi &                &
	\DefOR \Next \varphi                    &                               & \DefOR \varphi \U\varphi
\end{alignat*}
where \(\atrans \in \transitions\) and \(\run\) is a \emph{run
	variable}.
Note that \(\varphi\) is an LTL formula with, as atomic propositions, the transitions of the run variables.
A HyperLTL formula \(\psi\) must additionally be well-formed: all appearing variables are quantified and no variable is quantified twice.
The size \(\size{\psi}\) of an HyperLTL formula \(\psi\) is the number of (temporal and Boolean) operators and quantifiers of \(\psi\).
HyperLTL formulas are interpreted over strongly fair runs starting from a configuration as follows: a configuration \(\config\) satisfies a HyperLTL formula \(\psi\), denoted \(\config \modelsHLTL \psi\), whenever \(\fairrunsfrom{\config} \modelsHLTL \psi\).
A formal definition of the semantics is given below.
Notice that, given a configuration \(\config\) and an LTL formula \(\varphi\), \(\config \modelsLTL \varphi\) if and only if \(\config \modelsHLTL \forall \run.
\varphi_\run\) where \(\varphi_\run\) is equal to \(\varphi\) where \(\atrans\) is replaced by \(\atrans_\run\) for all \(\atrans \in \transitions\).

\begin{example}
	Suppose that \(\transitions = \set{s,t}\).
	Let \(\psi \deff \forall \run_1.
	\exists \run_2. \ltlfinally \ltlglobally ((s_{\run_1} \land t_{\run_2}) \lor (t_{\run_1} \land s_{\run_2}))\).
	Given \(\config \in \configs\), we have that \(\config \models \psi\) if and only if, for every strongly fair run \(\run_1 \in \fairrunsfrom{\config}\) from \(\config\), there is a strongly fair run \(\run_2 \in \fairrunsfrom{\config}\) from \(\config\) such that, always after some point, \(\run_1\) fires \(s\) whenever \(\run_2\) fires \(t\) and vice versa.
\end{example}

A HyperLTL formula \(\psi:\aquantif_1\run_1\ldots \aquantif_k\run_k.
\varphi\) is
""monadic"" if \(\varphi\) has a \emph{decomposition} as a
Boolean combination of temporal formulas \(\varphi_1,\ldots,\varphi_n\),
each of which
refer to exactly one run variable. %
We assume that a monadic formula is always given by its decomposition, \emph{i.e.}, by giving \(\varphi_1\) to \(\varphi_n\) and the Boolean combination.

\begin{toappendix}
	\paragraph*{Semantics of HyperLTL.}
	To define the semantics of HyperLTL, we will have to consider HyperLTL formulas that are not well-formed because they have free run variables.
	We denote by \(\V\) the set of runs variables, which we assume to be infinite.
	Given a formula \(\psi\), we use \(\Vars(\psi)\) for the set of free run variables in \(\psi\) (those that appear in \(\psi\) but are not quantified in \(\psi\)).
	Given a set of runs \(R\), the semantics of a HyperLTL formula \(\psi\) is defined in terms of run assignments, which is a (partial) map from run variables to indexed runs \(\Pi:\Vars(\psi)\partTo R\).
	The run assignment with empty domain is denoted by \(\Pi_\emptyset\).
	We use \(\Dom(\Pi)\) for the subset of \(\Vars(\psi)\) for which \(\Pi\) is defined.
	Given a run assignment \(\Pi\), a run variable \(\run\) and a run \(\sigma\), we denote by \(\Pi[\run \mapsto \sigma]\) the assignment that coincides with \(\Pi\) for every run variable except for \(\run\), which is mapped to \(\sigma\).
	A pointed run assignment \((\Pi,i)\) consists of a run assignment \(\Pi\) with a pointer \(i\).
	The semantics of HyperLTL assign pointed run assignments to formulas
	as follows:
	\[
		\begin{array}{@{}rl@{\hspace{2em}}c@{\hspace{2em}}l@{}}
			(\Pi,0)                         &
			\modelsHLTL_R \exists \pi. \psi & \text{iff } & \text{for some
				\(\sigma\in{}
			R\), } (\Pi[\pi\mapsto \sigma],0)\modelsHLTL_R\psi             \\ (\Pi,0) & \modelsHLTL_R \forall \pi.
			\psi                            & \text{iff } & \text{for all
				   \(\sigma\in{}
			R\), } (\Pi[\pi\mapsto \sigma],0)\models_R\psi                 \\ (\Pi,0) &\modelsHLTL_R \varphi & \text{iff } & (\Pi,0)\modelsHLTL \varphi \\ (\Pi,i) &\modelsHLTL a_\pi & \text{iff } & (\sigma,i) \modelsLTL a, \text{ where } \sigma=\Pi(\pi) \\ (\Pi,i) &\modelsHLTL \varphi_1 \Or \varphi_2 & \text{iff } & (\Pi,i)\modelsHLTL \varphi_1 \text{ or } (\Pi,i)\modelsHLTL \varphi_2 \\ (\Pi,i) &\modelsHLTL \neg \varphi & \text{iff } & (\Pi,i) \not\modelsHLTL \varphi \\ (\Pi,i) &\modelsHLTL \Next \varphi & \text{iff } & (\Pi,i+1)\modelsHLTL \varphi\\ (\Pi,i) &\modelsHLTL \varphi_1 \U \varphi_2 & \text{iff } & \text{for some } j\geq 0\;\; (\Pi,i+j)\modelsHLTL \varphi_2\\ &&& \hspace{1em}\text{and for all \(0\leq k<j\),} (\Pi,i+k)\models\varphi_1\end{array} \] Note that quantifiers assign runs to run variables and set the pointer to the initial position \(0\).
	%
	We say that a set of runs \(R\) is a model of a HyperLTL formula \(\psi\), denoted \(R\models\psi\), whenever \(\Pi_\emptyset\models_R\psi\).
\end{toappendix}

\paragraph*{Verification Problems.}
Given a PP \(\PP=(Q,\transitions,I)\) and an LTL formula \(\varphi\) (resp.
\ a HyperLTL formula \(\psi\)), we denote \(\PP \modelsforall \varphi\) when \(\config_0 \modelsLTL \varphi\) (resp. \(\config_0 \modelsLTL \psi\)) for all \(\config_0 \in \initial\).
Dually, we let \(\PP \modelsexists \varphi\) (resp.
\(\PP \modelsexists \psi\)) when there is \(\config_0 \in \initial\) such that \(\config_0 \modelsLTL \varphi\) (resp. \(\config_0 \modelsHLTL \psi\)).

The ""LTL verification problem for population protocols"" consists on determining, given \(\prot\) and an LTL formula \(\varphi\), whether \(\prot \modelsforall \varphi\), \emph{i.e.}, whether all strongly fair runs from all initial configurations satisfy \(\varphi\).
We also consider a variant problem, the \emph{existential LTL verification problem}, that asks whether \(\prot \modelsexists \varphi\), \emph{i.e.}, whether there is an initial configuration from which all strongly fair runs satisfy \(\varphi\).
Given a HyperLTL formula \(\psi\), the ""HyperLTL verification problem for population protocols"" consists on determining whether \(\prot \modelsforall \psi\); again, the existential variant consists in asking whether \(\prot \modelsexists \psi\).

\begin{example}
	A PP \(\PP=(Q,\transitions,I)\) equipped with an ""opinion"" function \(O\colon Q \rightarrow \set{0,1}\) is ""well-specified"" if for every \(\config_0 \in\initial\), every run in \(\fairrunsfrom{\config_0}\) eventually visits only configurations where either all agents are in states \(O^{-1}(0)\) or all agents are in states \(O^{-1}(1)\).
	Let \(\transitions_b\) be the set of transitions \((q_1, q_2)\trans{} (q_3, q_4)\in\transitions\) such that \(O(q_3)=O(q_4)=b\).
	Well-specification of \(\PP=(Q,\transitions,I)\) with opinion function \(O\) corresponds to the HyperLTL verification problem over a monadic formula: \[ \PP \modelsforall \forall \run_1, \run_2 \ldotp \textstyle{\bigvee_{b \in \set{0,1}}} \ltlfinally \ltlglobally (\bigvee_{t\in\transitions_b} t_{\run_1}) \land \ltlfinally \ltlglobally(\bigvee_{t\in\transitions_b} t_{\run_2})\enspace .
	\]
\end{example}

\paragraph*{LTL over transitions and LTL over states.}
Our LTL formulas are \emph{over transitions}, \emph{i.e.}, their atomic propositions are the transitions of the run.
In \cite[Theorems 9 and 10]{EGLM16}, the LTL verification problem defined above is proven to be decidable, although as hard as reachability for Petri nets and therefore Ackermann-complete \cite{Leroux2022,Czerwinski2022}.
The authors of \cite{EGLM16} also show that \emph{LTL over states}, where the atomic predicates indicate whether or not a state contains at least one agent, is undecidable.
A slight difference between their model and ours is that their initial configurations are given by a Presburger set; however, their undecidability proof, which relies on \(2\)-counter machines, can easily be translated to our setting.
In the rest of the paper we consider only (Hyper)LTL over transitions.

\begin{propositionrep}%
	\label{prop:presence-undecidable}
	The LTL over states verification problem for PP is undecidable.
\end{propositionrep}
\begin{appendixproof}
	The proof is a small modification of the proof of Proposition 13 of \cite{EGLM16}, which shows that verification of LTL over states (which they call LTL over presence) with initial configurations \emph{encoded into a Presburger formula} is undecidable.
	In LTL over states, the atomic propositions are of the form \(q_{\ge 1}\), for \(q\) a state of a population protocol.
	Given a population protocol and a run \(\sigma\), we have \((\sigma,i) \modelsLTL q_{\ge 1}\) when the \(i\)-th configuration of \(\sigma\) contains one or more agents in \(q\).

	Proposition 13 of \cite{EGLM16} (which uses Lemma 11 of \cite{EGLM16}) shows undecidability by a reduction from the halting problem for counter machines.
	Given a counter machine \(\machine\), it constructs a population protocol \(\PP\) which simulates the counter machine, coupled with an LTL formula \(\varphi\) which is satisfied when the simulation is correct (it does not ``cheat'') and the \(\halt\) state is reached.
	For every instruction \(l\) of \(\machine\), there is a state \(l\) in \(\PP\).
	We count \(\halt\) as an instruction.
	An agent in \(l\) in \(\PP\) intuitively means that the simulated counter machine is on instruction \(l\).
	The initial configurations of \(\PP\) put exactly one agent in the first instruction \(l_1\), exactly one agent in a dummy state \(D\), and an unbounded number of agents in a reservoir state \(Store\).

	With our definition, we cannot express that initial configurations put ``exactly one agent'' somewhere.
	Instead, we modify \(\PP\) by having as unique initial state \(Store\)
	and adding the following transitions:
	\begin{align*}
		 & (Store, Store)\trans{inst_1} (l_1, Store) \text{   and   } (Store, Store)\trans{dummy} (D, Store)
	\end{align*}
	We (conjunctively) add \(inst_1 \land \ltlnext dummy \land \ltlnext\ltlnext (\neg \ltlfinally inst_1 \land \neg \ltlfinally dummy)\) to the formula \(\varphi\) which enforces that the first two transitions put an agent in \(l_1\) and \(D\) respectively, and then never again, thus enforcing a correct simulation.
	\chana{commented out is version without \(\ltlnext\) and with fairness}
\end{appendixproof}

\subsection{Rabin Automata and LTL}
Let \(\Sigma\) be a finite set.
The set of finite words (resp.
\ infinite words) over \(\Sigma\) is denoted \(\Sigma^*\) (resp. \(\Sigma^\omega\)).
A ""deterministic Rabin automaton"" over \(\Sigma\) is a tuple \(\buchiaut = (\bstates, \btransitions, \binitstate, \bwinning)\), where \(\bstates\) is a finite set of states, \(\binitstate \in \bstates\) is the initial state, \(\btransitions : \bstates \times \Sigma \rightarrow \bstates\) is the transition function and \(\bwinning \subseteq 2^\bstates \times 2^\bstates\) is a finite set of ""Rabin pairs"".
An infinite word \(w \in \Sigma^\omega\) is ""accepted"" if there exists \((F, G)\in \bwinning\) such that the run of \(\buchiaut\) reading \(w\) visits \(F\) finitely often and \(G\) infinitely often.

\begin{theorem}[\cite{EKS18}]
	\label{ltl-rabin}
	Given \(\Sigma\) a finite set and \(\varphi\) an "LTL formula" over \(\Sigma\), one can compute, in time doubly-exponential in \(\size{\varphi}\), a "deterministic Rabin automaton" \(\ltlaut{\varphi}\) over \(\Sigma\), of doubly-exponential size, that recognizes (the language of) \(\varphi\).
\end{theorem}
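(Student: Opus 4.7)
The plan is to follow the classical two-phase pipeline: first compile $\varphi$ into a nondeterministic Büchi automaton (NBA) of singly-exponential size, then determinize it into a Rabin automaton with a singly-exponential further blowup, so that composing the two bounds yields the claimed doubly-exponential size and running time.

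For the first phase, I would use the Vardi--Wolper tableau construction. Take the Fischer--Ladner closure $\mathrm{cl}(\varphi)$, whose cardinality is linear in $\size{\varphi}$, and let the states of the NBA be the locally consistent subsets of $\mathrm{cl}(\varphi)$, giving at most $2^{O(\size{\varphi})}$ states. Transitions on a letter $a \in \Sigma$ respect the fixed-point unfoldings of the temporal operators, e.g.\ $\psi_1 \U \psi_2 \equiv \psi_2 \lor (\psi_1 \land \Next(\psi_1 \U \psi_2))$. A generalized Büchi acceptance with one accepting set per until-subformula of $\varphi$ ensures that every eventuality promise is eventually fulfilled; a standard degeneralization then converts this into an ordinary Büchi condition at only a linear blowup. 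The resulting NBA has $n = 2^{O(\size{\varphi})}$ states and is built in time $2^{O(\size{\varphi})}$.

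For the second phase, I would apply Safra's determinization, which turns the NBA into a deterministic Rabin automaton with $2^{O(n \log n)}$ states and $O(n)$ Rabin pairs; substituting $n = 2^{O(\size{\varphi})}$ gives a doubly-exponential state space in $\size{\varphi}$, computable in time proportional to the output size. The main obstacle is the Safra construction itself: the states are ordered trees of macrostates with a delicate breakpoint and relabelling discipline, and each Rabin pair is indexed by a tree node in such a way that acceptance certifies the corresponding co-Büchi promise along its branch. This is the genuine source of the second exponential blowup and the subtlest step to verify. The more modern approach followed by~\cite{EKS18} sidesteps Safra entirely: it decomposes $\varphi$ with respect to its $\mathrm{G}$- and $\mathrm{F}$-subformulas, builds small deterministic Büchi or co-Büchi automata for each syntactic fragment, and combines them via a product automaton whose Rabin acceptance is read off through a ``master theorem'' on the structure of $\varphi$. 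Asymptotically the bound is the same doubly-exponential one, but the construction is more modular, avoids Safra trees, and typically produces much smaller automata in practice.
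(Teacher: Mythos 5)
The paper does not prove this theorem; it is quoted as a black-box result from \cite{EKS18}, so the only comparison available is against that reference's construction. Your primary route --- Vardi--Wolper tableau over the closure of $\varphi$ to get an NBA with $n = 2^{O(|\varphi|)}$ states (degeneralization costs only a multiplicative factor $O(|\varphi|)$), then Safra determinization with a $2^{O(n \log n)}$ blowup and $O(n)$ Rabin pairs --- is sound and does give the stated bound, since $\log n = O(|\varphi|)$ so the composite is $2^{2^{O(|\varphi|)}}$ and the running time is dominated by the output size. However, as you yourself flag at the end, \cite{EKS18} deliberately avoid Safra trees: they normalize $\varphi$, decompose it along its $\mathsf{G}$/$\mathsf{F}$ subformula structure, build small deterministic Büchi/co-Büchi components for the fragments, and assemble the Rabin condition on the product via their Master Theorem. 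Both arguments establish the asymptotic claim, so your proposal is correct; it just reaches the result by the classical Safra-based pipeline rather than by the Safra-free, compositional construction the citation actually refers to, which is preferred there because it avoids the tree bookkeeping and yields smaller automata in practice.
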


\subsection{Why Strong Fairness?}
\label{why-strong-fairness}

Usually, fairness in population protocols is either of the form ``all configurations reachable infinitely often are reached infinitely often'' \cite{Comp-Power-Pop-Prot,EGLM17journal}, or ``all steps possible infinitely often are taken infinitely often'' \cite{EGLM16}.
Our notion of fairness, dubbed strong fairness, is more restrictive.
A sanity check is that a (reasonable) stochastic scheduler yields a strongly fair run with probability \(1\).
This alone does not justify using a new notion of fairness different from the literature and in particular from the prior work on LTL verification \cite{EGLM16}.
The authors motivate their choice of fairness by claiming that there is a fair run satisfying an LTL formula \(\varphi\) if and only if, under a stochastic scheduler, \(\varphi\) is satisfied with non-zero probability \cite[Proposition~7]{EGLM16}.
However, we show that this claim is incorrect.

\begin{example}
\label{ex:fair-vs-strong-fair}
The intuition is that a (not strongly) fair run may exhibit infinite regular patterns.
Consider three configurations \(\config_1,\config_2,\config_3\) and three transitions \(a,b,c\) such that \(\config_1 \trans{a} \config_2\), \(\config_2 \trans{b} \config_1\), \(\config_1 \trans{c} \config_3\) and \(\config_3 \trans{d} \config_1\), and these are the only steps possible from each of the configurations.
Consider \(\varphi = \neg \ltlfinally(a \land (\ltlnext b) \land (\ltlnext^2 a) \land (\ltlnext^3 b))\), which expresses that the sequence of transitions \(abab\) does not appear.
Under a stochastic scheduler, \(\varphi\) is satisfied with probability \(0\) from \(\config_1\).
However, the run which repeats sequence \(abcd\) satisfies \(\varphi\), and it is fair.
\end{example}

\begin{toappendix}
	\paragraph*{Mistake related to fairness in \cite{EGLM16}.}
	We explain the mistake in Proposition 7 from \cite{EGLM16},
	using their notation.
	In \cite{EGLM16}, an infinite run \(\run\) is \emph{fair} if for every \(\config\) appearing infinitely often in \(\run\), for every possible step \(\config \trans{t} \config'\), step \(\config \trans{t} \config'\) appears infinitely often in \(\run\).
	The product system \(N(\mathcal{A}, \mathcal{R}_\varphi)\) is composed of the population protocol \(\mathcal{A}\) put side by side with a deterministic Rabin automaton \(\mathcal{R}_\varphi\) that recognizes the same language as \(\varphi\).
	From a run \(\run\) of the population protocol \(\mathcal{A}\), one can easily build a run \(\run'\) of \(N(\mathcal{A}, \mathcal{R}_\varphi)\) whose projection on \(\mathcal{A}\) is \(\run\).
	The run \(\run'\) simply corresponds to performing \(\run\) in \(\mathcal{A}\) while the Rabin automaton moves accordingly to the sequence of transitions of \(\run\).
	However, the fact that \(\run\) is fair in \(\mathcal{A}\) does not imply that \(\run'\) is fair in \(N(\mathcal{A}, \mathcal{R}_\varphi)\).
	Indeed, it could be that a configuration \(t\) is activated from a configuration \(C\) of \(\mathcal{A}\), but that \(\run'\) visits infinitely often both \((C+\mathbf{q_1})\) and \((C + \mathbf{q_2})\) and that \(t\) is only fired from \((C+\mathbf{q_1})\) but never from \((C + \mathbf{q_2})\).
	For this reason, it does not hold that, when \(\run\) is a fair run, \(\run'\) is eventually in a bottom SCC of \(N(\mathcal{A}, \mathcal{R}_\varphi)\), nor that \(\run'\) visits infinitely often all configurations in the bottom SCC assuming that it ends up in one.
\end{toappendix}

The fair run described in \cref{ex:fair-vs-strong-fair} is not strongly fair.
We show that strong fairness does in fact allow the desired equivalence with stochastic schedulers.
As in \cite{EGLM16}, fix a stochastic scheduler, assumed to be memoryless and guaranteeing non-zero probability for every activated transition; \(\pr{\config}{\varphi}\) denotes the probability that a run from \(\config\) satisfies \(\varphi\).
\begin{propositionrep}
	\label{strong-fairness-stochastic-scheduler}
	Given an LTL formula \(\varphi\) and \(\config_0 \in \configs\), \(\pr{\config_0}{\varphi} = 1\) if and only if, for all \(\run \in \fairrunsfrom{\config_0}\), \(\run \models \varphi\).
\end{propositionrep}

\begin{appendixproof}
	We follow the same proof strategy as \cite[Proposition 7]{EGLM16}, but we circumvent the issue with fairness by relying on strong fairness instead.
	We build a Rabin automaton \(\ltlaut{\varphi} = (\bstates, \btransitions, \binitstate, \bwinning)\) that recognizes \(\varphi\) (see \cref{ltl-rabin}).
	We build the (conservative) Petri net \(\prodnet{\prot}{\varphi}\) obtained by making \(\ltlaut{\varphi}\) read the transitions performed in \(\prot\).
	We only consider configurations of \(\prodnet{\prot}{\varphi}\) with one agent in \(\ltlaut{\varphi}\), which are denoted \((\config, \bstate)\) with \(\config \in \multisetsof{\states}\) and \(\bstate \in \bstates\).
	Given a run \(\run\) of \(\prodnet{\prot}{\varphi}\), we denote by \(\protproj{\run}\) the corresponding run of \(\prot\).
	We call a run \(\run\) of \(\prodnet{\prot}{\varphi}\) \emph{protocol-fair} when \(\protproj{\run}\) is strongly fair.
	Fix \(\config_0 \in \multisetsof{\states}\) and let \(c_0 \deff (\config_0, \binitstate)\).
	Let \(\mathcal{G}\) be the graph of configurations reachable from \(c_0\) in \(\prodnet{\prot}{\varphi}\), with an edge between \(c_1\) and \(c_2\) when \(c_1 \step{} c_2\) in \(\prodnet{\prot}{\varphi}\).
	We call an SCC \(S\) of \(\mathcal{G}\) \emph{winning} when there is a winning pair \((F,G) \in \bwinning\) such that \(S\) contains some configuration with Rabin state in \(G\) but none with Rabin state in \(F\).
	By \cite[Proposition 6]{EGLM16}, we have \(\pr{\config_0}{\varphi}= 1\) if and only if all bottom SCC reachable from \(c_0\) are winning.

	Suppose there is a bottom SCC \(S\) reachable from \(c_0\) that is not winning.
	Then there is a protocol-fair run \(\run\) of \(\prodnet{\prot}{\varphi}\) that does not satisfy the Rabin winning condition and therefore does not satisfy \(\varphi\): it suffices to consider a run that goes to \(S\), then chooses transitions in a randomized fashion (uniformly at random among all possible transitions, regardless of the past).
	Almost-surely, the run obtained is protocol-fair and visits all configurations in \(S\) infinitely often; this proves the existence of the desired run.

	Now suppose that all bottom SCC reachable from \(c_0\) are winning.
	We show that for all \(\run \in \fairrunsfrom{\config_0}\), \(\run \models \varphi\), by proving that every protocol-fair run ends in a bottom SCC and visits all configurations in this bottom SCC infinitely often.
	Let \(\run\) be a protocol-fair run of \(\prodnet{\prot}{\varphi}\); let \(S\) denote the SCC of \(\mathcal{G}\) visited infinitely often in \(\run\).
	Suppose by contradiction that \(S\) is not bottom.
	There is \(\atrans \in \transitions\) and \((\configout,\bstateout) \in S\) from which firing \(\atrans\) takes us out of \(S\).
	Let \(\Cout \deff S \cap (\set{\configout} \times \bstates)\) and let \(\Cout' \subseteq \Cout\) be the set of such configurations from which firing \(\atrans\) yields a configuration in \(S\); we denote \(\Cout' = \set{(\configout, \bstate_1),\dots, (\configout,\bstate_m)}\) with \(m = \size{\Cout'}\).
	Whenever \(\atrans\) is fired from \(\configout\) in \(\protproj{\run}\), it is fired in \(\run\) from a configuration in \(\Cout'\), as \(\run\) does not leave \(S\).
	By strong fairness, \(\run\) fires \(\atrans\) from \(\configout\) infinitely often, so that \(\run\) fires \(\atrans\) infinitely often from some configuration in \(\Cout'\).
	This implies in particular that \(\Cout' \ne \emptyset\) and that \(m \geq 1\).

	By definition of \(\Cout'\), for all \(i \in \nset{1}{m}\), there exists \(c_{t,i} \in S\) such that \((\configout, \bstate_i) \step{t} c_{t,i}\).
	Because \(S\) is strongly connected, there is \(w_i \in \transitions^*\) such that \(c_{t,i} \step{w_i} (\configout, \bstateout)\).
	In \(\prot\), we have \(\configout \step{t \concat w_i} \configout\) for all \(i\).
	We build words \(\sigma_i \in \transitions^*\) for each \(i \in \nset{0}{m}\) by induction on \(i\) as follows.
	First, let \(\sigma_0 \deff \epsilon\).
	Suppose that \(\sigma_i\) is constructed.
	Let \(c_{i+1}\) denote the configuration obtained by firing \(\sigma_i\) from \((\configout, \bstate_{i+1})\); \(c_{i+1}\) is in \(\set{\configout} \times \bstates\).
	If \(c_{i+1} \notin \Cout'\) then we let \(\sigma_{i+1} \deff \sigma_i\).
	Otherwise, there is \(j\) such that \(c_{i+1} = (\configout, \bstate_j)\), and we let \(\sigma_{i+1} \deff \sigma_i \concat t \concat w_j\).
	We finally let \(\sigma \deff \sigma_m \concat t\).

	We have that, from each configuration in \(\Cout'\), firing \(\sigma\) takes us out of \(S\).
	Indeed, let \((\config, \bstate_i) \in \Cout'\).
	Consider the configuration \(c_{i}\) obtained by firing \(\sigma_{i-1}\) from \((\configout, \bstate_i)\).
	If \(c_{i} \notin S\) then we are done; if \(c_{i} \notin \Cout'\) then firing \(\sigma_{i-1} \concat t\) from \((\configout, \bstate_i)\) makes us leave \(S\).
	If \(c_{i} = (\configout, \bstate_j) \in \Cout'\) then firing \(\sigma_i = \sigma_{i-1} \concat t \concat w_j\) from \((\configout, \bstate_i)\) takes us to \((\configout,\bstateout)\), from where firing \(t\) makes us leave \(S\).

	The sequence of transitions \(\sigma\) is available infinitely often from \(\configout\) in \(\protproj{\run}\) and thus fired infinitely often by strong fairness.
	Therefore it is fired infinitely often from \(\Cout'\) in \(\run\).
	However, firing \(\sigma\) from \(\Cout'\) makes us leave \(S\) and \(\Cout' \subseteq S\), a contradiction.
	We have proven that any protocol-fair \(\run\) visits a non-bottom SCC finitely many times, which implies that it ends in a bottom SCC \(S\).
	We now prove that such a run \(\run\) visits all configurations in \(S\) infinitely often.
	It suffices to prove that, if we have \(\prodout = (\configout, \bstateout),\prodout' \in S\) and \(t \in \transitions\) such that \(\prodout\) is visited infinitely often and \(\prodout \step{t} \prodout'\), then \(\prodout'\) is visited infinitely often by \(\run\).
	The proof is very similar to the one above, but simpler because \(\Cout = \Cout'\).
	Indeed, all configurations \(c\) in \(\Cout =(\set{\configout} \times \bstates) \cap S\) are such that, when firing \(t\) from \(c\), one remains in \(S\) so that there is a path to \(\prodout\).
	We therefore build the sequence of transitions \(\sigma\) as above, except that the case \(c_{i+1} \notin \Cout'\) cannot occur.
	With the same proof technique, we prove that firing \(\sigma\) from any configuration in \(\Cout\) makes us visit \(\prodout'\).
	With the strong fairness of \(\protproj{\run}\), this allows us to conclude that \(\prodout'\) is visited infinitely often.
\end{appendixproof}

This therefore justifies our choice to consider strong fairness for LTL verification.
In particular, all results from \cite{EGLM16} hold if strong fairness is considered instead of the usual fairness.
An alternative to strong fairness for (non-Hyper)LTL verification would be to work directly with a stochastic scheduler.
However, HyperLTL requires quantification over a subset of the set of runs; we make the choice to consider, for this subset, the set of strongly fair runs.

%
%

%
%
%
%
%
%
%
\section{Undecidability of HyperLTL}
\label{sec:undecidability}

One can show that verification of HyperLTL over transitions is undecidable for PP, using a proof with counter machines similar to the one for undecidability of LTL over states \cite{EGLM16}.
Intuitively, HyperLTL can be used to express whether a transition is activated at some point in the run, and hence encode zero-tests\footnote{ See the proof of \cref{undec-hyperltl-iopp} for an illustration of this.
}.
We show an even stronger undecidability result: verification of monadic HyperLTL formulas over two runs using only \(\ltlfinally \ltlglobally \) as temporal operator is undecidable.
\begin{theorem}
	\label{monadic-undec-pp}
	Verification of monadic HyperLTL for PP is undecidable.
	If fact, it is already undecidable for formulas of the form: \[\forall \atrace_1.
		\, \exists \atrace_2. \, \neg (\ltlfinally \ltlglobally \,a_{\atrace_1}) \lor (\ltlfinally \ltlglobally \,b_{\atrace_2})  \qquad \text{where } a, b \in \transitions\enspace. \]
\end{theorem}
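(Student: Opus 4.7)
The plan is to reduce from the halting problem for deterministic $2$-counter Minsky machines, a classically undecidable problem. Given such a machine $\machine$, I will construct a PP $\prot_\machine$ with two designated transitions $a$ and $b$ such that $\prot_\machine \modelsforall \psi$ iff $\machine$ halts, which yields undecidability.

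First I would simplify $\psi = \forall \atrace_1. \exists \atrace_2.\,\neg \ltlfinally \ltlglobally\, a_{\atrace_1} \lor \ltlfinally \ltlglobally\, b_{\atrace_2}$: since $\neg \ltlfinally \ltlglobally\, a_{\atrace_1}$ depends only on $\atrace_1$ and $\ltlfinally \ltlglobally\, b_{\atrace_2}$ only on $\atrace_2$, standard propositional manipulation shows that, at any initial configuration $\config_0$, $\psi$ is equivalent to the implication $(\exists \atrace.\,\ltlfinally\ltlglobally\,a_\atrace) \Longrightarrow (\exists \atrace.\,\ltlfinally\ltlglobally\,b_\atrace)$. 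By the strong-fairness analysis from \cref{strong-fairness-stochastic-scheduler}, each disjunct is then a reachability question: $\exists \atrace.\,\ltlfinally\ltlglobally\,a_\atrace$ holds iff $\config_0$ can reach some configuration of an \emph{$a$-BSCC} (a bottom SCC all of whose enabled transitions are $a$), and likewise for $b$. Verification of $\psi$ thus becomes a universal inclusion of pre-image sets under the PP's transition relation.

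My $\prot_\machine$ would simulate $\machine$ in the standard way: a reservoir state $S$ (the unique initial state), one control state $L_i$ per instruction of $\machine$ (with $L_n = \halt$), counter states $C_1, C_2$, and two disjoint absorber states $\ok_a, \ok_b$. Beside the usual simulation transitions (initialization $(S,S)\trans{}(L_0, S)$, increments $(L_i, S)\trans{}(L_k, C_j)$, decrements $(L_i, C_j)\trans{}(L_k, S)$, and possibly cheating zero-tests $(L_i, S)\trans{}(L_k, S)$), at $L_n$ the protocol offers two non-deterministic halt-transitions: a \emph{lenient} one, $(L_n, S)\trans{}(\ok_a, S)$, followed by $(\ok_a, X)\trans{}(\ok_a, \ok_a)$ for every state $X$, absorbing any leftover agent; and a \emph{strict} one, $(L_n, S)\trans{}(\ok_b, S)$, followed by $(\ok_b, S)\trans{}(\ok_b, \ok_b)$ together with $(\ok_b, C_j)\trans{}(X', X')$ sending any leftover $C_j$-agent into a persistent error state $X'$. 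With $a = (\ok_a, \ok_a)\trans{}(\ok_a, \ok_a)$ and $b = (\ok_b, \ok_b)\trans{}(\ok_b, \ok_b)$, the only $a$- and $b$-BSCCs are the all-$\ok_a$ and all-$\ok_b$ configurations: for any sufficiently large $\config_0$, a cheating simulation reaches $L_n$ and the lenient absorber drives the configuration to all-$\ok_a$, so the premise of the implication always holds; the strict absorber can reach all-$\ok_b$ only when $L_n$ is entered with no counter-agent leftover, which should pin down a correct halting simulation of $\machine$.

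The hard part will be to rule out \emph{balanced cheating} — runs that cheat at some zero-test but, by coincidence of later decrements, still arrive at $L_n$ with zero counters and would thus spuriously satisfy the strict absorber. I would handle this by first putting $\machine$ into a normal form whose control graph structurally forbids such balancing (e.g., a normalized Minsky machine in which every halting computation ends with both counters zero and whose control graph makes it impossible for a cheated trajectory to match this unique halting counter balance), so that every run reaching $L_n$ with zero counters corresponds to a genuine halting execution of $\machine$. Once this is in place, $\exists \atrace.\,\ltlfinally\ltlglobally\,b_\atrace$ from $\config_0$ is equivalent to $\machine$ halting, while $\exists \atrace.\,\ltlfinally\ltlglobally\,a_\atrace$ holds from every $\config_0 \in \initial$ with enough agents; combining the two gives $\prot_\machine \modelsforall \psi$ iff $\machine$ halts, completing the reduction.
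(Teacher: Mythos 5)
Your initial reduction of the formula to the implication $(\exists\atrace.\,\ltlfinally\ltlglobally\,a_\atrace)\Rightarrow(\exists\atrace.\,\ltlfinally\ltlglobally\,b_\atrace)$ at each $\config_0$ is correct, and it matches the paper's reading of the statement as a containment between two sets of initial configurations. However, the choice of source problem — halting of a $2$-counter Minsky machine — and the construction you build on it cannot work, and the gap is structural rather than a technicality to be patched.

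The point you flag as the ``hard part,'' balanced cheating, is not a detail that a normal form can eliminate. In your protocol every zero-test is replaced by a cheating transition $(L_i,S)\trans{}(L_k,S)$ that is always enabled; once that is done, the whole simulation is a VASS (a conservative Petri net), and ``is there a run from $\config_0$ that reaches $L_n$ with $C_1=C_2=0$'' is a Petri net reachability question, which is \emph{decidable}. If, as you claim, $\exists\atrace.\,\ltlfinally\ltlglobally\,b_\atrace$ held from $\config_0$ exactly when $\machine$ halts, you would have reduced Minsky-machine halting to Petri net reachability and thereby proved halting decidable — a contradiction. So no normalization of $\machine$ can make the equivalence ``reach $L_n$ with empty counters $\iff$ genuine halting computation'' hold once zero-tests can be cheated; this is precisely the gap between VASS reachability (decidable) and counter-machine reachability with zero-tests (undecidable). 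The reduction from counter machines used in the paper for \cref{prop:presence-undecidable} works only because LTL over \emph{states} can observe whether a counter state is empty, which is how cheating is detected there; in the present theorem you only have LTL over \emph{transitions}, so that escape hatch is closed. There is also a secondary issue you do not address — nothing prevents the initialization $(S,S)\trans{}(L_0,S)$ from firing several times and spawning multiple ``leader'' agents — but that could in principle be handled by the usual trap-state gadget; the cheating problem cannot.

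The paper avoids this obstruction by reducing not from halting but from a variant of Hilbert's Tenth Problem (\cref{hilbert-variant-undec}): does $\apoly_1(\vec{x})\le\apoly_2(\vec{x})$ hold for all $\vec{x}\in\nats^r$? The universal quantifier over $\vec{x}$ is absorbed into the outer $\forall\config_0$, i.e.\ the protocol never needs to verify a zero-test; it only needs, for each initial configuration encoding $(\vec{x},y)$, to ``weakly compute'' whether $y\le\apoly_i(\vec{x})$, using nested-loop gadgets that move at most $\apoly_i(\vec{x})$ agents out of a state $\statey$. Existence of a fair run firing $\okaction_i$ infinitely often then characterizes $y\le\apoly_i(\vec{x})$, and the implication form of the formula becomes exactly $\forall\vec{x},y.\,(y\le\apoly_1(\vec{x}))\Rightarrow(y\le\apoly_2(\vec{x}))$, i.e.\ $\apoly_1\le\apoly_2$ pointwise. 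This is the step your proposal is missing: you need the undecidability to live in the quantifier over initial configurations, not in a faithfulness check of a single simulation, because a faithfulness check requires zero-tests that population protocols cannot perform.
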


This verification problem asks whether, for all \(\config_0 \in \initialconfigs\), for all \(\run_1 \in \fairrunsfrom{\config_0}\), there is \(\run_2 \in \fairrunsfrom{\config_0}\) such that if \(\run_1\) fires \(a\) infinitely often then \(\run_2\) fires \(b\) infinitely often.
We first observe that the \(\forall\)-\(\exists\) sequence of quantifiers is reminiscent of inclusion problems.
Since the population protocol model is close to Petri nets, it is natural to look for undecidable inclusion-like problems for that model.
Indeed, undecidability was shown multiple times \cite{baker1973rabin,Hack76} for the problem asking whether the set of reachable markings of a Petri net is included in the set of reachable marking of another Petri net with equally many places.
We call this problem the \emph{reachability set inclusion problem}.
Our attempts at reducing the reachability set inclusion problem to the above problem faced a major obstacle: Petri nets allow the creation/destruction of tokens while in PPs the number of agents remains the same.
We sidestepped this obstacle by looking at a particular proof of undecidability for the reachability set inclusion problem which leverages Hilbert's Tenth Problem (shown to be undecidable by Matijasevic in the seventies).
We thus obtain a reduction from Hilbert's Tenth Problem to the above problem for PPs.
Our reduction uses PPs to ``compute'' the value of polynomials while keeping the number of agents constant during the computation.

The statement of the variant of Hilbert's Tenth Problem which we use is:

\begin{proposition}[\cite{Hack76}]
	\label{hilbert-variant-undec}
	The following problem is undecidable: \\ \textbf{Input}: two polynomials \(\apoly_1(\varx_1, \dots, \varx_r), \apoly_2(\varx_1, \dots, \varx_r)\) with natural coefficients \\ \textbf{Question}: Does it hold that, for all \(x_1, \dots, x_r \in \nats\), \(\apoly_1(x_1, \dots, x_r) \leq \apoly_2(x_1, \dots, x_r)\)?
\end{proposition}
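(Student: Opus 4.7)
The plan is to give a many-one reduction from (the complement of) Hilbert's Tenth Problem over $\nats$, i.e., from the problem of deciding, given a polynomial $Q(\varx_1,\dots,\varx_r) \in \mathbb{Z}[\varx_1,\dots,\varx_r]$, whether $Q$ has \emph{no} root in $\nats^r$. By the MDRP theorem (Matijasevic--Davis--Putnam--Robinson) combined with the standard equivalence between the existence of roots in $\mathbb{Z}$ and in $\nats$ (using, \eg, Lagrange's four-square theorem to encode a natural variable as a sum of four integer squares, and $\mathbb{Z}$-variables as differences of two $\nats$-variables), this co-problem is $\Pi^0_1$-hard, hence undecidable.

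Given such a $Q$, I would split it as $Q = Q^+ - Q^-$, where $Q^+$ (resp.\ $Q^-$) collects the monomials of $Q$ whose coefficient is positive (resp.\ negative, taken in absolute value), so that both $Q^+$ and $Q^-$ have natural coefficients. The key algebraic step is the equivalence, valid for every $x \in \nats^r$:
\[
Q(x) \neq 0 \iff Q(x)^2 \geq 1 \iff (Q^+(x))^2 + (Q^-(x))^2 \geq 1 + 2\, Q^+(x)\, Q^-(x),
\]
obtained by expanding $(Q^+ - Q^-)^2$. Setting $\apoly_1 \deff 1 + 2\, Q^+ Q^-$ and $\apoly_2 \deff (Q^+)^2 + (Q^-)^2$, which are polynomials with natural coefficients since sums and products of such polynomials again have natural coefficients, one obtains
\[
\bigl(\forall x \in \nats^r,\ Q(x) \neq 0\bigr) \iff \bigl(\forall x \in \nats^r,\ \apoly_1(x) \leq \apoly_2(x)\bigr).
\]
The mapping $Q \mapsto (\apoly_1, \apoly_2)$ is clearly computable, so this yields a many-one reduction from the complement of Hilbert's Tenth Problem to the problem of the statement, establishing its undecidability.

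I do not expect a real obstacle, as the construction is short and elementary. The only points deserving care are (i) the classical passage from Hilbert's Tenth over $\mathbb{Z}$ to Hilbert's Tenth over $\nats$, which is standard, and (ii) verifying that writing $Q = Q^+ - Q^-$ produces genuinely natural (not merely non-negative rational) coefficients for $\apoly_1$ and $\apoly_2$, which is immediate since the construction only uses integer additions and multiplications of coefficients of $Q^+$ and $Q^-$.
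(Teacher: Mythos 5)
Your reduction is correct: writing \(Q = Q^+ - Q^-\) with \(Q^+, Q^-\) having natural coefficients and using \(Q(x) \neq 0 \iff (Q^+(x))^2 + (Q^-(x))^2 \geq 1 + 2\,Q^+(x)\,Q^-(x)\) gives a valid, computable many-one reduction from the complement of Hilbert's Tenth Problem over \(\nats\), so the stated problem is undecidable. The paper itself offers no proof but simply cites \cite{Hack76}, and Hack's classical argument is essentially this same squaring trick, so your proof coincides with the intended one.
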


\begin{toappendix}
	We will proceed by reduction from the problem in \cref{hilbert-variant-undec}.
	Let \(\apoly(\varx_1, \dots, \varx_r)\) be a multivariate polynomial with positive integer coefficient; let \(\maxexp\) denote the degree of \(\apoly\).
	Given a population protocol \(\prot\) with states including \(\set{\startstate, \statex{1}, \dots, \statex{r}, \reservoirstate, \statey}\) and a special transition \(\okaction\), \(\prot\) weakly computes \(\apoly\) if, for every initial configuration \(\config_0\), there exists a run \(\trace \in \fairrunsfrom{\config_0}\) firing \(\okaction\) infinitely often if and only if:
	\begin{itemize}[noitemsep,topsep=0pt,parsep=0pt,partopsep=0pt]
		\item \(\config_0(\startstate) = 1\),
		\item \(\config_0(\reservoirstate) \geq 1 + \sum_i (\maxexp-1)\; \config_0(\statex{i})\),
		\item \(\config_0(\statey) \leq \apoly(\config_0(\statex{1}), \dots, \config_0(\statex{r}))\).
	\end{itemize}

	\begin{figure}[h]
		\centering
		\includegraphics[page=1,scale=.7]{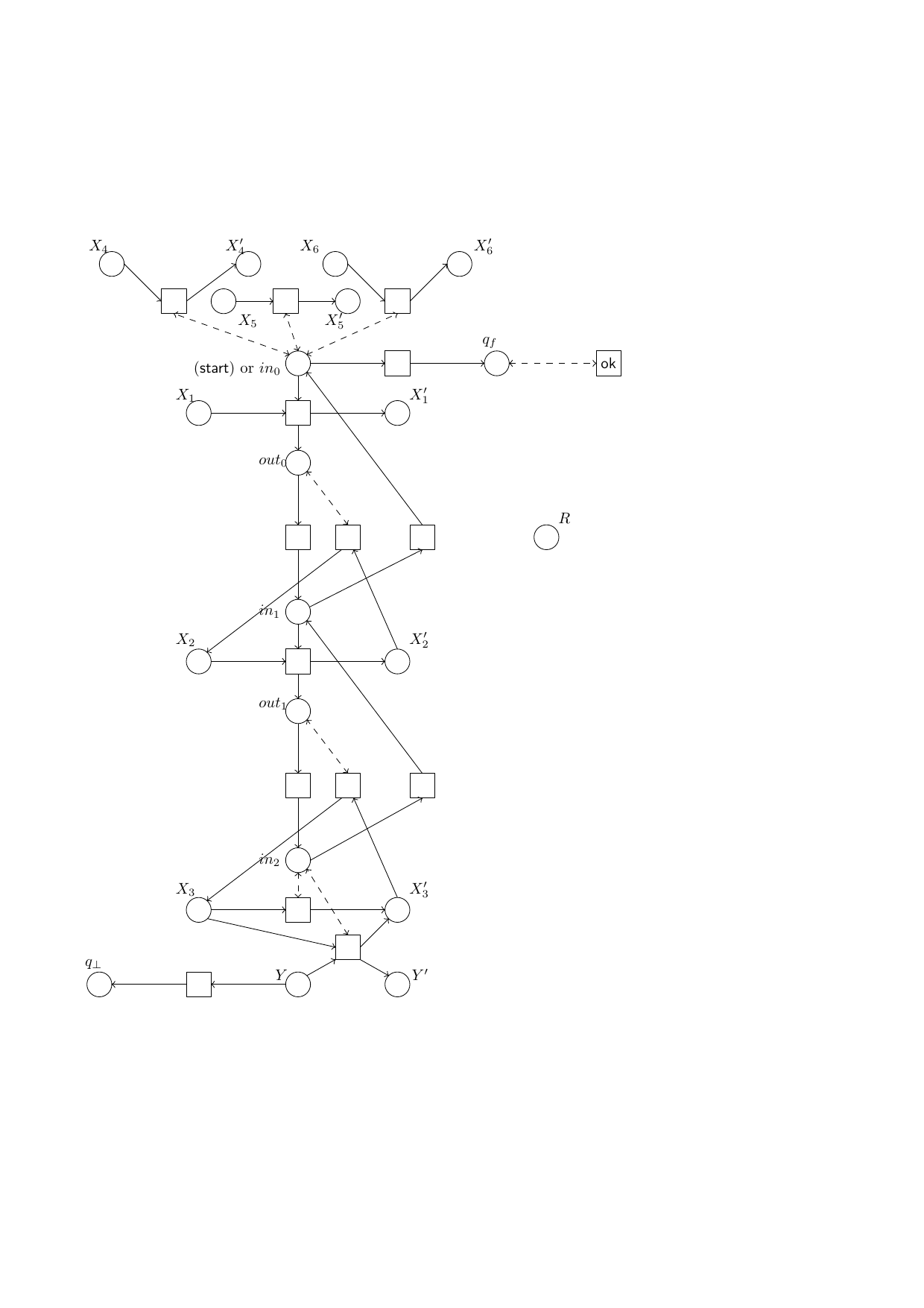}
		\caption{Gadget for the monomial \(X_1 X_2 X_3\) for variables \(\{X_1,X_2,X_3,X_4,X_5,X_6\}\).}
		\label{fig:multiplier}
	\end{figure}

	We will transform the problem of \cref{hilbert-variant-undec} to make it easier to encode with population protocols.
	First, we assume, without loss of generality, that constant terms in \(\apoly_1\) and \(\apoly_2\) are non-negative.
	With that in mind, let us now turn to the encoding of multivariate polynomials.
	Given a multivariate polynomial, the first transformation replaces multiple occurrences of the same variable in each monomial by assigning to each repeated occurrence its own variable.
	For instance, the monomial \(123 x^2 y^3 z\) is replaced by \(123 x^{(0)} x^{(1)} y^{(0)} y^{(1)} y^{(2)} z^{(0)}\).
	When the variables \(x^{(0)}\) and \(x^{(1)}\) are given the same value as \(x\), \(y^{(0)} y^{(1)} y^{(2)}\) are given the same value as \(y\) and \(z^{(0)}\) the same value as \(z\), we find that the polynomials before and after this transformation evaluate to the same value.

	\begin{figure}[t]
		\centering
		\includegraphics[page=2]{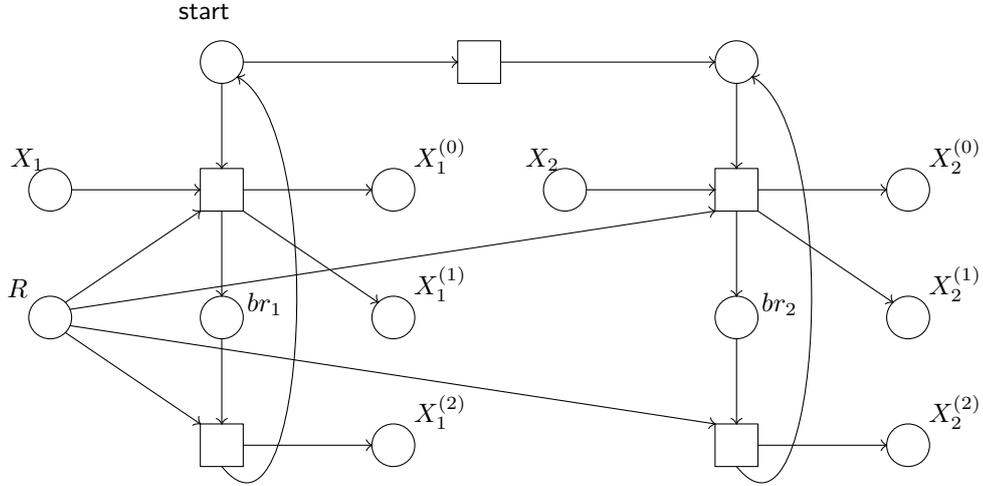}
		\caption{Gadget initializing three copies of variables \(X_1\) and \(X_2\).}
		\label{fig:initializer}
	\end{figure}

	The next transformation gets rid of coefficient using their unary expansion.
	Such transformation replaces \( 5 xyz \) by \( xyz + xyz + xyz + xyz + xyz \).
	Clearly the polynomials before and after the second transformation evaluate to the same values when fed the same arguments.

	Let us assume that our polynomials have been transformed as explained above and let us turn to the encoding of a monomial using population protocols.
	We use Petri net-inspired notation in our figures: circles are states, squares are transitions and a dashed line between a transition and a state is shorthand for having an arrow in both directions.
	For instance, the top left transition in \figurename~\ref{fig:multiplier} adjacent to states \(X_4, X'_4\) and \(in_0\) corresponds to the population protocol transition \( (X_4,in_0) \trans{} (X'_4,in_0) \).
	To keep \figurename~\ref{fig:multiplier} and \figurename~\ref{fig:initializer} lightweight, we also assume that transitions which have a single input arrow and single output arrow have an (undrawn) dashed line with state \(R\).
	For instance, the \textsf{ok} transition in \figurename~\ref{fig:multiplier} corresponds to the population protocol transition \( (q_f,R) \trans{\textsf{ok}} (q_f,R) \).
	Finally, transitions with more than two input and output arrows, such as the top left transition in \figurename~\ref{fig:initializer}, are actually encoded using a gadget of population protocol transitions following the construction explained by Blondin et al.~\cite[Lemma~3]{BlondinEJ18} \chana{I cannot get this cite to compile.
		..}
	to encode \(k\)-way transitions into the standard \(2\)-way transitions.

	We return to the encoding of monomials: we use the nesting of loops to weakly compute monomials such that \(d\) nested loops compute a monomial of degree \(d\).
	The idea is that the innermost loop iterates at most as many times as the product of the values in the monomial.

	\begin{lemma}
		\label{prot-weakly-computes-product}
		Let \(r \geq 1\), \(\apoly(\varx_1, \dots, \varx_r) := \prod_{i\in R} \varx_i\) where \(R\subseteq \{1,\ldots,r\}\).
		There is a protocol that weakly computes \(\apoly\).
	\end{lemma}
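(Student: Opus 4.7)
The plan is to construct, for each product $\apoly = \prod_{i \in R} \varx_i$, a PP whose strongly fair runs firing $\okaction$ $\omega$-often are exactly those starting from configurations satisfying the three conditions of weakly-computes. The PP implements a stack of $d = |R|$ nested loops, one per variable in $R$, so that the innermost body is executed at most $\apoly(\config_0(\statex{1}), \dots, \config_0(\statex{r}))$ times per full outer cycle; each body execution consumes one $\statey$ agent, and a carefully guarded ``finish'' transition moves a single control token to an absorbing state $\okstate$ from which $\okaction$ self-loops via $(\okstate, \reservoirstate) \step{\okaction} (\okstate, \reservoirstate)$.

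Concretely, writing $R = \{i_1, \dots, i_d\}$, I introduce for every $i \in R$ a ``used'' copy $\statex{i}'$ and control states $\mathit{in}_0, \mathit{in}_1, \dots, \mathit{in}_d$ encoding the current loop depth. A startup transition moves a single token from $\startstate$ to $\mathit{in}_0$ while pulling auxiliary agents from the reservoir. The token descends via $(\statex{i_j}, \mathit{in}_{j-1}) \to (\statex{i_j}', \mathit{in}_j)$ and, at the bottom, fires the body $(\statey, \mathit{in}_d) \to (\reservoirstate, \mathit{in}_d)$; reset transitions at each level restore all $\statex{i_j}'$ agents back to $\statex{i_j}$ and pop the token up one level, each reset being enabled only when no further descent is possible at that level. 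The many $k$-way transitions of the startup, reset and finish gadgets are compiled into $2$-way population-protocol transitions via Blondin et al.'s gadget, which is what drives the reservoir lower bound $1 + \sum_i (\maxexp - 1)\,\config_0(\statex{i})$.

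A straightforward inductive count shows that one full cycle of the loop stack fires the innermost body exactly $\prod_{j=1}^d \config_0(\statex{i_j})$ times, so $\statey$ can be fully drained iff $\config_0(\statey) \leq \apoly$. For the ``if'' direction of the equivalence, I exhibit the deterministic fair schedule that runs the nested loops until $\statey$ is empty, then takes the finish transition: after this point every state other than $\okstate$ and $\reservoirstate$ is empty, so strong fairness is automatic and $\okaction$ fires $\omega$-often. For the ``only if'' direction, I argue that a missing or duplicate $\startstate$ agent prevents the startup gadget from ever producing a correctly-formed single control token; an insufficient reservoir similarly breaks the $k$-to-$2$-way gadgets; and if $\config_0(\statey) > \apoly$ then $\statey$ remains strictly positive forever, so strong fairness forces the body $(\statey, \mathit{in}_d) \to (\reservoirstate, \mathit{in}_d)$ to fire $\omega$-often, meaning the control token is never stably located at $\okstate$ and $\okaction$ can be fired only finitely often.

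The main obstacle will be the interplay between the nested-loop control and strong fairness. Because strong fairness demands that every finite subrun appearing from an $\omega$-often visited configuration be taken $\omega$-often, the finish gadget must be designed so that once $\okstate$ is reached, every non-$\okaction$ transition becomes deactivated, while at the same time preventing the finish transition from being fireable ``prematurely'' in a way that could trap $\okaction$ into firing $\omega$-often when $\statey > 0$. Balancing these two opposing demands is what forces the precise choice of guard conditions on the finish transition and the exact form of the reservoir lower bound in the definition of weakly-computes.
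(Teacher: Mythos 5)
Your nested-loop architecture is in essence the same gadget the paper builds (the paper likewise uses $d$ nested loops for a degree-$d$ monomial, with control states tracking loop depth and agents shuttled from $\statex{i}$ to $\statex{i}'$), but your proposal relies on mechanisms that population protocols simply cannot implement, and it omits the single idea that makes the paper's argument go through.

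Several of your steps are implicit zero-tests. You stipulate that a reset transition is ``enabled only when no further descent is possible at that level'' and that the finish transition is ``carefully guarded'' so it cannot fire while $\statey$ is nonempty. A population protocol transition is enabled as soon as the two required agents are present; there is no way to make a transition depend on a state being \emph{empty}. Your ``only if'' argument is also circular: you say strong fairness forces $(\statey, \mathit{in}_d) \to (\reservoirstate, \mathit{in}_d)$ to fire $\omega$-often when $\config_0(\statey) > \apoly$, but that transition needs the control token in $\mathit{in}_d$ — if the token prematurely jumps to $\okstate$ (and, without a zero-test, nothing stops it), the body transition is disabled forever and strong fairness demands nothing of it. Similarly, ``a missing or duplicate $\startstate$ agent prevents the startup gadget from ever producing a correctly-formed single control token'' is asserted, not achieved: detecting a duplicate requires a mechanism you have not described.

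The missing idea, which is the heart of the paper's proof, is to replace guards by an \emph{attracting sink} $\cheatstate$ plus fairness. The protocol has an always-enabled escape $\statey \to \cheatstate$, and $\cheatstate$ is attracting (once one agent is there, fairness eventually sends all agents there, after which $\okaction$ can no longer fire). Consequently, any strongly fair run that fires $\okaction$ infinitely often must eventually empty $\statey$, not because any transition was guarded, but because otherwise strong fairness would push an agent into $\cheatstate$. The same trick handles duplicate leaders: two agents in the leader part can collide and go to $\cheatstate$, so a fair run with two leaders fires $\okaction$ only finitely often. The construction is thus deliberately nondeterministic — it merely \emph{permits} a run that drains exactly $\prod_{i\in R} \config_0(\statex{i})$ agents from $\statey$ — and fairness + $\cheatstate$ prune away all other behaviours. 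Until your argument adopts some such fairness-based filtering mechanism, the equivalence you want to prove does not follow.
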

	\begin{proof}
		\(\prot\) has a special state \(\cheatstate\) that is attracting, \emph{i.e.}, if an agent is in \(\cheatstate\) then, by fairness, all agents will eventually come to \(\cheatstate\) and the run does not fire \(\okaction\) infinitely often.
		\(\prot\) has a \emph{leader part} in which there should only be one agent; if two agents lie in this part of the protocol, they may interact and be sent to \(\cheatstate\).
		The state \(\startstate\) is in the leader part, as well as the state \(\okstate\), from which the transition \(\okaction\) is fired at will.
		Therefore, in order to fire \(\okaction\) infinitely often with non-zero probability, exactly one agent must lie in the leader part.
		We henceforth assume that there is exactly one agent in this part, acting as a \emph{leader}.
		We explain the rest of the protocol by means of an example that is depicted in \figurename~\ref{fig:multiplier}.
		For the protocol depicted in the figure there is a run that moves exactly \(\gamma_0(X_1) \times \gamma_0(X_2) \times \gamma_0(X_3)\) agents from \(Y\) to \(Y'\) while also moving \(\gamma_0(X_i)\) agents to \(X'_i\) for \(i=1,2,3,4,5,6\).
		This run starts from a configuration \(\gamma_0\) that has no agents in the primed variables (i.e. \(X'_i\), \(i=1,\ldots,6\) and \(Y'\)), no agents in \(out_0\), \(out_1\), \(in_1\), \(in_2\) and \(q_f\), and exactly \(1\) agent (the leader) in \(in_0\) (which we also refer to as \(\startstate\)).
		The leader part in this example is given by the \(in_i\) and \(out_j\) (\(i=0,1,2\), \(j=0,1\)) states together with \(q_f\).
		Observe that no run moves more than \(\gamma_0(X_i)\) agents to \(X'_i\) for \(i=1,\ldots,6\).
		Also no run moves more than \(\gamma_0(X_1) \times \gamma_0(X_2) \times \gamma_0(X_3)\) agents from \(Y\) to \(Y'\).

		It is an easy exercise to generalize the above construction to a product with more than 3 variables.

		In the figure, when the leader is in the state \(in_0\), a transition moves it to state \(q_f\) enabling transition \(\okaction\) to fire at will.
		Incidentally, if, in a run, no agent enters \(\cheatstate\) then we find that, with probability \(1\), we end up with no agent in the leader part.
		Moreover, state \(Y\) has a transition to \(\cheatstate\) that is enabled at every moment for agents in \(Y\), so that fair runs where no agent enters \(\cheatstate\) must eventually empty \(Y\), which is possible if and only if \(\config_0(\statey) \leq \apoly(\config_0(\statex{1}), \dots, \config_0(\statex{r}))\).

	\end{proof}

	\begin{lemma}
		\label{prot-weakly-computes-poly}
		Given a multivariate polynomial \(\apoly\), one can compute a population protocol that weakly computes \(\apoly\).
	\end{lemma}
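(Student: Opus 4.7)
The plan is to reduce the task to Lemma~\ref{prot-weakly-computes-product} by first applying the two syntactic transformations described before that lemma, and then stitching together the resulting monomial protocols along a single leader agent so that the $Y$-to-$Y'$ transfers accumulate additively.

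First I would apply the variable-splitting transformation to $\apoly$ so that every monomial becomes a product of distinct variables $X^{(k)}_i$, and then the unary expansion of coefficients so that $\apoly$ becomes a sum $\sum_{j=1}^{m} M_j$ where each $M_j$ is a product of some distinct fresh variables. To force every copy $X^{(k)}_i$ to take the same value as the intended $X_i$, I would prepend an initialization phase realized by a gadget in the style of Figure~\ref{fig:initializer}: while the leader sits in a dedicated \emph{init} control state, transitions synchronously move one agent from $X_i$ and one agent from $R$ into each copy $X^{(k)}_i$, exhausting $X_i$. As in Lemma~\ref{prot-weakly-computes-product}, any deviation (e.g.\ leaving residual mass in $X_i$ or producing two leader agents) is punished by sending an agent to the attracting cheat state $\cheatstate$, so fair runs firing $\okaction$ infinitely often must have performed the initialization exactly.

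Next I would chain the $m$ monomial gadgets obtained from Lemma~\ref{prot-weakly-computes-product}, renaming the input/output leader states so that the control $q_f$ of gadget $j$ is identified with the start control $\startstate$ of gadget $j+1$. All gadgets share the \emph{same} pair of counter states $Y$ and $Y'$, but each uses its own private set of variable copies $X^{(k)}_i$; this way an execution which, in gadget $j$, moves exactly $M_j(\vec{\config_0(X)})$ agents from $Y$ to $Y'$ results in a total transfer of $\sum_j M_j(\vec{\config_0(X)}) = \apoly(\vec{\config_0(X)})$ agents. Constant monomials, which by assumption are non-negative, are handled by a trivial gadget that moves a fixed number of agents from $R$ through an auxiliary state into $Y'$. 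Finally, after the $m$-th gadget the leader enters $\okstate$ and fires $\okaction$ at will, while the self-transition from $Y$ to $\cheatstate$ ensures (exactly as in the proof of Lemma~\ref{prot-weakly-computes-product}) that fair $\okaction$-firing runs exist iff $\config_0(\statey) \leq \apoly(\config_0(\statex{1}),\dots,\config_0(\statex{r}))$.

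The main obstacle is ensuring that concatenating gadgets preserves the weak-computation invariant on the shared counter $Y$. Each individual gadget only guarantees an \emph{upper bound} on the number of $Y \to Y'$ moves, so one must argue that (i) no gadget can be ``skipped'' without the leader getting stuck, which follows from the fact that the control flow is a linear chain with no side exits except to $\cheatstate$, and (ii) the reservoir $R$ is sized correctly: the bound $\config_0(R) \geq 1 + \sum_i (\delta-1)\config_0(\statex{i})$ from Lemma~\ref{prot-weakly-computes-product} must be summed over the gadgets, which forces the overall requirement $\config_0(R) \geq 1 + \sum_i (\delta-1)\config_0(\statex{i})$ with $\delta$ now the degree of $\apoly$ after splitting. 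Once these bookkeeping points are verified, the existence of a fair run firing $\okaction$ infinitely often is equivalent to the three conditions in the definition of weak computation for $\apoly$, completing the construction.
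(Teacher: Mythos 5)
Your overall strategy (split variables, unary-expand coefficients, chain the monomial gadgets of \cref{prot-weakly-computes-product} along a single leader, share $Y$/$Y'$ so the transfers accumulate) matches the paper's. But there are two concrete problems in the way you handle the copies of the variables.

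First, you give each of the $m$ monomial gadgets its \emph{own private set} of variable copies $X^{(k)}_i$, all filled upfront in the initialization phase. This means the reservoir must supply roughly $\sum_{j}\sum_i (\text{degree of }X_i \text{ in } M_j)\cdot\config_0(\statex{i})$ agents, and since $m$ counts unary-expanded monomials, this quantity depends on the \emph{coefficients} of $\apoly$, not just its degree. You notice this (``the bound \ldots must be summed over the gadgets'') but then write the unsummed bound $\config_0(R)\geq 1+\sum_i(\delta-1)\config_0(\statex{i})$ and justify it by ``$\delta$ now the degree of $\apoly$ after splitting'' --- which is a red herring, since splitting does not change the degree. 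The definition of ``weakly computes'' fixes the reservoir threshold as $1+\sum_i(\delta-1)\config_0(\statex{i})$ with $\delta=\deg\apoly$, so with your private-copy design the backward direction of the iff fails: there are initial configurations satisfying the three conditions that lack a fair $\okaction$-firing run because $R$ is too small. The paper avoids this by \emph{transmitting} the $\delta$ copies from layer to layer (this is what the primed states $X'_i$ in \cref{fig:multiplier} are for): each layer consumes the unprimed copies, outputs them to primed copies, and the next layer reads those back, so only $\delta$ copies per variable are ever needed regardless of the number of monomials. You need this reuse, not private copies, to stay within the definitional bound.

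Second, your constant-term gadget moves ``a fixed number of agents from $R$ through an auxiliary state into $Y'$''. That adds agents to $Y'$ but does not remove any from $Y$, which is the wrong direction: the whole mechanism certifies $\config_0(\statey)\leq\apoly(\vec{x})$ by ensuring $Y$ can be \emph{emptied} before the $Y\to\cheatstate$ transition is unavoidable. The constant term $c$ must therefore be realized by a layer that moves up to $c$ agents from $Y$ to $Y'$, as the paper does.

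Your point (i) --- that the linear control chain prevents the leader from skipping a gadget --- is fine, and the rest of the argument (fairness plus the $Y\to\cheatstate$ escape forcing $Y$ to empty, hence the iff with $\config_0(\statey)\leq\apoly(\vec{x})$) is the same as the paper's. Fixing the two points above would bring the proof into alignment.
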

	\begin{proof}
		Let \(\apoly(\varx_1, \dots, \varx_r)\) be a multivariate polynomial with positive integer coefficients and degree \(\maxexp\).
		As explained above, we may assume that all monomials in \(\apoly\) have coefficient \(1\).
		As given in \cref{prot-weakly-computes-product}, the protocol has a leader part in which only one agent evolves; if several agents are in the leader part, then eventually some of them will be sent to \(\cheatstate\).

		The protocol is composed of layers, each of which encodes a monomial using the construction from \cref{prot-weakly-computes-product}.
		Again, as explained above, we assume that each variable contributes at most linearly to each monomial.
		The agents in the copies of the variables will be transmitted from layer to layer.
		This is the role played by the primed variables in \figurename~\ref{fig:multiplier}.
		A final layer encodes the constant term \(c\) by moving as many as \(c\) agents from \(Y\) to \(Y'\).

		We now explain how enough agents are moved into the copies of the variables in the first layer.
		To do so, the leader, which initially is in the \(\startstate\) state, starts by taking tokens from \(\reservoirstate\) to fill up all copies of the first layer with the right number of agents.
		\figurename~\ref{fig:initializer} depicts an example of a gadget filling the first layer by using \(\reservoirstate\) to create three copies of \(X_1\) and \(X_2\).
		For each \(i\), there are exactly \(x_i := \config_0(\statex{i})\) agents in state \(\statex{i}\); these \(x_i\) agents may be used in the first layer, therefore the leader must fill up \(\maxexp\) other states with exactly \(x_i\) agents.
		If \(\reservoirstate\) does not have enough agents to do so, \emph{i.e.}, if \(\config_0(\reservoirstate) < \sum_i (\maxexp-1) x_i\), then the leader might get stuck in \(br_1\) or \(br_2\) and \(\okaction\) can never be fired.
		Also there are transitions moving agents to \(q_\bot\) from any pair of agents in \(X_i\) and \(X_j\).
		This will guarantee that no agent stays forever in a state \(X_i\) for \(i=1,\ldots,r\), hence that the gadget has performed the copies as specified.

		Once the leader has been through every layer, at most \(\apoly(\config_0(\statex{1}), \dots, \config_0(\statex{r}))\) agents have be moved from \(\statey\) to \(\statey'\), and there is a run that indeed moves this many agents from \(\statey\).
		There is, as in \cref{prot-weakly-computes-product}, a transition from \(\statey\) to \(\cheatstate\), so that a fair run that fires \(\okaction\) infinitely often eventually has no agent in \(\statey\).
		This proves that, from a given \(\config_0\) with a single agent in \(\startstate\) and enough agents in \(\reservoirstate\), there is a fair run firing \(\okaction\) infinitely often if and only if \(\config_0(\statey) \leq \apoly(\config_0(\statex{1}), \dots, \config_0(\statex{r}))\).
	\end{proof}

	This allows us to prove \cref{monadic-undec-pp} by reduction from \cref{hilbert-variant-undec}.
	Let \(\PP_1\), \(\PP_2\) obtained by applying \cref{prot-weakly-computes-poly} on \(\apoly_1\) and \(\apoly_2\) respectively, with winning transitions \(\okaction_1\) and \(\okaction_2\).
	Without loss of generality, assume that the value of \(\maxexp\) is the same in \(\prot_1\) and \(\prot_2\).
	Our protocol \(\prot\) has a leader part with initial state \(\startstate\), from which a process, the leader, may go to either \(\startstate_1\) and \(\startstate_2\).
	Again, runs starting from initial configurations with more than one leader agent in \(\startstate\) will be sent to \(\cheatstate\) and cannot fire \(\okaction\) infinitely often.
	For all \(i \in \set{1,2}\), when the leader goes to \(\startstate_i\), it will launch the weak computation of \(\apoly_i\).
	Therefore, we obtain that the following two assertions are equivalent:
	\begin{itemize}[noitemsep,topsep=0pt,parsep=0pt,partopsep=0pt]
		\item for all \(x_1, \dots, x_r \in \nats\), \(\apoly_1(x_1, \dots, x_r) \leq \apoly_2(x_1, \dots, x_r)\);
		\item \(\forall \config_0 \in \initialconfigs, \, \forall \atrace_1 \in \fairrunsfrom{\config_0}, \, \exists \atrace_2 \in \fairrunsfrom{\config_0}, \, \neg (\ltlfinally \ltlglobally \,\okaction_1(\atrace_1)) \lor (\ltlfinally \ltlglobally \,\okaction_2(\atrace_2))\).
	\end{itemize}
	First, note that the second statement above can be rephrased as: for a given \(\config_0 \in \initialconfigs\), if there is a fair run \(\run_1\) from \(\config_0\) that fires \(\okaction_1\) infinitely often, then there is a fair run \(\run_2\) from \(\config_0\) that fires \(\okaction_2\) infinitely often.

	We now prove this equivalence.
	Assume first that, for all \(x_1, \dots, x_r \in \nats\), \(\apoly_1(x_1, \dots, x_r) \leq \apoly_2(x_1, \dots, x_r)\).
	Because \(\prot_1\) weakly computes \(\apoly_1\), for all \(\config_0\), if there is a fair run that fires \(\okaction_1\) infinitely often, then we have that:
	\begin{itemize}[noitemsep,topsep=0pt,parsep=0pt,partopsep=0pt]
		\item \(\config_0(\startstate) = 1\),
		\item \(\config_0(\reservoirstate) \geq 1 + \sum_i (\maxexp-1) \config_0(\statex{i})\),
		\item \(\config_0(\statey) \leq \apoly_1(\config_0(\statex{1}), \dots, \config_0(\statex{r}))\).
	\end{itemize}
	Therefore, we also have \(\config_0(\statey) \leq \apoly_1(\config_0(\statex{1}), \dots, \config_0(\statex{r}))\), hence, since \(\prot_2\) weakly computes \(\apoly_2\), there is a fair run from \(\aconfig_0\) that fires \(\okaction_2\) infinitely often.

	Assume now that the second statement is true.
	Let \(x_1, \dots, x_r \in \nats\); and let \(\config_0\) the initial configuration such that:
	\begin{itemize}[noitemsep,topsep=0pt,parsep=0pt,partopsep=0pt]
		\item \(\config_0(\startstate) = 1\),
		\item \(\config_0(\reservoirstate) =  1 + \sum_i (\maxexp-1) \config_0(\statex{i})\),
		\item for all \(i \in \nset{1}{r}\), \(\config_0(\statex{i}) = x_i\),
		\item \(\config_0(\statey) = \apoly_1(x_1, \dots, x_r)\).
	\end{itemize}
	Because \(\prot_1\) weakly computes \(\apoly_1\), we know that there is a fair run from \(\config_0\) that fires \(\okaction_1\) infinitely often; we deduce that there also is a fair run from \(\config_0\) that fires \(\okaction_2\) infinitely often, but \(\prot_2\) weakly computes \(\apoly_2\) therefore this proves that \(\apoly_1(x_1, \dots, x_r) = \config_0(\statey) \leq \apoly_2(x_1, \dots, x_r)\).
	This being true for every \(x_1, \dots, x_r\), we have proven the equivalence.
	This concludes the proof of \cref{monadic-undec-pp}.
\end{toappendix}

%
%
%
%
\section{Verification of HyperLTL for IOPP}
\label{sec:iopp-verif}
\cref{sec:undecidability} showed that verification of HyperLTL in PPs is undecidable, even when the formulas are monadic and have a simple shape.
We thus turn to a subclass of PPs called \emph{immediate observation population protocols} (IOPP) \cite{Comp-Power-Pop-Prot} that has been studied extensively (see e.g. \cite{ERW19,JancarV22,BlondinL23,BGKMWW24}).

\subsection{Immediate Observation PP and Preliminary Results}

\begin{definition}
	An \emph{immediate observation population protocol} (""IOPP"") is a population protocol where all transitions are of the form \((q_1, q_2)\trans{} (q_3, q_2)\).
\end{definition}

We denote a transition \((q_1, q_2)\trans{} (q_3, q_2)\) as \(q_1\trans{q_2} q_3\).
Intuitively, when two agents interact, one remains in its state, as if it was observed by the other agent.

\chana{put an example?}

The IOPP model tends to be simpler to verify than standard PP \cite{ERW19}, notably because it enjoys a convenient monotonicity property: whenever an agent observes an agent in \(q_3\) and goes from \(q_1\) to \(q_2\), another agent in \(q_1\) may do the same ``for free''.
This property is however broken by the \(\ltlnext\) operator of LTL.
In fact, under LTL, IOPP has similar power to regular PP.
Indeed, consider a PP transition \(t: (q_1,q_2) \trans{} (q_3,q_4)\).
One may split this transition into immediate observation transitions \(t_1: q_1 \trans{q_3} q_2\) and \(t_2: q_3 \trans{q_2} q_4\).
Using an LTL formula with the \(\ltlnext\) operator, one can enforce that, whenever \(t_1\) is fired, \(t_2\) must be fired directly after.
Verification of LTL for IOPP is as hard as its counterpart for PP:

\begin{propositionrep}
	\label{prop:iopp-ltl-ack}
	Verification of LTL for IOPP is Ackermann-complete.
\end{propositionrep}
\begin{appendixproof}
	By \cite{EGLM16}, verification of LTL for standard PP is inter-reducible to reachability in Petri nets, an Ackermann-complete problem \cite{Leroux2022,Czerwinski2022}.
	Trivially, verification of LTL for IOPP reduces to verification of LTL for PP, giving decidability in Ackermannian time.
	We now prove Ackermann-hardness.
	We use the following Ackermann-hard problem from the proof of Ackermann-hardness of LTL verification for PP in  \cite{EGLM16} (in fact in the appendix of the long version \cite{EGLM16longversion}):
	\begin{quote}
		{\textbf{Input}: A population protocol \(\PP= (\states,\transitions,\initialstates)\) where \(\states\) contains two special states \(\stateone\) and \(\staterest\) such that all transitions \((q_1,q_2) \trans{} (q_3,q_4) \in \transitions\) are such that \(q_1,q_2 \notin \set{\stateone, \staterest}\).   \\
			\textbf{Question}: Does there exist \(\config_0 \in \initialconfigs\), \(\config \in \configs\) and a finite run \(\run: \config_0 \step{*} \config\) such that \(\config(\stateone) = 1\) and \(\config(q) = 0\) for all \(q \notin \set{\stateone, \staterest}\)?}
	\end{quote}
	We reduce the above problem to (the complement of) the verification problem of LTL for IOPP.
	Let \(\PP = (\states, \transitions, \initialstates)\) be a PP, with \(\stateone, \staterest \in \states\) and such that all transitions \((q_1,q_2) \trans{} (q_3,q_4) \in \transitions\) are such that \(q_1,q_2 \notin \set{\stateone, \staterest}\).
	We assume there is some transition which sends an agent to \(q_1\), else the problem is trivial.
	We construct an IOPP \(\PP' = (\states', \transitions', \initialstates')\) and an LTL formula \(\varphi\) as follows.
	First, we let \(\states' \deff \states\) and \(\initialstates' \deff \initialstates\).
	Let \(t:(q_1,q_2)\trans{}(q_3,q_4) \in \transitions\); we add to \(\transitions'\) transitions \(f_t: q_1\trans{q_2}q_3\) and \(g_t: q_2\trans{q_3}q_4\).
	Our aim is to enforce that, when \(f_t\) is fired, \(g_t\) must be fired immediately after.

	We denote \(\psi_f \deff \biglor_{t \in \transitions} f_t\) and \(\psi_g \deff \biglor_{t \in \transitions} g_t\).
	We let: \[ \varphi_1 \deff (\psi_f \lor \psi_g) \land (\biglor_{t \in \transitions} (f_t \implies \ltlnext g_t) \land (\psi_g \implies \neg \ltlnext \psi_g) \]

	Let \(F \deff \bigcup_{t \in \transitions} f_t\) and \(G \deff \bigcup_{t \in \transitions} g_t\).
	The formula \(\neg \psi_g \land \ltlglobally \varphi_1\) guarantees that the run alternates transitions in \(F\) and in \(G\), starting with a transition in \(F\), and that, whenever \(f_t\) is fired for some \(t \in \transitions\), \(g_t\) follows immediately after.
	This is how we implement PP transitions.
	There is however an issue with \(\neg \psi_g \land \ltlglobally \varphi_1\): this formula would not be satisfied by fair runs.
	For this reason, we only enforce \(\varphi_1\) in a finite initial phase using the \(\ltluntil\) operator.

	We therefore also add to \(\transitions'\) transitions \(\tgood: \stateone \trans{\staterest} \stateone\) and \(\tbad{q}: q \trans{\stateone} q\), for all \(q \neq \staterest\).
	Let \(\goodconfigs \deff \set{\config \in \configs \mid \config(\stateone) =1 \land \forall q \notin \set{\stateone, \staterest}, \config(q) = 0}\).
	We claim that there is a strongly fair run \(\run=\tgood^\omega\) from some \(\config \in \configs\) if and only \(\config \in \goodconfigs\).
	First, suppose that \(\config \in \goodconfigs\).
	We have that, for all \(q \ne \staterest\), transition \(\tbad{q}\) is disabled; \(\tbad{\stateone}\), in particular, is disabled because it requires two agents in \(\stateone\).
	All transitions \(f_t\) and \(g_t\), for \(t \in \transitions\), are also disabled because, by hypothesis, the source states of \(t\) cannot be in \(\set{\stateone, \staterest}\).
	Hence, from \(\config\), all transitions are disabled except \(\tgood\) -- the run that only fires \(\tgood\) is strongly fair.
	Conversely, if there is \(\run \in \fairrunsfrom{\config}\) that fires \(\tgood\) only, then it only visits \(\config\); it must therefore be that \(\tbad{q}\) is disabled from \(\config\), which implies that \(\config(q) = 0\) for all \(q \notin \set{\stateone, \staterest}\) and that \(\config(\stateone) =1\) (if \(\config(\stateone) > 1\) then \(\tbad{\stateone}\) is enabled) so that \(\config \in \goodconfigs\).

	We let \[\varphi \deff \neg \psi_g \land (\varphi_1 \ltluntil (\ltlglobally \, \tgood)).
	\] We prove that \(\PP\) is a positive instance of the problem
	iff \(\PP' \nvDash^{\forall} \neg \varphi\),
	i.e. iff there exists a \(\config_0 \in \initial\) and a run \(\run \in \fairrunsfrom{\config_0}\) such that \(\run \models \varphi\).
	First, suppose that there is \(\config \in \goodconfigs\) that is reachable from \(\config_0 \in \initialconfigs\) in \(\PP\); let \(\config_0, \atrans_1, \config_1,\dots, \atrans_m,\config_m = \config\) denote the corresponding finite run.
	There is a run from \(\config_0\) to \(\config_m\) in \(\PP'\) with sequence of transitions \(f_{\atrans_1}, g_{\atrans_1}, f_{\atrans_2}, g_{\atrans_2}, \dots, f_{\atrans_m}, g_{\atrans_m}\).
	Consider the infinite run from \(\config_0\) with sequence of transitions \(\run \deff f_{\atrans_1}, g_{\atrans_1}, f_{\atrans_2}, g_{\atrans_2}, \dots, f_{\atrans_m}, g_{\atrans_m}, \tgood, \tgood,\dots\) This is a run because \(\config \step{\tgood} \config\), and it is strongly fair by the reasoning above.
	Also, \(\run \models \varphi\).

	Conversely, suppose that there is \(\config_0 \in \initialconfigs'\) and an infinite run \(\run\) of \(\PP'\) such that \(\run \models \varphi\).
	By construction of \(\varphi\), \(\run\) can be split into two phases; a finite part where \(\varphi_1\) holds, so that the sequence of transitions is of the form \(f_{\atrans_1}, g_{\atrans_1},\dots,f_{\atrans_m}, g_{\atrans_m}\), and an infinite part where \(\tgood\) is the only transition fired.
	Let \(\config\) denote the configuration in between the two parts.
	Because the run is strongly fair from \(\config\), we have \(\config \in \goodconfigs\).
	It is easy to prove that the sequence of transitions \(\atrans_1, \dots, \atrans_m\) yields a valid finite run of \(\PP\) from \(\config_0\) to \(\config\), which concludes the proof.
\end{appendixproof}

\begin{remark}
	\label{rmk:stutter-inv}
	The fragment of LTL with no \(\ltlnext\) operator is equivalent to stutter-invariant LTL \cite{PeledW97,Etessami00}.
	Let \(\varphi\) be an LTL\(\setminus \ltlnext\) formula \(\varphi\), let \(t_1,t_2,\ldots \in \transitions\) and \(k_1, k_2, \ldots \geq 1\).
	This means that we have \(t_1^{k_1} \concat t_2^{k_2} \concat \ldots \models \varphi\) if and only if \(t_1 \concat t_2 \concat \ldots \models \varphi\).
\end{remark}

Below, we consider the fragment LTL\(\setminus \ltlnext\) as done in prior work \cite{FortinMW17} in which the systems under study feature monotonicity due to non-atomic writes: stuttering-invariance is a natural choice for systems with monotonicity properties.
We show that, even then, verification of HyperLTL\(\setminus \ltlnext\) formulas for IOPP is undecidable.

\begin{theoremrep}%
	\label{undec-hyperltl-iopp}
	Verification of HyperLTL\(\setminus \ltlnext\) is undecidable for IOPP.
\end{theoremrep}
%
\begin{appendixproof}
	The proof is by reduction from the halting problem for \(2\)-counter machines with zero-tests, an undecidable problem \cite{Minsky67}.
	A ""\(2\)-counter machine"" consists in two counters \(c_1,c_2\) plus a list of instructions \(l_1, \ldots, l_n\) and one instruction \(\halt\).
	Instructions \(l_1, \ldots, l_n\) are of the form: \(\incr(c_i)\) which increments counter \(c_i\), \(\decr(c_i)\) which decrements counter \(c_i\), and \(\zero(c_i,j)\) which moves to instruction \(l_j\) if \(c_i =0\).
	A configuration of a \(2\)-counter machine is \((l_k, c_1, c_2)\), the current value of the counters as well as the current (not yet executed) instruction.
	The initial configuration of the machine is \((l_1, 0, 0)\).
	If \(l_k\) is an increment or a decrement, configuration \((l_k, c_1, c_2)\) moves to configuration \((l_{k+1}, c_1', c_2')\), updating the counters accordingly.
	If \(l_k=\zero(c_i,j)\) then \((l_k, c_1, c_2)\) moves to \((l_j, c_1, c_2)\) if the zero test is successful, and \((l_{k+1}, c_1, c_2)\) otherwise.
	A \(2\)-counter machine ""halts"" if it reaches the \(\halt\) instruction from the initial configuration.

	Fix a \(2\)-counter machine \(\machine\) with instructions \(l_1, \ldots, l_n,\halt\).
	We build an IOPP \(\PP\), with the goal of simulating executions of \(\machine\) faithfully using runs of \(\PP\).
	For each instruction \(l_j\) in \(\machine\) there is a corresponding state \(l_j\) in \(\PP\); there are two states \(c_1,c_2\) that represent the counters of \(\machine\); there is a reservoir state \(\res\) (which will intuitively contain a large amount of agents) which is also the only initial state of \(\PP\), and a sink state \(\bot\).

	A \emph{faithful} run in \(\PP\) will have exactly one agent in the instruction states (except in the first configuration of the run), and the number of agents in state \(c_i\) will symbolize the value of the counter \(c_i\).
	We want a faithful run to simulate the instructions of \(\machine\) correctly, updating the counters and moving to the correct instruction.
	We will use the gadgets illustrated in Figures \ref{fig:incr-app} and \ref{fig:zerotest-app} to simulate the instructions.
	To ensure that the gadgets are used correctly and that there is exactly one agent in the instructions states, we will add \emph{bad} transitions (in yellow in the figures).
	A faithful run is then a run in which no bad transition is ever activated, and this will be enforced by a HyperLTL\(\setminus \ltlnext\) formula.

	\begin{figure}[ht]
		\centering
		%
		%

		\begin{tikzpicture}[->, node distance=1.5cm, auto, thick, font = \small]
			\node[place] (res) {$\res$};
			\node[transition] (t1) [right of=res] {$t_1$};
			\node[place] (int) [right of=t1] {$\interm_k$};
			\node[transition, fill=yellow] (bad) [above=0.5 of int] {$b_k$};
			\node[transition] (t2) [right of=int] {$t_2$};
			\node[place] (c) [right of=t2] {$c_i$};
			\node[place] (l1) [below of=t1] {$l_k$};
			\node[transition] (t3) [below of=int] {$t_3$};
			\node[place] (aux) [below of=t2] {$\aux_k$};
			\node[transition] (t4) [below of=c] {$t_4$};
			\node[place] (l2) [right of=t4] {$l_{k+1}$};
			\node[place] (bot) [right of=bad] {$\bot$};

			\path[->]
				(res) edge node {} (t1)
				(t1)  edge node {} (int)
				(int) edge node {} (t2)
				(int) edge node {} (bad)
				(t2)  edge node {} (c)
				(l1)  edge node {} (t3)
				(t3)  edge node {} (aux)
				(aux) edge node {} (t4)
				(t4)  edge node {} (l2)
				(bad) edge node {} (bot)
			;

			\path[<->, dashed]
				(l1)  edge node {} (t1)
				(t3)  edge node {} (int)
				(aux) edge node {} (t2)
				(int) edge [bend left=40] node {} (bad)
			;

		\end{tikzpicture}
		\caption{Gadget simulating instruction \(l_k:\incr(c_i)\).
			The notation is Petri net-inspired: circles are states, squares are transitions and a dashed line is an observation.
		}
		\label{fig:incr-app}
	\end{figure}
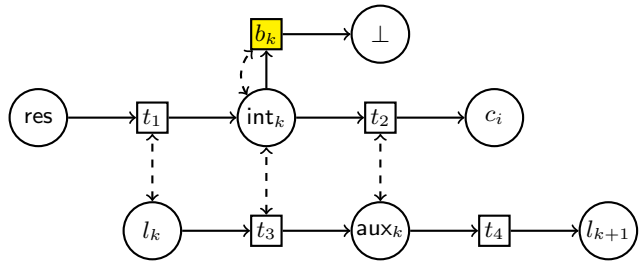
	In the figures illustrating the gadgets, for ease of representation, some transitions \(t\) are missing an observation state, i.e. a state \(q_3\) such that \(t: q_1 \trans{q_3} q_2\).
	For these transitions, the (undepicted) observation state is \(\res\).
	Figure \ref{fig:incr-app} illustrates the gadget used to simulate an instruction \(l_k:\incr(c_i)\) in \(\PP\).
	An agent in the reservoir state observes the instruction agent in \(l_k\) and moves to an intermediary state.
	The instruction agent moves to an auxiliary state upon observation of this previous agent, which then moves to \(c_i\), thus ``incrementing'' the counter.
	Finally, the instruction agent moves to the next instruction.
	The simulation of the instruction could be faulty if more than one agent in \(\res\) moves to the intermediary state upon observing the instruction agent in \(l_k\).
	These agents can then move to \(c_i\) after observation of the agent in the auxiliary state, thus incrementing the counter by more than one.
	The role of transition \(b_k\) (b for bad) is to detect this faulty behavior.
	If \(\interm_k\) contains more than one agent, \(b_k\) is activated.

	Decrements are simulated similarly to increments, by swapping \(\res\) and the counter state.
	\begin{figure}[ht]
		\centering
		%
		%

		\begin{tikzpicture}[->, node distance=1.5cm, auto, thick, font = \small]
			\node[place] (l1) {$l_k$};
			\node[transition] (t1) [right of=l1] {$t_2$};
			\node[place] (aux) [right of=t1] {$\aux_k$};
			\node[transition] (t2) [right of=aux] {$t_3$};
			\node[place] (l2) [right of=t2] {$l_{j}$};
			\node[transition] (t3) [left of=l1] {$t_1$};
			\node[place] (l3) [left of=t3] {$l_{k+1}$};
			\node[place] (c) [above of=t3] {$c_i$};
			\node[transition, fill=yellow] (bad) [above=0.93 of aux] {$b_k$};
			\node[place] (bot) [right of=bad] {$\bot$};

			\path[->]
				(l1)  edge node {} (t1)
				(t1)  edge node {} (aux)
				(aux) edge node {} (t2)
				(aux) edge node {} (bad)
				(t2)  edge node {} (l2)
				(l1)  edge node {} (t3)
				(t3)  edge node {} (l3)
				(bad) edge node {} (bot)
			;

			\path[<->, dashed]
				(t3) edge node {} (c)
				(c)  edge node {} (bad)
			;

		\end{tikzpicture}
		\caption{Gadget simulating instruction \(l_k:\zero(c_i,j)\).}
		\label{fig:zerotest-app}
	\end{figure}
	Figure \ref{fig:zerotest-app} illustrates the gadget used to simulate a zero-test instruction \(l_k:\zero(c_i,j)\).
	The instruction agent can observe another agent in \(c_i\), guaranteeing that the counter value is non-zero, and move to \(l_{k+1}\).
	The instruction agent can also move to an auxiliary state, and then to \(l_j\).
	We want this to happen only if \(c_i\) is zero: if the instruction agent moves to the auxiliary state while \(c_i\) contains at least one agent, then transition \(b_k\) is activated.

	To start the simulation, we add a transition from \(\res \trans{\res} l_1\).
	To ensure that there is exactly one agent in the instruction states, we add \(n^2 / 2\) bad transitions \(l_i \trans{l_j} \bot\) for every \(1 \le i \le j \le n\).
	These are activated if there are two or more agents in the instruction states.
	We add some transitions intended to end the simulation:
	\begin{itemize}[noitemsep,topsep=0pt,parsep=0pt,partopsep=0pt]
		\item
		      a transition \(\halttrans: \halt \trans{\res} \halt\) that can be taken only if instruction state \(\halt\) is reached,
		\item
		      for every state \(q\neq \halt\), a transition \(\halttrans_q: q \trans{\res} \halt\), and
		\item
		      a transition \(\halttrans_{\res}: \res \trans{\halt} \halt\).
	\end{itemize}
	Intuitively, these transitions ensure that once \(\halt\) is reached, all agents will eventually end up in \(\halt\).

	The final element we need is our HyperLTL\(\setminus \ltlnext\) formula.
	It will be satisfied in \(\PP\) if and only if the \(2\)-counter machine \(\machine\) \emph{does not} halt.
	Let \(\bad\) be the set of all bad transitions.
	Given a run \(\run \in \fairrunsfrom{\config_0}\), we define \(\psi_{\bad}(\run)\) to express that some bad transition is activated in \(\run\): \[ \psi_{\bad}(\run)= \exists \run'.
		(\bigvee_{t\in \transitions} t_\run \land t_{\run'}) \ltluntil
		(\bigvee_{b\in \bad}  b_{\run'})\enspace .
	\]
	The formula expresses that there exists another run
	which takes all the same transitions as \(\run\) until it takes a bad transition \(b\).
	If \(\run\) and \(\run'\) start in the same initial configuration and take the same transitions until \(b\), then \(b\) is activated in \(\run\) too.
	We take the following as our final HyperLTL\(\setminus \ltlnext\) formula \(\psi\): \[ \forall \run.
		\neg( \ltlfinally \halttrans_\run) \lor
		\psi_{\bad}(\run)\enspace.
	\]
	Machine \(\machine\) does not halt if and only if \(\PP\modelsforall\psi\): if \(\machine\) does not halt, then by construction there is no run of \(\PP\) that can ever take \(\halttrans\), so \(\PP\modelsforall\psi\).
	Suppose \(\machine\) halts.
	There exists a finite faithful run \(\sigma_1\) in \(\PP\) that puts the instruction agent in state \(\halt\).
	Extend \(\sigma_2\) with a finite run \(\sigma_2\) which uses the \(\halttrans_q\) and \(\halttrans_{\res}\) to bring all agents to state \(\halt\).
	There exists an initial configuration \(\config_0\) with a large enough number of agents in \(\res\) such that \(\sigma_1 \sigma_2 (\halttrans)^\omega\) can be taken.
	This run is strongly fair and thus \(\PP\nvDash^{\forall}\psi\).
\end{appendixproof}

%
%

However, we will show that the monadic HyperLTL\(\setminus \ltlnext\) case is decidable.
\subsection{Product Systems}
Our approach consists, as in the proof of \cref{strong-fairness-stochastic-scheduler}, to define \emph{product systems} that combine the IOPP with a Rabin automaton recognizing an LTL formula.

\begin{definition}
	A ""product system"" is a pair \(\prodsystem = (\prot, \buchiaut)\) where \(k \in \nats\) and:
	\begin{itemize}[noitemsep,topsep=0pt,parsep=0pt,partopsep=0pt]
		\item \(\prot = (\states, \transitions, \initialstates)\) is an "IOPP",
		\item \(\buchiaut = (\bstates, \btransitions, \binitstate, \bwinning)\) is a "deterministic Rabin automaton" over \(\transitions\).
	\end{itemize}
\end{definition}
We refer to the part with the Rabin automaton as the ""control part"".
There are two distinct notions of size for a "product system": the ""protocol size"" \(\protocolsize{\prodsystem} \deff \size{\states}\) and the ""control size"" \(\controlsize{\prodsystem} \deff \size{\bstates}\).
The reason for this distinction is that the "control size" is typically exponential in the size of the LTL formulas, so that keeping track of the two sizes separately will later improve our complexity analysis.

\paragraph*{Semantics of Product Systems.}
A "configuration" of \(\prodsystem\) is an element of \(\prodconfigs \deff \multisetsof{\states} \times \bstates\).
Given a set \(S \subseteq \bstates\), let \(\prodconfigsproj{S} \deff \set{(\config, \abuchistate) \mid \abuchistate \in S}\).
Moreover, we let \(\prodinitialconfigs \deff \set{(\config, \binitstate) \mid \config \in \initial}\) be the set of initial configurations of the "product system".

In product systems, unlike in the proof of \cref{strong-fairness-stochastic-scheduler}, the semantics in the PP is modified to match the monotonicity properties of the system.
More precisely, we rely on \emph{accelerated semantics} for the IOPP: in \(\prot\), there is an \emph{accelerated step} from \(\config\) to \(\config'\) with transition \(\atrans \in \transitions\) when there is \(k \geq 1\) such that \(\config \step{\atrans^k} \config'\).
Given two configurations \(\prodconfig = (\config,\bstate), \prodconfig' = (\config', \bstate') \in \prodconfigs\) and \(t \in \transitions\), we let \(\prodconfig \step{\atrans} \prodconfig'\) when there is \(k \geq 1\) such that \(\config \step{\atrans^k} \config'\) in \(\prot\) and \(\transitions(\abuchistate, \atrans) = \abuchistate'\).
A step in the "product system" corresponds to an accelerated step in \(\prot\) whose transition is read by \(\buchiaut\).
Note that there is no communication from the "control part" to the "IOPP".
In "product systems" runs and operators \(\prestar{\cdot}\), \(\poststar{\cdot}\) are defined as expected.

%
%

%
\subsection{Satisfiability as a Reachability Problem}

We fix \(\PP\) an IOPP, \(\varphi\) an LTL\(\setminus \ltlnext\) formula, \(\buchiaut = (\bstates, \btransitions, \binitstate, \bwinning) \) a deterministic Rabin automaton recognizing \(\varphi\) obtained using \cref{ltl-rabin} and we let \(\prodsystem = (\prot, \buchiaut)\).

Recall that, in \(\prot\), there is an \emph{accelerated step} from \(\config\) to \(\config'\) using \(\atrans\) when there are \(k \geq 1\) and \(t \in \transitions\) such that \(\config \step{t^k} \config'\).
A (finite) \emph{accelerated run} is a sequence \(\config_0, \atrans_1, \config_1, \dots, \atrans_m\) such that, for all \(i \in \nset{1}{m}\), there is an accelerated step from \(\config_{i-1}\) to \(\config_i\) using \(\atrans_i\).
We similarly define infinite accelerated runs.
We extend the notion of strong fairness: an infinite accelerated run \(\accrun\) is \emph{strongly fair} when, for every finite accelerated run \(\accrun'\), if the first configuration of \(\accrun'\) is visited infinitely often in \(\accrun\) then \(\accrun'\) appears infinitely often in \(\accrun\).
A run \(\run\) of \(\prodsystem\) can be projected onto \(\prot\) to obtain an accelerated run of \(\prot\), denoted \(\protproj{\run}\); \(\run\) is called \emph{protocol-fair} when the accelerated run \(\protproj{\run}\) is strongly fair.
Given an accelerated run \(\accrun = \config_0, \atrans_1, \config_1, \atrans_2, \dots\), we let \(\accrun \models \varphi\) when \(\atrans_1 \atrans_2 \dots \models \varphi\).
An accelerated infinite run \(\accrun= \config_0, \atrans_1, \config_1, \atrans_2, \dots\) is an \emph{acceleration} of an infinite run \(\run\) when there are \(k_1, k_2,\ldots \geq 1\) such that \(\run\) is of the form
\(\config_0, \atrans_1^{k_1}, \config_1, \atrans_2^{k_2}, \config_2, \ldots\)
\begin{lemmarep}
	\label{strongly-fair-acceleration}
	Given a strongly fair accelerated run \(\accrun\), there is a strongly fair run \(\run\) such that \(\accrun\) is an acceleration of \(\run\).
	Conversely, given a strongly fair run \(\run\), there is a strongly fair acceleration \(\accrun\) of \(\run\).
\end{lemmarep}
\begin{appendixproof}
	We start with the first statement.
	Let \(\accrun = \config_0, \atrans_1, \config_1, \atrans_2, \dots\) be a strongly fair accelerated run.
	Let \(\run\) be the infinite non-accelerated run equal to \(\config_0 \step{t_1^{k_1}} \config_1 \step{t_2^{k_2}} \config_2 \dots\) where, for all \(i \geq 1\), \(k_i\) is the minimal integer \(k \geq 1\) such that \(\config_{i-1} \step{t_i^k} \config_i\).
	Note that all \(k_i\) exist because \(\accrun\) is an accelerated run.
	Clearly, \(\accrun\) is an acceleration of \(\run\).
	We now claim that \(\run\) is strongly fair.
	Let \(\run' = \config_0', \atrans_1', \config_1', \dots, \atrans_m', \config_m'\) where \(\config_0'\) is visited infinitely often in \(\run\).
	We claim that \(\config_0'\) appears infinitely often in \(\accrun\).
	Trivially, there is a configuration \(\config \in \configs\) that is visited infinitely often in \(\accrun\).
	Both \(\config\) and \(\config_0'\) are visited infinitely often in \(\run\), therefore there is a finite run from \(\config\) to \(\config_0'\), and hence there is an accelerated finite run from \(\config\) to \(\config_0'\).
	Because \(\accrun\) is strongly fair, this finite accelerated run appears infinitely often in \(\accrun\) so that \(\config_0'\) is visited infinitely often in \(\accrun\).
	Let \(\accrun'\) be the accelerated run equal to \(\run'\), but seen as an accelerated run.
	By strong fairness, \(\accrun'\) appears infinitely often in \(\accrun\).
	For each \(i \in \nset{1}{m}\), we have \(\config_{i-1}' \step{\atrans_i'} \config_i'\).
	Therefore, whenever \(\accrun'\) appears in \(\accrun\), all the corresponding \(k_i\) are equal to \(1\) by minimality.
	This proves that, for each occurrence of \(\accrun'\) in \(\accrun\), there is an occurrence of \(\run'\) in \(\run\).
	We conclude that \(\run'\) appears infinitely often in \(\run\) and that \(\run\) is strongly fair.

	We now prove the second statement.
	Let us fix a probability distribution \(f: \nats \to [0,1]\) such that \(f(n) > 0\) for all \(n\) (\emph{e.g.}, a geometric distribution).
	We first define a random variable \(R\) that takes value over the set of infinite accelerated runs.
	We build \(R\) as follows.
	We proceed (accelerated) step by (accelerated) step by grouping consecutive steps of \(\run\) with the same transition.
	Suppose that the acceleration has been built until the \(i\)-th configuration of \(\run\); let \(\config\) denote this configuration, and let \(t\) denote the next transition in \(\run\) (the \(i\)-th transition of \(\run\), which is fired from \(\config\)).
	We pick an integer \(m \in \nats\) according to \(f\), independently from the past.
	If steps \(i\) to \(i+m-1\) of \(\run\) use transition \(t\) then we accelerated all those steps into one accelerated step from the \(i\)-th configuration of \(\run\) to the \(i+m\)-th configuration of \(\run\), and we repeat the procedure from the \((i+m)\)-th configuration of \(\run\).
	Otherwise, we define the next accelerated step as equal to the step from the \(i\)-th configuration to the \((i+1)\)-th configuration of \(\run\) (the next step is not grouped with other steps), and we repeat the procedure from the \((i+1)\)-th configuration of \(\run\).

	By repeating this construction, we obtained an infinite accelerated run \(R\).
	Trivially, \(R\) is an acceleration of \(\run\).
	We claim that \(R\) is protocol-fair with probability \(1\).
	Let \(\config_0 \in \configs\) and let \(\accrun = \config_0, \atrans_1, \config_1, \atrans_2, \dots, \atrans_m \config_m\) be an accelerated finite run from \(\config_0\).
	There are \(k_1, \dots, k_m\) such that \(\config_{i-1} \step{\atrans_i^{k_i}} \config_i\) for all \(i \in \nset{1}{m}\).
	Whenever \(\config\) appears in \(R\), there is probability at least \(\prod_{i=1}^{m} f(k_i)>0\) that the next \(m\) accelerated steps are the same as in \(\sigma\).
	This proves that there is probability \(0\) that \(\config\) is visited infinitely often in \(R\) but that \(\accrun\) appears finitely often.
	Because the set of configurations and the set of finite accelerated runs are countable, this proves that there is probability zero that there are \(\config\) and \(\accrun\) disproving strong fairness.
	Hence, \(R\) is strongly fair with probability one.
	This in particular implies the existence of a strongly fair acceleration of \(\run\).
\end{appendixproof}

For \(L \subseteq \bstates\), we write \(\prodconfigsproj{L} \deff \configs \times L \subseteq \prodconfigs\); also, for \(\cSet \subseteq \prodconfigs\), \(\setcomplement{\cSet} \deff \prodconfigs \setminus \cSet\).
We let \(\satset{\exists \run .
	\, \varphi} \deff \set{\config \in \configs \mid \exists \run \in \fairrunsfrom{\config}, \, \run \models \varphi}\).
Similarly, we let \(\satset{\forall \run .
	\, \varphi} \deff \set{\config \in \configs \mid \forall \run \in \fairrunsfrom{\config}, \, \run \models \varphi} = \configs \setminus {\satset{\exists \run . \, \neg \varphi}}\).
We give a characterization of these sets.

\begin{theorem}
	\label{lm:characterization}
	A configuration \(\config\) of \(\PP\) is in \(\satset{\exists \run .
		\, \varphi}\) if and only if \((\config,\binitstate)\) is in
	\[
		\ltlgre \deff \preop^* \left( \textstyle{\bigcup_{(F, G) \in \bwinning}} \setcomplement{\prestar{\prodconfigsproj{F}}} \cap \setcomplement{\prestar{\setcomplement{\prestar{\prodconfigsproj{G}}}}} \right)
	\]
\end{theorem}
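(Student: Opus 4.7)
The approach is to argue via accelerated semantics on the product system, exploiting two tools already available: the equivalence of strongly fair accelerated and non-accelerated runs (Lemma~\ref{strongly-fair-acceleration}), and the stutter invariance of LTL$\setminus\ltlnext$ (Remark~\ref{rmk:stutter-inv}). A crucial additional observation is that an IOPP preserves the number of agents, so for any fixed $\gamma$ the set of product-system configurations reachable from $(\gamma,\binitstate)$ is finite; consequently bottom-SCC analysis works for the accelerated reachability graph even though the product system is infinite-state overall. For a pair $(F,G) \in \bwinning$, write $\cSet_{F,G} \deff \setcomplement{\prestar{\prodconfigsproj{F}}} \cap \setcomplement{\prestar{\setcomplement{\prestar{\prodconfigsproj{G}}}}}$: the first conjunct says ``$\prodconfigsproj{F}$ is unreachable'' and the second ``every reachable configuration can still reach $\prodconfigsproj{G}$''; both are closed under $\postop$.

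For the ``if'' direction, suppose $(\gamma,\binitstate)$ reaches some $c' \in \cSet_{F,G}$ in $\prodsystem$. Closure under $\postop$ ensures that the finite set of configurations reachable from $c'$ lies entirely in $\cSet_{F,G}$. Pick a bottom SCC $B$ of this finite reachability graph: since $B \subseteq \cSet_{F,G}$, $B$ avoids $\prodconfigsproj{F}$ entirely and, being bottom, must intersect $\prodconfigsproj{G}$. I would then construct an accelerated run of $\prodsystem$ that reaches $B$ and, inside $B$, schedules via a diagonal enumeration every finite accelerated run whose source lies in $B$. This yields a strongly fair accelerated run that visits $\prodconfigsproj{G}$ infinitely often and $\prodconfigsproj{F}$ never, so $\buchiaut$ accepts its transition sequence. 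Lemma~\ref{strongly-fair-acceleration} then produces a strongly fair $\run \in \fairrunsfrom{\gamma}$ having this accelerated run as an acceleration, and Remark~\ref{rmk:stutter-inv} gives $\run \modelsLTL \varphi$.

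For the ``only if'' direction, take any $\run \in \fairrunsfrom{\gamma}$ with $\run \modelsLTL \varphi$, apply Lemma~\ref{strongly-fair-acceleration} to obtain a strongly fair acceleration $\accrun$ of $\run$, and apply Remark~\ref{rmk:stutter-inv} to conclude $\accrun \modelsLTL \varphi$. Lift $\accrun$ to a run $\tilde{\accrun}$ of $\prodsystem$ starting in $(\gamma,\binitstate)$; by determinism of $\buchiaut$ this lift is unique. Let $I$ be the set of product configurations visited infinitely often by $\tilde{\accrun}$: $I$ is finite by agent conservation, strongly connected, and closed under accelerated reachability (strong fairness forces every finite accelerated run starting in $I$ to be fired infinitely often, hence its reachable set stays inside $I$). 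Since $\buchiaut$ accepts $\accrun$, there is some $(F,G) \in \bwinning$ with $I \cap \prodconfigsproj{F} = \emptyset$ and $I \cap \prodconfigsproj{G} \neq \emptyset$, and any $c^* \in I$ then lies in $\cSet_{F,G}$. Thus $(\gamma,\binitstate) \in \preop^*(\cSet_{F,G}) \subseteq \ltlgre$.

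The main technical hurdle I anticipate is constructing the strongly fair accelerated run inside $B$ for the ``if'' direction: strong fairness demands firing every finite accelerated run whose source is visited infinitely often infinitely often, and there are countably many such runs, so a careful diagonal interleaving (walking from the current position back to the chosen source inside the strongly connected $B$, then replaying the target finite run) together with a short verification that the resulting infinite accelerated run really is strongly fair is required. A minor but notationally delicate point is keeping track of which statements live in $\prot$ and which in $\prodsystem$ when invoking Lemma~\ref{strongly-fair-acceleration}, since that lemma is stated purely for $\prot$ and has to be paired with the deterministic lift through $\buchiaut$ on both sides of the argument.
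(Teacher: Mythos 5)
Your overall approach mirrors the paper's: reduce to strongly fair accelerated runs via Lemma~\ref{strongly-fair-acceleration} and Remark~\ref{rmk:stutter-inv}, observe that the accelerated reachability graph of $\prodsystem$ from $(\config,\binitstate)$ is finite because agents are conserved, and characterize satisfaction via winning bottom SCCs; the set-theoretic manipulation of $\setcomplement{\prestar{\prodconfigsproj{F}}} \cap \setcomplement{\prestar{\setcomplement{\prestar{\prodconfigsproj{G}}}}}$ is also the one the paper uses.

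The gap is in your ``only if'' direction, in the justification that $I$ (the set of product configurations visited infinitely often by $\tilde\accrun$) is closed under reachability: you write that ``strong fairness forces every finite accelerated run starting in $I$ to be fired infinitely often''. The strong fairness you actually possess is that of $\protproj{\tilde\accrun}$ \emph{in} $\prot$ (this is what Lemma~\ref{strongly-fair-acceleration} delivers), not that of $\tilde\accrun$ in $\prodsystem$, and the two are not interchangeable. A finite accelerated run of $\prodsystem$ from $(\config',\bstate) \in I$ projects to a finite accelerated run of $\prot$ from $\config'$; strong fairness of $\protproj{\tilde\accrun}$ ensures this projected run reappears infinitely often in $\protproj{\tilde\accrun}$, but each such reappearance may be matched in $\tilde\accrun$ to a lift starting at a \emph{different} control state $\bstate' \ne \bstate$ (several $(\config',\bstate')$ can be visited infinitely often), so nothing forces the particular $\prodsystem$-run from $(\config',\bstate)$ to reoccur. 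This is precisely the pitfall the paper flags in its appendix discussion of the fairness mistake in~\cite{EGLM16}, and it is the reason the second half of the proof of Proposition~\ref{strong-fairness-stochastic-scheduler} is nontrivial: it builds an interleaved transition word $\sigma$ that escapes the non-bottom SCC simultaneously from \emph{every} relevant $(\config',\bstate)$, and only then applies strong fairness in $\prot$ to $\sigma$. The paper's proof of Theorem~\ref{lm:characterization} explicitly invokes that argument; your sketch needs to do the same rather than assert closure under reachability as an immediate consequence of fairness. A smaller, cosmetic point in the ``if'' direction: the diagonal enumeration yields strong fairness of the constructed accelerated run \emph{in $\prodsystem$}, and you still need a one-line observation, using determinism of $\buchiaut$, that this implies strong fairness of its $\prot$-projection before Lemma~\ref{strongly-fair-acceleration} can be applied.
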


\begin{proof}
	Let \(\config \in \configs\).
	By \cref{strongly-fair-acceleration} and \cref{rmk:stutter-inv}, \(\config \in \satset{\exists \run .
		\, \varphi}\) if and only if there is a strongly fair accelerated run \(\accrun\) from \(\config\) such that \(\accrun \models \varphi\).
	Let \(G\) denote the graph whose vertices are the configurations of the product system reachable from \((\config, \binitstate)\) and where there is an edge from \(\prodconfig\) to \(\prodconfig'\) whenever \(\prodconfig \step{t} \prodconfig'\) for some \(t \in \transitions\).
	We claim that there is a strongly fair accelerated run \(\accrun\) from \(\config\) such that \(\accrun \models \varphi\) if and only if there is a bottom SCC \(S\) of \(G\) reachable from \((\config, \binitstate)\) that is \emph{winning}, \emph{i.e.}, such that there is \((F,G) \in \bwinning\) for which \(S \cap \prodconfigsproj{G} \ne \emptyset\) but \(S \cap \prodconfigsproj{F} = \emptyset\).

	The arguments are the same as in the proof of \cref{strong-fairness-stochastic-scheduler}, but with accelerated semantics in \(\prot\).
	If we have such an SCC \(S\), it is easy to build a protocol-fair run \(\run\) of \(\prodsystem\) that goes to \(S\) and visits all configurations in \(S\) infinitely often.
	We let \(\accrun \deff \protproj{\run}\); \(\accrun\) is strongly fair and, because \(S\) is winning, \(\accrun \models \varphi\).
	Suppose now that we have a strongly fair accelerated run \(\accrun\) such that \(\accrun \models \varphi\).
	Let \(\run\) be the run of \(\prodsystem\) such that \(\protproj{\run} = \accrun\); \(\run\) is protocol-fair.
	Let \(S\) be the SCC visited infinitely often in \(\run\); \(S\) is bottom and \(\run\) visits infinitely often all configurations in \(S\).
	Indeed, the same arguments as in the proof of \cref{strong-fairness-stochastic-scheduler} apply, except that we rely on strong fairness of the accelerated run, which makes no difference since strong fairness is defined the same for accelerated and non-accelerated runs.

	It remains to prove that there is a winning bottom SCC \(S\) reachable from \((\config, \binitstate)\) if and only if \((\config,\binitstate) \in \ltlgre\).
	Suppose first that there is such an SCC \(S\); let \(\prodconfig \in S\) and let \((F,G) \in \bwinning\) such that \(S \cap \prodconfigsproj{G} \ne \emptyset\) and \(S \cap \prodconfigsproj{F} = \emptyset\).
	We have \((\config, \binitstate) \in \prestar{\prodconfig}\).
	Since \(S\) is bottom and \(S \cap \prodconfigsproj{F} = \emptyset\), we have \(\poststar{\prodconfig} \cap \prodconfigsproj{F} = \emptyset\) and so \(\prodconfig \in \setcomplement{\prestar{\prodconfigsproj{F}}}\).
	We also have \(S = \poststar{S}\), and because \(S \cap \prodconfigsproj{G} \ne \emptyset\), we have \(\poststar{S}\subseteq \prestar{\prodconfigsproj{G}}\); therefore \(S \cap \prestar{\setcomplement{\prestar{\prodconfigsproj{G}}}} = \emptyset\).
	This proves that \(\prodconfig \in \setcomplement{\prestar{\prodconfigsproj{F}}} \cap \setcomplement{\prestar{\setcomplement{\prestar{\prodconfigsproj{G}}}}}\); therefore \((\config, \binitstate) \in \ltlgre\).
	Suppose now that \((\config,\binitstate) \in \ltlgre\).
	Let \((F,G) \in \bwinning\), \(\prodconfig \in \poststar{(\config, \binitstate)}\) such that \(\prodconfig \in \setcomplement{\prestar{\prodconfigsproj{F}}} \cap \setcomplement{\prestar{\setcomplement{\prestar{\prodconfigsproj{G}}}}}\).
	Let \(S\) be an SCC reachable from \(\prodconfig\).
	We claim that \(S\) is winning.
	Because \(S \subseteq \poststar{\prodconfig}\), we have \(S \cap \prodconfigsproj{F} = \emptyset\).
	Also, if we had \(S \cap \prodconfigsproj{G} \ne \emptyset\) then any configuration \(\prodconfig_S \in S\) would be in \(\setcomplement{\prestar{\prodconfigsproj{G}}}\), so that \(\prodconfig\) would be in \(\prestar{\setcomplement{\prestar{\prodconfigsproj{G}}}}\), a contradiction.
\end{proof}

\subsection{\texorpdfstring{\(K\)}{K}-blind Sets}
Let \(K \in \nats\).
A set \(S \subseteq \configs\) of configurations of \(\prot\) is \emph{\(K\)-blind} when, for all \(\config \in \configs\) and \(q \in \states\) such that \(\config(q) \ge K\), \(\config \in S\) if and only if \(\config+ \vec{q} \in S\).
Similarly, a set \(\cSet \subseteq \prodconfigs\) of configurations of \(\prodsystem\) is \emph{\(K\)-blind} when, for all \((\config,\bstate) \in \prodconfigs\) and \(q \in \states\) such that \(\config(q) \ge K\), \((\config,\bstate) \in \cSet\) if and only if \((\config+ \vec{q}, \bstate) \in \cSet\).

\begin{example}
	\label{ex:initial-and-winning}
	The set \(\initialconfigs\) is \(1\)-blind, because \(\config \in \initialconfigs\) if and only if \(\config(q)\) is non-zero when \(q\in I\) and zero otherwise.
	For the same reason, the set \(\prodinitialconfigs \subseteq \prodconfigs\) is \(1\)-blind.
	Also, for all \(L \subseteq \bstates\), the set \(\prodconfigsproj{L}\) is \(0\)-blind.
\end{example}

\begin{lemmarep}
	\label{lm:k-blind-boolean}
	Let \(\cSet_1\) a \(K_1\)-blind set and \(\cSet_2\) a \(K_2\)-blind set of \(\prodsystem\).
	Then \(\cSet_1 \star \cSet_2\) is a \(\max(K_1,K_2)\)-blind set for \(\star \in \set{\cup, \cap}\).
	Additionally, \(\setcomplement{\cSet_1}\) is a \(K_1\)-blind set.
\end{lemmarep}
\begin{appendixproof}
	Let \(\cSet_1\) a \(K_1\)-blind set and \(\cSet_2\) a \(K_2\)-blind set.
	Let \((\config,\bstate)\) be a configuration such that \(\config(q) \ge \max(K_1,K_2)\) for some state \(q\).
	Suppose \((\config,\bstate)\) is in \(\cSet_1 \cup \cSet_2\).
	Thus \((\config,\bstate)\) is in \(\cSet_i\) for an \(i \in \set{1,2}\).
	By \(K_i\)-blindness of \(\cSet_i\), \((\config+\vec{q},\bstate)\) is in \(\cSet_i\) and thus in \(\cSet_1 \cup \cSet_2\).
	Conversely if \((\config+\vec{q},\bstate)\) is in \(\cSet_i\) then \((\config,\bstate)\) is in \(\cSet_i\) and thus in \(\cSet_1 \cup \cSet_2\).
	The proof is similar for \(\cSet_1 \cap \cSet_2\).
	Let \((\config,\bstate)\) such that \(\config(q) \ge K_1\) for some state \(q\).
	Since \(\cSet_1\) a \(K_1\)-blind set, \((\config,\bstate) \notin \cSet_1\) if and only if \((\config+\vec{q},\bstate) \notin \cSet_1\).
	Thus \((\config,\bstate) \in \overline{\cSet_1}\) if and only if \((\config+\vec{q},\bstate) \in \overline{\cSet_1}\).
\end{appendixproof}

Next we find that \(K\)-blind sets are closed under reachability if we enlarge \(K\).

\begin{theorem}%
	\label{thm:k-blind-post}
	Let \(\cSet\) be a \(K'\)-blind set of \(\prodsystem\).
	Then \(\poststar{\cSet}\) and \(\prestar{\cSet}\) are \(K\)-blind sets for \(K := \size{\states}^2 \max(K', 2 B)\) where \(B = \size{\bstates}^{3^{\size{\states}^2+2} \cdot 2(\log(\size{\states}^2+2) +1) \size{\states}^2}\).
\end{theorem}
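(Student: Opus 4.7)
The plan is to prove $K$-blindness separately for $\poststar{\cSet}$ and $\prestar{\cSet}$, in each case verifying both directions of the defining equivalence $(\config, \bstate) \in S \iff (\config + \vec{q}, \bstate) \in S$ for $S \in \set{\poststar{\cSet}, \prestar{\cSet}}$ under the hypothesis $\config(q) \ge K$.

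For $\prestar{\cSet}$, the \emph{add an agent} direction combines monotonicity of the accelerated IOPP semantics with a pigeonhole argument. Given a run $\run$ from $(\config, \bstate)$ into some $(\config'', \bstate'') \in \cSet$, the $K = \size{\states}^2 \max(K', 2B)$ agents originally in $q$ get routed to at most $\size{\states}$ target states in $\config''$, so by pigeonhole at least $\size{\states}\max(K', 2B)$ of them converge to a common state $q^* \in \states$, giving $\config''(q^*) \ge K'$. We then route an extra agent from $\config + \vec{q}$ along the same flow as one of these agents; every accelerated step along that flow remains firable because the original agents already witness the required source and observation conditions (an accelerated step $\step{t^k}$ simply becomes $\step{t^{k+1}}$, and the transition sequence read by $\buchiaut$ is unchanged). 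The resulting run reaches $(\config'' + \vec{q^*}, \bstate'')$, which lies in $\cSet$ by $K'$-blindness. The \emph{remove an agent} direction is handled by a dual pumping argument: since $\config(q) \ge 2B$, the structural bound from Section~\ref{sec:structural-bounds} forces at least two agents in $q$ to follow the \emph{same} flow through $\run$, so one of them can be deleted (each accelerated step $\step{t^k}$ becomes $\step{t^{k-1}}$ with $k - 1 \ge 1$, again preserving both firability and the transition sequence read by $\buchiaut$), yielding a run from $(\config, \bstate)$ into $\cSet$ after a final application of $K'$-blindness.

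The proof for $\poststar{\cSet}$ is entirely symmetric: adding an agent to the target corresponds to injecting a fresh agent at the start of a run reaching $(\config, \bstate)$ and routing it along a duplicable flow that ends in $q$ (possible because $\config(q) \ge 2B$), while removing an agent corresponds to excising the trajectory of one of two agents following a common flow terminating in $q$; the $\size{\states}^2$ factor in $K$ handles the pigeonhole bookkeeping that ensures the modified source configuration still lies in $\cSet$ via $K'$-blindness.

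The central obstacle is the structural bound $B$ itself, which controls how many agents in a given state are needed before two of them are guaranteed to follow the same flow when the IOPP dynamics run in parallel with a deterministic Rabin automaton. Establishing $B$ requires a joint analysis of transfer flows of agents through $\states$ together with the sequence of transitions taken by $\buchiaut$; this is the content of Section~\ref{sec:structural-bounds}, and the doubly-exponential dependency of $B$ on $\size{\bstates}$ reflects the combinatorial cost of coupling these two dynamics. Once $B$ is in hand, the $K$-blindness statement reduces to the careful bookkeeping sketched above, combining the structural bound with the $K'$-blindness hypothesis.
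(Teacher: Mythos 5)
Your high-level strategy matches the paper's: combine the structural bound~$2B$ with a pigeonhole argument and~$K'$-blindness of~$\cSet$. The decomposition by adding/removing an agent at~$q$ from both ends of a witnessing run is also the same. However, there is a real gap in how you invoke the structural bound.

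You claim that ``the structural bound from Section~\ref{sec:structural-bounds} forces at least two agents in~$q$ to follow the same flow through~$\run$.'' This misreads \cref{structural-theorem-flows}. That theorem does not say anything about two agents sharing a trajectory in a concrete accelerated run; agents in a long run can all follow pairwise-distinct trajectories no matter how many of them start in~$q$. What the theorem does say is that any reachability in the product system is witnessed by a \emph{transfer flow}~$\atf = (f,\bstate,\bstate')$ of weight at most~$2B$, where $f$ is a routing matrix (for each pair $(r,q')$ a lower bound on agents sent from~$r$ to~$q'$, or~$\dummysymb$). The paper's argument then operates on the \emph{step witness}~$g \geq f$ of $\prodconfig_\cSet \flowstep{\atf} \prodconfig$: since $\sum_r g(r,q) = \config(q) \geq K$, pigeonhole gives some~$r$ with $g(r,q) > 2B \geq f(r,q)$ (and $\geq K'$ simultaneously); decrementing $g(r,q)$ by one still dominates~$f$, so by \cref{sequence-transitions-product-tf} there is a run from $\prodconfig_\cSet - \vec{r}$ to $\prodconfig - \vec{q}$, and $\prodconfig_\cSet - \vec{r} \in \cSet$ by $K'$-blindness. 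No claim about shared trajectories is needed or justified.

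This matters because your operational rendering --- edit each accelerated step $\step{t^k}$ to $\step{t^{k-1}}$ along the trajectory of one of the two supposedly flow-sharing agents --- would fail if at some step of that trajectory the original run used $k=1$: then $\step{t^{k-1}}$ is not a legal accelerated step, and there is no guarantee the original run decomposes differently. The transfer-flow abstraction sidesteps this entirely by replacing the run wholesale (via \cref{sequence-transitions-product-tf}) rather than editing it in place. Your ``add an agent'' direction is safer (IOPP monotonicity really does let one duplicate a trajectory, turning $\step{t^k}$ into $\step{t^{k+1}}$), and the paper in fact handles both directions uniformly with the same step-witness manipulation, which is why it reads cleaner. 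To repair your write-up, replace the ``two agents follow the same flow'' step by the explicit decrement-the-step-witness argument, and invoke \cref{poststar-chacterization} (or its $\prestar{\cdot}$ analogue) to control the transfer flow's weight.
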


This theorem crucially relies on the immediate observation assumption, its proof is technical and presented in \cref{sec:structural-bounds}.
Note that \(K\) is doubly-exponential in \(\size{\states}\) but polynomial in \(\size{\bstates}\) and in \(K'\), so that this bound is doubly-exponential in \(\size{\varphi}\) if we let \(\buchiaut = \ltlaut{\varphi}\) using \cref{ltl-rabin}.
Let us apply this result to \(\satset{\exists \run .
	\, \varphi}\):

\begin{lemma}
	\label{thm:ltl-k-blind}
	Set \(\satset{\exists \run .
		\, \varphi}\) is \(K\)-blind with
	\(K\) doubly-exponential in \(\size{\prot}\)  and \(\size{\varphi}\).
\end{lemma}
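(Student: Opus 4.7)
The plan is to apply the characterization \(\satset{\exists \run . \, \varphi} = \{\config \mid (\config, \binitstate) \in \ltlgre\}\) from \cref{lm:characterization} and push the blindness property outward through the structure of \(\ltlgre\) using the closure results of \cref{lm:k-blind-boolean} and \cref{thm:k-blind-post}. By \cref{ex:initial-and-winning}, each set \(\prodconfigsproj{F}\) and \(\prodconfigsproj{G}\) is \(0\)-blind. Writing \(B\) for the quantity appearing in \cref{thm:k-blind-post}, each application of \(\prestar{\cdot}\) to a \(K'\)-blind set produces a \(\size{\states}^2 \max(K', 2B)\)-blind set, while complement and finite unions or intersections preserve blindness.

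Concretely, for each Rabin pair \((F, G) \in \bwinning\), \(\setcomplement{\prestar{\prodconfigsproj{F}}}\) is \(K_1\)-blind for \(K_1 \deff 2 \size{\states}^2 B\), and similarly \(\setcomplement{\prestar{\setcomplement{\prestar{\prodconfigsproj{G}}}}}\) is \(K_2\)-blind for \(K_2 \deff \size{\states}^2 \max(K_1, 2B)\). Intersecting these two sets and then taking the finite union over \(\bwinning\) preserves \(K_2\)-blindness, and one final outer application of \(\prestar{\cdot}\) gives that \(\ltlgre\) itself is \(K_3\)-blind for \(K_3 \deff \size{\states}^2 \max(K_2, 2B)\). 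Since \(\satset{\exists \run . \, \varphi}\) is the slice of \(\ltlgre\) at the fixed control state \(\binitstate\), the \(K_3\)-blindness transfers directly: if \(\config(q) \ge K_3\) then \(\config \in \satset{\exists \run . \, \varphi}\) precisely when \((\config, \binitstate) \in \ltlgre\), which by blindness of \(\ltlgre\) is equivalent to \((\config + \vec{q}, \binitstate) \in \ltlgre\) and hence to \(\config + \vec{q} \in \satset{\exists \run . \, \varphi}\).

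For the complexity, instantiating \(\buchiaut = \ltlaut{\varphi}\) via \cref{ltl-rabin} makes \(\size{\bstates}\), as well as the number of Rabin pairs, doubly-exponential in \(\size{\varphi}\). Because \(B\) is doubly-exponential in \(\size{\states}\) but only polynomial in \(\size{\bstates}\), the three successive constant-factor blowups above preserve the doubly-exponential regime, so that \(K_3\) is doubly-exponential in both \(\size{\prot}\) and \(\size{\varphi}\). I do not expect a real obstacle here: the only point worth flagging is the necessity of keeping control size and protocol size tracked separately when invoking \cref{thm:k-blind-post}, for a naive accounting that placed \(\size{\bstates}\) in the exponent would yield a triply-exponential bound. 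The substantive work has already been absorbed into the structural-bound theorem \cref{thm:k-blind-post}, and the present lemma is a bounded-depth composition on top of it.
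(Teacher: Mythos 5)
Your proposal is correct and follows essentially the same approach as the paper: invoke \cref{lm:characterization}, observe \(0\)-blindness of \(\prodconfigsproj{F}\) and \(\prodconfigsproj{G}\) from \cref{ex:initial-and-winning}, and propagate blindness outward through \(\ltlgre\) by the closure lemmas \cref{thm:k-blind-post} and \cref{lm:k-blind-boolean}. The only difference is presentational: the paper compresses the bounded-depth composition into ``the result follows by iterative applications,'' whereas you track the intermediate constants \(K_1\), \(K_2\), \(K_3\) explicitly and flag the point about keeping \(\size{\bstates}\) out of the exponent, both of which are sound.
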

\begin{proof}
	By \cref{lm:characterization} we find that \(\satset{\exists \run .
		\, \varphi} \times \set{\binitstate} = \ltlgre\).
	\(\prodconfigsproj{F}\) and \(\prodconfigsproj{G}\) are \(0\)-blind for each pair \((F, G) \in \bwinning\).
	The result follows by iterative applications of \cref{thm:k-blind-post} and \cref{lm:k-blind-boolean}.
\end{proof}

\subsection{LTL and HyperLTL Verification}
We now apply the results from the previous sections to verification of LTL\(\setminus \ltlnext\) and verification of monadic HyperLTL\(\setminus \ltlnext\) for IOPP; we prove that both problems are decidable and in 2-\EXPSPACE.
For LTL\(\setminus \ltlnext\), \cref{thm:ltl-k-blind} shows that we only need to check emptiness of a \(K\)-blind set for \(K\) bounded doubly-exponentially.

\begin{theorem}
	\label{thm:ltl-iopp}
	Verification of LTL\(\setminus \ltlnext\) for IOPP is in 2-\EXPSPACE, and the same is true for its existential variant.
\end{theorem}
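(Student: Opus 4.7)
The plan is to reduce LTL$\setminus \ltlnext$ verification to membership tests in the $K$-blind set $\satset{\exists \run.\neg \varphi}$, and to exploit the doubly-exponential bound on $K$ provided by Lemma~\ref{thm:ltl-k-blind} in order to keep the overall computation within 2-EXPSPACE. By definition, $\PP \modelsforall \varphi$ is equivalent to $\initialconfigs \cap \satset{\exists \run.\neg \varphi} = \emptyset$, and dually $\PP \modelsexists \varphi$ is equivalent to $\initialconfigs \not\subseteq \satset{\exists \run.\neg \varphi}$. Both problems therefore reduce to deciding, for each candidate initial configuration $\config_0$, whether $\config_0 \in \satset{\exists \run.\neg \varphi}$, and then combining the answers with a single scan over initial configurations.

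By Lemma~\ref{thm:ltl-k-blind} applied to $\neg \varphi$, this set is $K$-blind with $K$ doubly-exponential in $|\PP|+|\varphi|$; combined with the $1$-blindness of $\initialconfigs$ (Example~\ref{ex:initial-and-winning}) and Lemma~\ref{lm:k-blind-boolean}, it suffices to enumerate the \emph{capped} initial configurations $\config$ with $\config(q)=0$ for $q \notin I$ and $\config(q)\in \{0,\dots,K\}$ for $q\in I$. There are doubly-exponentially many such candidates, each encodable in singly-exponential space as a vector of $|I|$ counts of bit-length $\log K$. The outer loop of the algorithm iterates through these candidates one at a time, reusing space.

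For each candidate $\config$, membership in $\satset{\exists \run.\neg \varphi}$ is tested via the characterization of Lemma~\ref{lm:characterization} applied to the product system $\prodsystem=(\PP,\ltlaut{\neg\varphi})$. The set $\ltlgre$ is a Boolean combination of $\preop^*$-expressions on $\prodsystem$, and by iterating Theorem~\ref{thm:k-blind-post} and Lemma~\ref{lm:k-blind-boolean} every intermediate set stays $K'$-blind for a $K'$ that remains doubly-exponential (a constant-depth composition of a polynomial-in-$K'$ update with a doubly-exponential additive term). Consequently each $\preop^*$ query can be evaluated on the finite ``capped'' product graph of doubly-exponential size, which is traversable on the fly in singly-exponential space using nondeterministic reachability together with Savitch's theorem; Boolean operations and complements preserve this budget since they operate pointwise on capped representatives.

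The main obstacle I expect is justifying that $\preop^*$ on a $K'$-blind set $S$ is really computable by reachability restricted to $K'$-capped configurations of $\prodsystem$: whenever a witness $(\config,\bstate) \to^*(\config_t,\bstate_t)$ with $(\config_t,\bstate_t)\in S$ exists, one must exhibit a witness staying in the capped region whose endpoint is equivalent under $K'$-blindness to $(\config_t,\bstate_t)$. This will rely on the immediate-observation monotonicity already used to establish Theorem~\ref{thm:k-blind-post}: agents exceeding the threshold $K'$ in any state behave as an inert reservoir that can be added or removed without affecting either the firing of transitions along the run or the membership of the endpoint in $S$. Combining these pieces, the full procedure fits in 2-EXPSPACE for both the universal and existential variants.
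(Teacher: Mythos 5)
Your overall strategy matches the paper's: reduce to the existential question, use Lemma~\ref{thm:ltl-k-blind} and Lemma~\ref{lm:k-blind-boolean} to bound candidate initial configurations by $K$, guess a capped $\config_0$, and evaluate the characterization of Lemma~\ref{lm:characterization} by stepwise reachability in the product system. However, the ``main obstacle'' you flag is a phantom, and the fix you sketch for it is wrong in spirit. You do not need a lemma justifying $K'$-capped reachability via an ``inert reservoir'' or immediate-observation monotonicity. Population-protocol steps conserve the total number of agents, so once $\config_0$ is fixed, every configuration on every run from $(\config_0,\binitstate)$ lies in the finite set $\prodconfigs_{\config_0}=\set{(\config,\bstate) : |\config|=|\config_0|}$; this set has doubly-exponential cardinality, its elements are encodable in singly-exponential space, and each $\preop^*$ query is plain graph reachability on it, performed on the fly by guessing steps and keeping only the current configuration. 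No capping of intermediate configurations is ever required, and no monotonicity argument is invoked at this stage. Your reservoir intuition is also not sound as stated: excess agents in a state are not inert and can participate in transitions, and $K$-blindness of Theorem~\ref{thm:k-blind-post} is a property of certain \emph{sets} being invariant under adding agents above a threshold, not a license to delete agents from a run while preserving firability. Replacing that hand-wave with agent-count conservation closes the gap; the rest of the proposal, including the observation that constructing the Rabin automaton dominates the cost, is sound.
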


\begin{proof}
	By Savitch's Theorem, we can present a non-deterministic procedure.
	Let \(\varphi\) be an LTL\(\setminus \ltlnext\) formula, and \(\PP\) an IOPP.
	We construct \(\ltlaut{\varphi}\) using \cref{ltl-rabin}; for this, we pay a doubly-exponential cost in \(\size{\varphi}\), which is the most costly part of the procedure.
	We work in the product system \(\prodsystem \deff (\prot, \ltlaut{\varphi})\).

	Observe that \(\PP \modelsforall \varphi\) if and only if \(\initialconfigs \cap \satset{\exists \run .
		\, \neg \varphi} = \emptyset\), so that it suffices to consider the existential variant.
	We therefore want to decide whether \linebreak \(\satset{\exists \run.
		\,\varphi} \cap \initialconfigs \ne \emptyset\).
	The set \(\initialconfigs\) is \(1\)-blind; by \cref{thm:ltl-k-blind} and \cref{lm:k-blind-boolean}, \(\initialconfigs \cap \satset{\exists \run .
		\, \varphi}\) is \(K\)-blind for \(K\) doubly-exponential in the size of \(\prot\) and in the size of \(\varphi\).

	Hence, \(\initial \cap \satset{\exists \run .
		\, \varphi} \ne \emptyset\) if and only if it contains \(\config_0\) such that
	\(\config_0(q) \le K\) for all \(q \in \states\).
	We guess such a configuration \(\config_0\).
	We can write \(\config_0\) in binary, and thus in exponential space.
	Checking if \(\config_0 \in \initial\) is immediate.
	By \cref{lm:characterization}, we can check if \(\config_0 \in \satset{\exists \run .
		\, \varphi}\) by checking whether, in the product system \(\prodsystem = (\prot, \ltlaut{\varphi})\), \linebreak
	\((\config_0, \binitstate) \in \ltlgre = \bigcup_{(F, G) \in \bwinning} \preop^* \left(
	\setcomplement{\prestar{\prodconfigsproj{F}}} \cap \setcomplement{\prestar{\setcomplement{\prestar{\prodconfigsproj{G}}}}} \right)\).

	We guess a Rabin pair \((F, G) \in \bwinning\).
	We only need to consider configurations in \(\prodconfigs_{\config_0} \deff \set{(\config, \bstate) \in \prodconfigs \mid |\config|=|\config_0| }\).
	Given a set \(\cSet \subseteq \prodconfigs\) whose membership can be checked in 2-\EXPSPACE for configurations in \(\prodconfigs_{\config_0}\), checking whether a configuration \(\prodconfig \in \prodconfigs_{\config_0}\) is in \(\prestar{\cSet}\) can also be done in 2-\EXPSPACE: guess a run starting at \(\prodconfig\), step by step.
	After each step, check if the current configuration \(\prodconfig'\) is in \(\cSet\).
	We only remember the previous configuration and the current one; checking the step can be done in 2-\EXPSPACE because we have constructed \(\ltlaut{\varphi}\) and because, in the protocol, a step corresponds to simple arithmetic operations.
	For each \(H\in\set{F,G}\), checking whether a configuration \(\prodconfig \in \prodconfigs_{\config_0}\) is in \(\prodconfigsproj{H}\) is easy.
	Therefore, checking whether \(\prodconfig \in \prodconfigs_{\config_0}\) is in \(\prestar{\prodconfigsproj{H}}\) can be done in 2-\EXPSPACE.
	By iterating this technique and treating Boolean operations in a natural manner, we check whether \((\config_0, \binitstate) \in \preop^* ( \setcomplement{\prestar{\prodconfigsproj{F}}} \cap \setcomplement{\prestar{\setcomplement{\prestar{\prodconfigsproj{G}}}}})\).

\end{proof}

Let \(\psi\) be a HyperLTL formula over \(\transitions\), we write \(\satset{\psi} \deff \set{\config \in \configs \mid \config \models \psi}\).

\begin{lemma}
	\label{thm:full-k-blind}
	Let \(\psi= \aquantif_1 \run_1.
	\ldots \aquantif_k \run_k. \varphi\) be a monadic HyperLTL\(\setminus \ltlnext\) formula.
	Set \(\satset{\psi}\) is \(K\)-blind for \(K\) doubly-exponential in \(\size{\prot}\) and \(\size{\varphi}\).
\end{lemma}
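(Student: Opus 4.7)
The plan is to reduce the monadic case to the single-run case already handled by \cref{thm:ltl-k-blind}, by pushing the quantifiers down through the Boolean decomposition of \(\varphi\). Write \(\varphi = B(\varphi_1,\ldots,\varphi_n)\) where \(B\) is a Boolean combination and each \(\varphi_i\) refers to a single run variable \(\run_{f(i)}\). For each \(j \in \{1,\ldots,k\}\), let \(S_j = \{i : f(i)=j\}\), and for a fair run \(\run\) define its \emph{type at variable \(j\)} as the tuple \(v_j(\run) = (\mathbf{1}[\run \models \varphi_i])_{i \in S_j} \in \{0,1\}^{S_j}\). Set \(\tau_j(\config) \deff \{v_j(\run) : \run \in \fairrunsfrom{\config}\} \subseteq \{0,1\}^{S_j}\).

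The first key step is to show that \(\config \modelsHLTL \psi\) depends only on the tuple \((\tau_1(\config),\ldots,\tau_k(\config))\). Indeed, since each \(\varphi_i\) reads only the run assigned to \(\run_{f(i)}\), evaluating \(\psi\) amounts to choosing, for each \(j\) in quantifier order, a valuation \(w_j \in \tau_j(\config)\) (existentially or universally according to \(Q_j\)) and then evaluating \(B\) on \((w_1,\ldots,w_k)\); any two configurations with the same profile therefore agree on \(\psi\). Consequently there is a fixed set \(\mathcal{G} \subseteq \prod_j 2^{\{0,1\}^{S_j}}\), determined only by \(B\) and the quantifier prefix, such that
\[
\satset{\psi} \;=\; \bigcup_{(T_1,\ldots,T_k) \in \mathcal{G}} \;\bigcap_{j=1}^{k}\; \bigl\{\config : \tau_j(\config) = T_j\bigr\}.
\]

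The second step expresses each basic set \(\{\config : \tau_j(\config) = T_j\}\) as a Boolean combination of single-run LTL\(\setminus \ltlnext\) sets. For \(w \in \{0,1\}^{S_j}\), define \(\Phi_w \deff \bigwedge_{i \in S_j,\, w_i=1} \varphi_i \,\land\, \bigwedge_{i \in S_j,\, w_i=0} \neg \varphi_i\); this is an LTL\(\setminus \ltlnext\) formula of size \(O(\size{\varphi})\). Then \(\{\config : w \in \tau_j(\config)\} = \satset{\exists \run.\, \Phi_w}\), so
\[
\bigl\{\config : \tau_j(\config) = T_j\bigr\} \;=\; \bigcap_{w \in T_j} \satset{\exists \run.\, \Phi_w} \;\cap\; \bigcap_{w \notin T_j} \setcomplement{\satset{\exists \run.\, \Phi_w}}.
\]
By \cref{thm:ltl-k-blind} each set \(\satset{\exists \run.\, \Phi_w}\) is \(K\)-blind for a uniform \(K\) doubly-exponential in \(\size{\prot}\) and \(\size{\varphi}\); applying \cref{lm:k-blind-boolean} finitely many times (to the basic sets, their intersections over \(j\), and then the union over \(\mathcal{G}\)) yields that \(\satset{\psi}\) is \(K\)-blind with the same \(K\).

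The main obstacle is the first step, namely the rigorous justification that monadic quantifiers commute with the Boolean structure so that only the type profile matters. The cleanest route is an induction on the quantifier prefix: once it is observed that the set of admissible valuations \(\tau_j(\config)\) for variable \(\run_j\) is independent of the choices made for the other variables (again because each \(\varphi_i\) touches only one run), the semantics of \(\psi\) collapses to an alternating choice of elements from the \(\tau_j(\config)\) followed by a Boolean evaluation of \(B\), which is exactly the condition captured by \(\mathcal{G}\).
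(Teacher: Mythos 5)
Your proof is correct, and it rests on exactly the same two ingredients as the paper's: the single-run formulas \(\Phi_w\) (which coincide with the paper's \(\mathsf{Ev}_\nu\)), \cref{thm:ltl-k-blind} applied to \(\satset{\exists\run.\,\Phi_w}\), and closure under Boolean operations from \cref{lm:k-blind-boolean}. The difference is organizational. The paper proves the statement by induction on the quantifier prefix: assuming without loss of generality that \(Q_1 = \exists\) (via complementation), it writes \(\satset{\psi} = \bigcup_\nu \satset{\exists\run_1.\,\mathsf{Ev}_\nu} \cap \satset{\psi_\nu}\), where \(\psi_\nu\) is \(\psi\) with \(\run_1\) eliminated and the \(\varphi_i\) referring to it replaced by the constants \(\nu(i)\), and then invokes the induction hypothesis on the \((k-1)\)-quantifier formula \(\psi_\nu\). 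You instead give a one-shot argument: you introduce the ``type profile'' \((\tau_1(\config),\ldots,\tau_k(\config))\), observe that monadicity makes \(\config\models\psi\) a function of this profile alone (because each run variable ranges over the same menu \(\fairrunsfrom{\config}\) independently of the others, and \(\varphi\) factors through the types), and then write \(\satset{\psi}\) directly as a DNF over satisfying profiles, with each profile constraint a Boolean combination of the sets \(\satset{\exists\run.\,\Phi_w}\). Your version avoids the induction and the bookkeeping with \(\simplify{\varphi}{\nu}\) and \(\hyperneg{\psi}\), and makes it immediate that the \(K\)-blindness bound is uniform in \(k\) since the \(\Phi_w\) have size \(O(\size{\varphi})\). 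The paper's recursive decomposition, on the other hand, is what is reused in the complexity analysis of \cref{thm:hyperltl-iopp} to control the number of elementary sets appearing in the ``simple expression''; your DNF over profiles uses the same collection of elementary sets \(\satset{\exists\run.\,\Phi_w}\), so it could serve the same purpose, though the set \(\mathcal{G}\) itself may be doubly exponential in \(\max_j |S_j|\) and one would want to avoid enumerating it explicitly. Both routes establish the lemma with the same bound \(K\).
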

\begin{proof}
	We show \(K\)-blindness where \(K\) is the bound obtained when applying \cref{thm:ltl-k-blind} on \(\prot\) and on a formula of size linear in \(\size{\varphi}\).
	Hence, the bound does not depend on the number of quantifiers of \(\psi\).
	We proceed by induction on the number of quantifiers \(k \geq 1\).
	The base case \(k=1\) is proved by \cref{thm:ltl-k-blind}.
	Let \(k \geq 2\); suppose that the result holds for any monadic HyperLTL formula with \(k-1\) quantifiers.
	Let \(\psi = \aquantif_1 \run_1.
	\aquantif_2 \run_2. \ldots \aquantif_k \run_k. \varphi\) with \(\varphi\) described as a Boolean combination of \(\varphi_1\) to \(\varphi_n\), each referring to a single run variable.
	Note that \(\satset{\psi} = \configs \setminus \satset{\hyperneg{\psi}}\), where \(\hyperneg{\psi}\) is the formula obtained from \(\psi\) by transforming \(\forall\) quantifiers into \(\exists\) and vice versa, and by replacing the inner formula \(\varphi\) by \(\neg \varphi\).
	Therefore, we may assume that \(\aquantif_1 = \exists\).

	Suppose w.l.o.g.\ that \(\varphi_1\) to \(\varphi_m\) are the formulas that refer to \(\run_1\).
	For every valuation \(\nu: \nset{1}{m} \rightarrow \set{\true,\false}\), let \(\mathsf{Ev}_{\nu} \deff \bigwedge_{i=1}^m \varphi_i(\atrace) \Leftrightarrow \nu(i)\); note that \(\mathsf{Ev}_{\nu}(\run)\) only has run variable \(\run\).
	Let \(\simplify{\varphi}{\nu}\) denote the formula \(\varphi\) simplified assuming that, for all \(i \in \nset{1}{m}\), \(\varphi_i\) has truth value \(\nu(i)\).
	Note that \(\run_1\) does not appear in \(\simplify{\varphi}{\nu}\).
	Let \(\psi_\nu \deff \aquantif_2 \atrace_2 \, \ldots \aquantif_k \atrace_k .
	\, \simplify{\varphi}{\nu}\).
	Let \(\config \in \configs\); \(\config \in \satset{\exists \run_1 .
		\, \mathsf{Ev}_{\nu}}\) is equivalent to the existence of \(\run_1 \in \fairrunsfrom{\config}\) such that, for all \(i \in \nset{1}{m}\), \(\run_1 \models \varphi_i\) iff \(\nu(i)\) is true.
	In words, \(\config \in \satset{\exists \run_1 .
		\, \mathsf{Ev}_{\nu}}\) whenever there is \(\run_1 \in \fairrunsfrom{\config}\) that yields valuation \(\nu\).
	Also, \(\psi_\nu\) corresponds to \(\psi\) simplified under the assumption that run variable \(\run_1\) yields valuation \(\nu\); run variable \(\run_1\) does not appear in \(\psi_\nu\) and \(\psi_\nu\) does not need quantifier \(\aquantif_1\).
	We deduce that \(\satset{\psi} = \bigcup_{\nu: \nset{1}{m} \rightarrow \set{\true,\false}} \satset{\exists \run_1 .
		\, \mathsf{Ev}_{\nu}} \cap
	\satset{\psi_\nu}.\)

	For every \(\nu\), \(\psi_\nu\) only has \(k-1\) quantifiers; by induction hypothesis, \(\satset{\psi_\nu}\) is \(K\)-blind.
	This also holds for \(\satset{\exists \run_1.
		\, \mathsf{Ev}_{\nu}}\) because \(\mathsf{Ev}_{\nu}\) has size at most linear in \( \size{\varphi}\).
	Thanks to \cref{lm:k-blind-boolean}, we obtain that \(\satset{\psi}\) is \(K\)-blind.

\end{proof}

\begin{theorem}
	\label{thm:hyperltl-iopp}
	Verification of monadic HyperLTL\(\setminus \ltlnext\) for immediate observation population protocols is in 2-\EXPSPACE.
\end{theorem}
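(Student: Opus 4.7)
My plan is to combine the $K$-blindness of $\satset{\psi}$ (\cref{thm:full-k-blind}) with the 2-\EXPSPACE{} LTL procedure of \cref{thm:ltl-iopp} via a recursive unfolding of the outer quantifiers of $\psi$. First, I would reduce to the existential variant: since $\hyperneg{\psi}$ is itself a monadic HyperLTL$\setminus\ltlnext$ formula, deciding $\prot \modelsforall \psi$ is equivalent to deciding whether $\initial \cap \satset{\hyperneg{\psi}} = \emptyset$, so it suffices to decide whether $\initial \cap \satset{\psi} \neq \emptyset$ for an arbitrary monadic $\psi$. By \cref{thm:full-k-blind} and \cref{lm:k-blind-boolean}, this intersection is $K$-blind for $K$ doubly-exponential in $\size{\prot}$ and $\size{\varphi}$, hence if nonempty it contains a witness $\config_0$ with $\config_0(q) \leq K$ for every $q \in \states$. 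Such a $\config_0$ admits a binary encoding of exponential length. The algorithm nondeterministically guesses $\config_0$ of that size, checks $\config_0 \in \initial$ (trivial), and then tests $\config_0 \in \satset{\psi}$.

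The membership test proceeds by recursion on the number of quantifiers, mirroring the decomposition used to prove \cref{thm:full-k-blind}. After replacing $\psi$ by $\hyperneg{\psi}$ if necessary so that the outer quantifier is $\exists$, we rely on
\[
  \satset{\psi} \;=\; \bigcup_{\nu\colon \nset{1}{m} \to \set{\true,\false}} \satset{\exists \run_1.\, \mathsf{Ev}_\nu} \,\cap\, \satset{\psi_\nu},
\]
in which each $\psi_\nu$ is a monadic HyperLTL$\setminus \ltlnext$ formula with one fewer quantifier. The procedure sequentially enumerates the $2^m \leq 2^{\size{\varphi}}$ valuations $\nu$, tests $\config_0 \in \satset{\exists \run_1.\, \mathsf{Ev}_\nu}$ using \cref{thm:ltl-iopp}, and recursively tests $\config_0 \in \satset{\psi_\nu}$; it returns true as soon as a $\nu$ satisfies both. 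The base case $k = 0$ is trivial, and the dual case $\aquantif_1 = \forall$ is handled by the initial negation step.

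The main obstacle is the space-complexity analysis. Each recursion level invokes the LTL subroutine of \cref{thm:ltl-iopp}, which uses space doubly-exponential in $\size{\prot} + \size{\varphi}$; above this it keeps only the current $\config_0$ (exponential), the current valuation $\nu$ (polynomial in $\size{\varphi}$), and a pointer into the quantifier prefix. The recursion depth is at most $k \leq \size{\psi}$, so by reusing the LTL working space across successive choices of $\nu$ and across recursive calls, the total space remains doubly-exponential in $\size{\prot} + \size{\psi}$. The outer nondeterministic guess of $\config_0$ is removed by Savitch's theorem, since squaring a doubly-exponential function is again doubly-exponential, yielding the claimed 2-\EXPSPACE{} bound.
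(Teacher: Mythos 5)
Your proof is correct and follows essentially the same approach as the paper: reduce to the existential case via \(\hyperneg{\cdot}\), use \cref{thm:full-k-blind} to bound a witness \(\config_0\) that is guessed in binary, and test membership by unfolding the outer quantifier through the valuation-based decomposition from the proof of \cref{thm:full-k-blind}, discharging each \(\satset{\exists \run_1.\,\mathsf{Ev}_\nu}\) via \cref{thm:ltl-iopp}. The only cosmetic difference is that the paper materializes the full exponential-size Boolean ``simple expression'' over elementary sets before evaluating it, whereas you traverse it implicitly by recursion with space reuse; both yield the same 2-\EXPSPACE{} bound (just make explicit, as the paper's \cref{thm:full-k-blind} does, that the negation step is applied at \emph{every} level where the leading quantifier is \(\forall\), not only at the top).
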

\begin{proof}
	Again, we present a non-deterministic procedure.
	Let \(\psi\) be a HyperLTL formula; as in the proof of \cref{thm:ltl-iopp}, we may consider the existential case only, where one asks whether \(\satset{\psi} \cap \initialconfigs \ne \emptyset\).
	By \cref{thm:full-k-blind} and \cref{lm:k-blind-boolean}, \(\satset{\psi} \cap \initialconfigs\) is \(K\)-blind for some doubly-exponential \(K\), so that \(\satset{\psi} \cap \initialconfigs \ne \emptyset\) if and only if there is \(\config \in \satset{\psi} \cap \initialconfigs \cap \configs_{\leq K}\) where \(\configs_{\leq K} \deff \set{\config \mid \forall q, \, \config(q) \leq K}\).
	We guess such a \(\config \in \configs_{\leq K}\).
	We can write \(\config\) in binary, and thus in exponential space.
	It is easy to check that \(\config \in \initial\).
	Let \(\psi = \aquantif_1 \run_1.
	\ldots \aquantif_k \run_k. \varphi\) with \(\varphi\)  described as a Boolean combination of \(\varphi_1\) to \(\varphi_n\), each referring to a single run variable.
	For each \(j \in \nset{1}{k}\), let \(\ell_j\) be the number of \(\varphi_i\) that refer to run variable \(\run_j\).
	From the proof of \cref{thm:ltl-iopp}, we can compute a \emph{simple expression} for \(\satset{\psi}\) in the form of a Boolean combination of \emph{elementary sets} of the form \(\satset{\exists \run.
		\, \varphi'}\).
	Moreover, with a straightforward induction, this simple expression is composed of at most \(O(2^{\ell_1 + \dots + \ell_k})\) elementary sets, because the union over the possible valuations has \(2^{\ell_j}\) disjuncts during induction step \(j\); also, each elementary set formula has size linear in \( \size{\varphi}\).
	We compute, in exponential time, this simple expression.
	We check if \(\config \in \satset{\psi}\) by evaluating membership of \(\config\) in each elementary set with \cref{thm:ltl-iopp} using doubly-exponential space, and then evaluating the simple expression.

\end{proof}

%
%
\section{A Structural Bound in Product Systems}
\label{sec:structural-bounds}

This section is devoted to proving \cref{thm:k-blind-post}.
We rely on the theory of well-quasi-orders (see, \emph{e.g.}, \cite{WQOlecturenotes}).
A \AP ""quasi-order"" is a set equipped with a transitive and symmetric relation.
In a "quasi-order" \((E,\lewqo)\), a set \(S \subseteq E\) is \AP ""upward-closed"" (resp.
""downward-closed"") when, for all \(s \in S\), for all \(t \in E\), if \(s \lewqo t\) then \(t \in S\) (resp.\ if \(t \lewqo s\) then \(s \in S\)); also, \(\upwardclosure{S} \deff \set{t \in E \mid \exists s \in S, s \lewqo t}\) is its ""upward-closure"" and \(\downwardclosure{S} \deff \set{t \in E \mid \exists s \in S, t \lewqo s}\) its  ""downward-closure"".
A \AP ""well-quasi-order"" is a "quasi-order" \((E,\lewqo)\) such that, for every infinite sequence \((x_i)_{i \in \nats}\) of elements of \(E\), there is \(i <j\) such that \(x_i \lewqo x_j\).
In a "well-quasi-order" \((E, \lewqo)\), any "upward-closed" set \(S\) has a finite set of minimal elements \(\minwqo{S}\), and \(S = \upwardclosure{\minwqo{S}}\).
\subsection{Transfer Flows}
We fix a product system \(\prodsystem = (\prot, \buchiaut)\) with \(\prot =: (\states, \transitions, \initialstates)\) and \(\buchiaut =: (\bstates, \btransitions, \binitstate, \bwinning)\).
We prove \cref{thm:k-blind-post} using \emph{transfer flows}, an abstraction representing the possibilities offered by sequences of transitions.
Let \(\natsanddummy \deff \nats \cup \set{\dummysymb}\); we extend \((\nats,\leq)\) to \((\natsanddummy,\leq)\) where \(\dummysymb\) is incomparable with integers: for all \(x \in \natsanddummy\), \(x \sim \dummysymb\) iff \(x = \dummysymb\) for \(\sim \, \in \set{\leq,\geq}\).
We extend addition by \(\dummysymb + x = x\) for all \(x \in \natsanddummy\).

\begin{definition}
	A ""transfer flow"" is a triplet \(\atf = (f,\abuchistate,\abuchistate')\) where \(f : \states^2 \to \natsanddummy\) and \(\abuchistate, \abuchistate' \in \bstates\).
	We denote by \(\transferflows\) the set of all "transfer flows".
\end{definition}

Intuitively, \((f,\abuchistate,\abuchistate')\) represents possible finite runs of \(\prodsystem\), with \(f\) the transfer of agents in \(\PP\) and \(\abuchistate,\abuchistate'\) the start and end states in \(\buchiaut\).
Having \(f(q_1, q_2) = \dummysymb\) represents the impossibility to send agents from \(q_1\) to \(q_2\), while \(f(q_1,q_2) = n\) represents the need to send at least \(n\) agents from \(q_1\) to \(q_2\); in this case, any number in \(\nsetinfinity{n}\) can be sent.
The values \(\abuchistate,\abuchistate'\) are called the ""control part@@tf"" of \(\atf\), while the function \(f\) is called the ""agent part@@tf"" of \(\atf\).
Given a transfer flow \(\atf =(f,\abuchistate,\abuchistate') \in \transferflows\), we define its \emph{weight} by \(\tfweight{\atf} \deff \sum_{q,q'} f(q,q')\).

We define a partial order \(\letf\) on \(\transferflows\) as follows.
For \(\atf_1 = (f_1,\bstate_1, \bstate_1')\) and \(\atf_2 = (f_2,\bstate_2,\bstate_2')\), we let \(\atf_1 \letf \atf_2\) when \(\bstate_1 = \bstate_1'\), \(\bstate_2 = \bstate_2'\) and, for all \(q,q'\), \(f_1(q,q') \leq f_2(q,q')\).
In particular, this requires that, for all \(q,q'\), \(f_1(q,q') = \dummysymb\) if and only if \(f_2(q,q') = \dummysymb\).
It is easy to see that \((\transferflows,\letf)\) is a well-quasi-order.
We highlight the following rule of thumb: \emph{smaller transfer flows are more powerful.
}
Indeed, when \(\atf_1 \letf \atf_2\), for \(q,q'\) such that \(f_1(q,q'), f_2(q,q') \ne\dummysymb\), \(f_1(q,q') \leq f_2(q,q')\): \(\atf_1\) allows to send from \(q\) to \(q'\) any number of agents in \(\nsetinfinity{f_1(q,q')}\) while \(\atf_2\) allows to send from \(q\) to \(q'\) any number of agents in \(\nsetinfinity{f_2(q,q')} \subseteq \nsetinfinity{f_1(q,q')}\).

\begin{definition}
	Given \(\prodconfig_1 = (\config_1, \bstate_1), \prodconfig_2 = (\config_2, \bstate_2)\in \prodconfigs\) and \(\atf = (f,\bstate,\bstate') \in \transferflows\), we let \(\prodconfig_1 \flowstep{\atf} \prodconfig_2\) when \(\bstate_1 = \bstate\), \(\bstate_2 = \bstate'\) and there is a \emph{step witness} \(g: \states^2 \to \natsanddummy\) such that \(f(q,q') \leq g(q,q')\) for all \(q,q'\in\states\), \( \config_1(q) = \sum_{q'} g(q,q')\) for all \(q\in\states\) and \( \config_2(q) = \sum_{q'} g(q',q)\) for all \(q\in\states\).
\end{definition}
Note that if \(\prodconfig_1 \flowstep{\atf} \prodconfig_2\), then \(\prodconfig_1 \flowstep{\atf'} \prodconfig_2\) for all \(\atf' \letf \atf\): again, smaller "transfer flows" are more powerful.
Intuitively, \(g\) corresponds to a transfer of agents in \(\prodsystem\) concretizing \(\prodconfig_1 \flowstep{\atf} \prodconfig_2\).
We now build transfer flows corresponding to transitions of \(\prodsystem\).
For each \(\atrans = (q_1, q_2) \trans{} (q_1, q_3) \in \transitions\), we define the set \(\transfersof{\atrans} \subseteq \transferflows\) that contains all "transfer flows" \((f,\bstate, \bstate')\) such that \(\btransitions(\bstate, \atrans) = \bstate'\) and:
\begin{itemize}[noitemsep,topsep=0pt,parsep=0pt,partopsep=0pt]
	\item if \(q_1 \ne q_2\) or \(q_1 \ne q_3\) then \(f(q_1,q_1) \geq 1, f(q_2,q_3) \geq 1\);
	\item if \(q_1 = q_2 = q_3\) then \(f(q_1,q_1) \geq 2\);
	\item for all \(q \ne q_1\) such that \((q,q) \ne (q_2,q_3)\), \(f(q,q) \geq 0\);
	\item for all \(q\ne q'\) such that \((q,q') \ne (q_2,q_3)\), \(f(q,q') = \dummysymb\).
\end{itemize}
That is, at least one agent is in \(q_1\), some agents are sent from \(q_2\) to \(q_3\) and the control part is changed according to \(t\).
The set \(\transfersof{\atrans}\) is upward-closed with respect to \(\mathord{\letf}\): the number of agents going from \(q_2\) to \(q_3\) can be arbitrarily large, which corresponds to an accelerated step of \(\prot\) using transition \(t\).

\begin{lemmarep}
	\label{flow-one-trans}
	For all \(\prodconfig, \prodconfig' \in \prodconfigs\), \(\atrans \in \transitions\), \(\prodconfig \step{\atrans} \prodconfig'\) iff there is \(\atf \in \transfersof{\atrans}\) s.t.
	\(\prodconfig \flowstep{\atf} \prodconfig'\).
\end{lemmarep}
\begin{appendixproof}
	Let \((q_1,q_2) \trans{\atrans} (q_1,q_3)\) denote transition \(\atrans\).
	Also, let \(\prodconfig=: (\config, \bstate)\) and \(\prodconfig' =: (\config, \bstate')\).
	First, observe that, if \(\btransitions(\bstate, \atrans) \ne \bstate'\) then both statements are false; we now consider that \(\btransitions(\bstate, \atrans) = \bstate'\).
	We start by treating the case \(q_2 = q_3\).
	In this case, we have \(\config = \config'\), and \(\prodconfig \accstep{\atrans} \prodconfig'\) if and only if \(\btransitions(\bstate, \atrans) = \bstate'\), \(\config(q_1) \geq 1\) and \(\config(q_2) \geq 1\) (\(\config(q_1) \geq 2\) if \(q_1 = q_2\)), which is equivalent to \(\prodconfig \flowstep{\atf} \prodconfig'\) with \(\atf = (f,\bstate,\bstate')\) the minimal element of \(\transfersof{\atrans}\), \emph{i.e.}, the one such that \(f(q_1,q_1) = 1\) and \(f(q_2,q_2) = 1\) (\(f(q_1,q_1) = 2\) if \(q_1 = q_2\)).

	We now assume that \(q_2 \ne q_3\).
	First, assume that \(\prodconfig \accstep{\atrans} \prodconfig'\); by definition of the semantics of the product system, there exists \(k \geq 1\) such that \(\prodconfig \step{\atrans^k} \prodconfig'\).
	Because \(q_2 \ne q_3\), we have \(k \leq \config(q_2)\).
	Let \(n \deff \size{\config} = \size{\config'}\).
	We define \(f : \states^2 \to \natsanddummy\) as follows.
	We let \(f(q_2,q_3) := k\), \(f(q_2,q_2) := \config(q_2) -k\), \(f(q,q) := \config(q)\) for all \(q \ne q_2\) and \(f(q,q') := \dummysymb\) otherwise.
	We have \(\atf := (f,\bstate, \bstate') \in \transfersof{\atrans}\), indeed: \(k \geq 1\); \(\config(q_1) \geq 1\) so that \(f(q_1,q_1) \geq 1\); \(\config(q_2) \geq k\) so that \(f(q_2,q_2) \geq 0\); if \(q_1 =q_2\), \(\config(q_1) \geq k+1\) so that \(f(q_1,q_1) \geq 1\).
	Moreover, we have \(\prodconfig \flowstep{\atf} \prodconfig'\), as it suffices to consider \(g = f\) as witness (and the control parts match).

	Conversely, assume that there is \(\atf = (f,\bstate, \bstate') \in \transfersof{\atrans}\) such that \(\prodconfig \flowstep{\atrans} \prodconfig'\).
	Let \(g \geq f\) be a witness that \(\prodconfig \flowstep{\atrans} \prodconfig'\); let \(k \deff g(q_2,q_3) \geq f(q_2,q_3) \geq 1\).
	We claim that \(\prodconfig \step{\atrans^k} \prodconfig'\).
	We have that, for all \(q,q'\) such that \(q\ne q'\) and \((q,q') \ne (q_2,q_3)\), \(f(q,q') = \dummysymb\) hence \(g(q,q') = \dummysymb\), so that \(\config(q) = \config'(q)\) for all \(q \notin \set{q_2,q_3}\).
	Also, we have \(\config'(q_1) \geq 1\) because \(g(q_1,q_1) \geq f(q_1,q_1) \geq 1\).
	Moreover, if \(q_1 =q_2\) then \(\config'(q_1) \geq g(q_1,q_1) + g(q_2,q_3) \geq k+1\), so that firing \(t\) the first \(k-1\) times from \(\config\) leaves at least two agents on \(q_1\) which allows to fire \(\atrans\) once more.
	Finally, we have \(\config'(q_3) - \config(q_3) = f(q_2,q_3) = k\) and \(\config(q_2) - \config'(q_2) = f(q_2,q_3) = k\).
	This proves that \(\config \step{\atrans^k} \config'\) in \(\prot\); because the control parts match, we conclude that \(\prodconfig \step{\atrans} \prodconfig'\) in the product system.
\end{appendixproof}

We define the product set \(\atf_1 \tftimes \atf_2 \subseteq \transferflows\) of two "transfer flows".
This set is meant to encode the possibilities given by using \(\atf_1\) followed by \(\atf_2\).
Let \(\atf_1 = (f_1,\bstate_1,\bstate_1'), \atf_2 = (f_2,\bstate_2,\bstate_2') \in \transferflows\).
If \(\bstate_1' \ne \bstate_2\), then we set \(\atf_1 \tftimes \atf_2 = \emptyset\) .
Assume now \(\bstate_1' = \bstate_2\).
The set \(\atf1 \tftimes \atf_2\) contains all "transfer flows" of the form \((h,\bstate_1,\bstate_2')\) for which there is a \emph{product witness}
\(H : \states^3 \to \natsanddummy\) such that:
\begin{enumerate}[label = (prod.\roman*), leftmargin = 2cm,topsep=3pt]
	\item \label{composition-condition1} for all \((q_1,q_3)\), \(\sum_{q_2}
	      H(q_1,q_2,q_3) = h(q_1,q_3)\); \item \label{composition-condition2} for all \((q_1, q_2)\), \(\sum_{q_3} H(q_1,q_2,q_3) \geq f_1(q_1,q_2)\); \item \label{composition-condition3} for all \((q_2,q_3)\), \(\sum_{q_1} H(q_1,q_2,q_3) \geq f_2(q_2,q_3)\).
\end{enumerate}
In particular, for all \(q_1,q_2\), \(f_1(q_1,q_2) = \dummysymb\) if and only if, for all \(q_3\), \(H(q_1,q_2,q_3) = \dummysymb\).
Similarly, \(f_2(q_2,q_3) = \dummysymb\) if and only if, for all \(q_1\), \(H(q_1,q_2,q_3) = \dummysymb\).
We extend \(\tftimes\) to sets of "transfer flows": for \(F, F' \subseteq \transferflows\), \(\tfset \tftimes \tfset' \deff \bigcup_{\atf \in \tfset, \atf' \in \tfset'} \atf \tftimes \atf'\).

\begin{lemmarep}%
	\label{prod-upward-closed}
	Let \(\atf_1, \atf_2, \atf_3 \in \transferflows\).
	We have the following properties:
	\begin{enumerate}[label = "\upshape(\ref{prod-upward-closed}.\roman*)", leftmargin = 1cm,topsep=3pt]
		\item \label{basic1} the set \(\atf_1 \tftimes \atf_2\) is upward-closed with respect to \(\letf\);
		\item \label{basic2} for all \(\atf_1' \letf \atf_1\) and \(\atf_2' \letf \atf_2\), \(\atf_1 \tftimes \atf_2 \subseteq \atf_1' \tftimes \atf_2'\);
		\item \label{basic3} \(\tftimes\) is associative: \((\atf_1 \tftimes \atf_2) \tftimes \atf_3 = \atf_1 \tftimes (\atf_2 \tftimes \atf_3)\);
		\item \label{basic4} for every \(\atf \in \minwqo{\atf_1 \tftimes \atf_2}\), \(\tfweight{\atf} \leq \tfweight{\atf_1} + \tfweight{\atf_2}\).
	\end{enumerate}
\end{lemmarep}

\begin{appendixproof}
	\textbf{Proof of \ref{basic1}.}
	Let \(\atf = (h,\controlloc, \controlloc') \in \atf_1 \tftimes \atf_2\).
	We apply the definition to obtain a product witness \(H : \states^3 \to \natsanddummy\).
	Let \(\atf' = (h', \controlloc, \controlloc')\in \transferflows\) such that \(\atf \letf \atf'\).
	This implies that \(h \leq h'\).
	We define \(H' : \states^3 \to \natsanddummy\) as follows.
	Let \(q_1, q_3 \in \states\).
	If we have \(H(q_1,q_2,q_3) = \dummysymb\) for all \(q_2\) then we set \(H'(q_1,q_2,q_3) \deff \dummysymb\) for all \(q_2\).
	Suppose now that there is \(\tilde{q_2}\) such that \(H(q_1,\tilde{q_2},q_3) \ne \dummysymb\).
	This in particular implies, by \ref{composition-condition1}, that \(h(q_1,q_3) \ne \dummysymb\) therefore \(h'(q_1,q_3) \ne \dummysymb\).
	We set \(H'(q_1,\tilde{q_2},q_3) \deff H(q_1,\tilde{q_2},q_3) + h'(q_1,q_3) - h(q_1,q_3)\), and we set \(H'(q_1,q_2,q_3) = H(q_1,q_2,q_3)\) for every \(q_2 \ne \tilde{q_2}\).
	We claim that \(H'\) is a product witness that \((h',\controlloc,\controlloc') \in \atf_1 \tftimes \atf_2\).
	First, \(H'\) has the same \(\dummysymb\) values as \(H\), so that we have \(H' \geq H\) by construction.
	Note that \(H' \geq H\) requires that they have the same \(\dummysymb\) values (\(\dummysymb\) is incomparable with all integers), which would not hold if we had set \(H'(q_1,q_2,q_3) = H(q_1,q_2,q_3) + h'(q_1,q_3) - h(q_1,q_3)\) for some \(q_1, q_2\) and \(q_3\) such that \(h'(q_1,q_3) \ne \dummysymb\) but \(H(q_1,q_2,q_3) = \dummysymb\).
	We therefore have \(H'\) satisfies \ref{composition-condition2} and \ref{composition-condition3}.
	Also, for all \(q_1,q_3\), if \(h'(q_1,q_3) = \dummysymb\) then \(\sum_{q_2} H'(q_1,q_2,q_3) = \sum_{q_2} H(q_1,q_2,q_3) = h(q_1,q_3) = \dummysymb\).
	If \(h(q_1,q_3) \ne \dummysymb\) then \(\sum_{q_2} H'(q_1,q_2,q_3) = \sum_{q_2} H(q_1,q_2,q_3) + h'(q_1,q_3) - h(q_1,q_3) = h(q_1,q_3) + h'(q_1,q_3) - h(q_1,q_3) = h'(q_1,q_3)\).
	We have proved that \(H'\) satisfies \cref{composition-condition1} for \(h'\), so that \((h', \controlloc, \controlloc') \in \atf_1 \tftimes \atf_2\).

	\textbf{Proof of \ref{basic2}.}
	Let \(\atf = (h,\controlloc, \controlloc') \in \atf_1 \tftimes \atf_2\).
	We apply the definition to obtain a product witness \(H : \states^3 \to \natsanddummy\).
	Assume that we have \(h' : \states^2 \to \natsanddummy\) such that \(h \leq h'\).
	We increase the values of \(H\) to obtain \(H'\) such that, for all \((q_1,q_3)\), \(\sum_{q_2} H'(q_1,q_2,q_3) = h'(q_1,q_3)\).
	Because \(h\) and \(h'\) have the same \(\dummysymb\) component, we set \(H'\) to have the same \(\dummysymb\) components as \(H\).
	Let \(q_1,q_3 \in \states\) such that \(h(q_1,q_3) < h'(q_1,q_3) \ne \dummysymb\).
	There is \(q_2\) such that \(H(q_1,q_2,q_3) \ne \dummysymb\); we arbitrarily select such a state \(q_2\).
	We simply increase \(H(q_1,q_2,q_3)\) by \(h'(q_1,q_3) - h(q_1,q_3)\).
	Increasing the values will not violate the conditions about \(f\) and \(g\).
	We apply this operation with every pair \((q_1,q_3)\) and end up with a witness that \((h, \bstate_1,\bstate_2') \in \atf_1 \tftimes \atf_2\).

	\textbf{Proof of \ref{basic3}.}
	We now prove associativity of \(\tftimes\).
	Let \(\atf_i =: (f_i, \controlloc_i, \controlloc_i')\) for all \(i \in \set{1,2,3}\).
	If we have \(\controlloc_1' \ne \controlloc_2\) or \(\controlloc_2' \ne \controlloc_3\) then \((\atf_1 \tftimes \atf_2) \tftimes \atf_3 = \atf_1 \tftimes (\atf_2 \tftimes \atf_3) = \emptyset\).
	Suppose now that \(\controlloc_1' = \controlloc_2\) and \(\controlloc_2' = \controlloc_3\).

	Let \(T_{1,2,3}\subseteq \transferflows\) denote the set of "transfer flows" \(\atf = (f, \controlloc_1, \controlloc_3')\) for which there exists a function \(H : \states^4 \to \natsanddummy\) that satisfies the following properties:
	\begin{enumerate}[topsep=3pt]
		\item \label{associtivity-1} for all \(q_1, q_4\), \(\sum_{q_2,q_3}
		      H(q_1,q_2,q_3,q_4) = f(q_1,q_4)\); \item \label{associtivity-2} for all \(q_1,q_2\), \(\sum_{q_3,q_4} H(q_1,q_2,q_3,q_4) \geq f_1(q_1,q_2)\); \item \label{associtivity-3} for all \(q_2,q_3\), \(\sum_{q_1,q_4} H(q_1,q_2,q_3,q_4) \geq f_2(q_2,q_3)\); \item \label{associtivity-4} for all \(q_3,q_4\), \(\sum_{q_1,q_2} H(q_1,q_2,q_3,q_4) \geq f_3(q_3,q_4)\).
	\end{enumerate}
	We claim that \((\atf_1 \tftimes \atf_2) \tftimes \atf_3 = T_{1,2,3} = \atf_1 \tftimes (\atf_2 \tftimes \atf_3)\).

	We first prove that \((\atf_1 \tftimes \atf_2) \tftimes \atf_3 \subseteq T_{1,2,3}\).
	Let \(\atf = (f, \controlloc_1, \controlloc_3') \in (\atf_1 \tftimes \atf_2) \tftimes \atf_3\).
	Let \(\atf_{1,2} = (f_{1,2}, \controlloc_1, \controlloc_2')\in \atf_1 \tftimes \atf_2\) such that \(\atf \in \atf_{1,2} \tftimes \atf_3\); let \(G : \states^3 \to \natsanddummy\) be a product witness of that.
	We have \(\sum_{q_4} G(q_1,q_3,q_4) \geq f_{1,2}(q_1,q_3)\) for all \(q_1,q_3\), \(\sum_{q_1} G(q_1,q_3,q_4) \geq f_3(q_3,q_4)\) for all \(q_3,q_4\) and \(\sum_{q_3} G(q_1,q_3,q_4) = f(q_1,q_4)\) for all \(q_1,q_4\).
	Let \(F_{1,2}\) be a product witness that \(\atf_{1,2} \in \atf_1 \tftimes \atf_2\), \emph{i.e.}, \(\sum_{q_2} F_{1,2}(q_1,q_2,q_3) = f_{1,2}(q_1,q_3)\) for all \(q_1,q_3\), \(\sum_{q_3} F_{1,2}(q_1,q_2,q_3) \geq f_1(q_1,q_2)\) for all \(q_1,q_2\) and \(\sum_{q_1} F_{1,2}(q_1,q_2,q_3) \geq f_2(q_2,q_3)\) for all \(q_2,q_3\).
	For every \(q_1,q_3\), \(\sum_{q_4} G(q_1,q_3,q_4) \geq f_{1,2}(q_1,q_3) = \sum_{q_2} F_{1,2}(q_1,q_2,q_3)\).
	For all \(q_1,q_3\), if \(f_{1,2}(q_1,q_3) \ne \dummysymb\) then there is \(\tilde{q_2}\) such that \(F_{1,2}(q_1,\tilde{q_2},q_3) \ne \dummysymb\).
	Let \(F\) equal to \(F_{1,2}\) except that, for all \(q_1,q_3\) such that \(f_{1,2}(q_1,q_3) \ne \dummysymb\), we choose \(\tilde{q_2}\) such that \(F_{1,2}(q_1,\tilde{q_2},q_3) \ne \dummysymb\) and set \(F(q_1,\tilde{q_2}, q_3) \deff F_{1,2}(q_1,\tilde{q_2},q_3)+\sum_{q_4} G(q_1,q_3,q_4) - f_{1,2}(q_1,q_3)\).
	This way, \(F\) satisfies the same conditions as \(F_{1,2}\) related to \(f_1\) and \(f_2\) but also, for all \(q_1,q_3\), \(\sum_{q_2} F(q_1,q_2,q_3) = \sum_{q_4} G(q_1,q_3,q_4)\).
	To provide \(H\) that satisfies the conditions above, it suffices to build \(H: \states^4 \to \natsanddummy\) so that \(\sum_{q_4} H(q_1,q_2,q_3,q_4) = F(q_1,q_2,q_3)\) and \(\sum_{q_2} H(q_1,q_2,q_3,q_4) = G(q_1,q_3,q_4)\).
	Indeed, this would imply conditions \ref{associtivity-1} and \ref{associtivity-4} thanks to \(F\) and conditions \ref{associtivity-2} and \ref{associtivity-3} thanks to \(G\).

	We now prove the following statement:
	\begin{quote}
		For every \(F: \states^3 \to \natsanddummy\) and \(G: \states^3 \to \natsanddummy\), if \(\sum_{q_2} F(q_1,q_2,q_3) = \sum_{q_4} G(q_1,q_3,q_4)\) for every \(q_1,q_3\), then there is \(H: \states^4 \to \natsanddummy\) such that \(\sum_{q_4} H(q_1,q_2,q_3,q_4) = F(q_1,q_2,q_3)\) and \(\sum_{q_2} H(q_1,q_2,q_3,q_4) = G(q_1,q_3,q_4)\).
	\end{quote}

	First, if \(F\) and \(G\) are constant equal to \(\dummysymb\) then we set \(H\) constant equal to \(\dummysymb\).
	Suppose now that it is not the case; let \(n \deff \sum_{q_1,q_2,q_3} F(q_1,q_2,q_3) = \sum_{q_1,q_3,q_4} G(q_1,q_3,q_4) \in \nats\).
	We proceed by induction on \(n\).

	If \(n=0\) then all values in \(F\) and \(G\) are in \(\set{0,\dummysymb}\).
	We let \(H(q_1,q_2,q_3,q_4) \deff 0\) whenever both \(F(q_1,q_2,q_3) =0\) and \(G(q_1,q_3,q_4) =0\), and \(H(q_1,q_2,q_3,q_4) \deff \dummysymb\) otherwise.
	We claim that, for all \(q_1,q_2,q_3\), \(\sum_{q_4} H(q_1,q_2,q_3,q_4) = F(q_1,q_2,q_3)\).
	Let \(q_1,q_2,q_3 \in \states\); if \(F(q_1,q_2,q_3) = \dummysymb\) then \(H(q_1,q_2,q_3,q_4) = \dummysymb\) for all \(q_4\) hence \(\sum_{q_4} H(q_1,q_2,q_3,q_4) = \dummysymb\).
	Suppose now that \(F(q_1,q_2,q_3) = 0\).
	This implies \(\sum_{q_4} G(q_1,q_3,q_4) = 0\) therefore there is \(\tilde{q_4}\) such that \(G(q_1,q_3,\tilde{q_4}) =0\), so that \(H(q_1,q_2,q_3,\tilde{q_4})=0\) and \(\sum_{q_4} H(q_1,q_2,q_3,q_4) = 0\).
	Similarly, for every \(q_1,q_3,q_4\), if \(G(q_1,q_3,q_4) = \dummysymb\) then \(\sum_{q_2} H(q_1,q_2,q_3,q_4) = \dummysymb\) and if \(G(q_1,q_3,q_4) = 0\) then there is \(\tilde{q_2}\) such that \(F(q_1,\tilde{q_2},q_3) =0\) hence \(H(q_1,\tilde{q_2},q_3,q_4) =0\) and \(\sum_{q_2} H(q_1,q_2,q_3,q_4) =0\).

	Suppose now that \(n>0\).
	Let \(\tilde{q_1}, \tilde{q_3}\) such that \(\sum_{q_2} F(\tilde{q_1}, q_2,\tilde{q_3}) = \linebreak \sum_{q_4} G(\tilde{q_1},\tilde{q_3},q_4)>0\).
	Let \(\tilde{q_2}\) such that \(F(\tilde{q_1}, \tilde{q_2},\tilde{q_3})>0\) and \(\tilde{q_4}\) such that \(G(\tilde{q_1},\tilde{q_3},\tilde{q_4})>0\).
	Let \(F'\) equal to \(F\) except that \(F'(\tilde{q_1}, \tilde{q_2},\tilde{q_3}) \deff F(\tilde{q_1}, \tilde{q_2},\tilde{q_3})-1\) and let \(G'\) equal to \(G\) except that \(G'(\tilde{q_1},\tilde{q_3},\tilde{q_4}) \deff G(\tilde{q_1},\tilde{q_3},\tilde{q_4})-1\).
	We have \(\sum_{q_2} F'(q_1,q_2,q_3) = \sum_{q_4} G'(q_1,q_3,q_4)\) for all \(q_1\) and \(q_3\), and \(\sum_{q_1,q_2,q_3} F'(q_1,q_2,q_3) = \linebreak \sum_{q_1,q_2,q_3} F(q_1,q_2,q_3)-1 = n-1\).
	We apply the induction hypothesis on \(F'\) and \(G'\) to obtain \(H'\) such that \(\sum_{q_4} H'(q_1,q_2,q_3,q_4) = F'(q_1,q_2,q_3)\) for all \(q_1,q_2,q_3\) and \(\sum_{q_2} H'(q_1,q_2,q_3,q_4) = G'(q_1,q_3,q_4)\) for all \(q_1,q_3,q_4\).
	It suffices to let \(H\) equal to \(H'\) except that \(H(\tilde{q_1}, \tilde{q_2}, \tilde{q_3},\tilde{q_4}) = H'(\tilde{q_1}, \tilde{q_2}, \tilde{q_3},\tilde{q_4})+1\).
	Note that it could be that \(H'(\tilde{q_1}, \tilde{q_2}, \tilde{q_3},\tilde{q_4}) = \dummysymb\), in which case \(H(\tilde{q_1}, \tilde{q_2}, \tilde{q_3},\tilde{q_4}) = 1\).
	We know that \(F'(\tilde{q_1}, \tilde{q_2}, \tilde{q_3}) \ne \dummysymb\) therefore \(\sum_{q_4} H'(\tilde{q_1}, \tilde{q_2}, \tilde{q_3},q_4) \ne \dummysymb\) so that we indeed have \(\sum_{q_4} H(\tilde{q_1}, \tilde{q_2}, \tilde{q_3},q_4) = F'(\tilde{q_1}, \tilde{q_2}, \tilde{q_3})+1 = F(\tilde{q_1}, \tilde{q_2}, \tilde{q_3})\).
	With the same argument, \(\sum_{q_2} H'(\tilde{q_1}, q_2, \tilde{q_3},\tilde{q_4}) = G(\tilde{q_1}, \tilde{q_3},\tilde{q_4})\).
	This concludes the induction.

	We have proved that \((\atf_1 \tftimes \atf_2) \tftimes \atf_3 \subseteq T_{1,2,3}\).
	The fact that \(\atf_1 \tftimes (\atf_2 \tftimes \atf_3) \subseteq T_{1,2,3}\) follows by a symmetric argument.
	We claim that \(T_{1,2,3} \subseteq (\atf_1 \tftimes \atf_2) \tftimes \atf_3\).
	Indeed, let \(\atf \in T_{1,2,3}\) and let \(H: \states^4 \to \natsanddummy\) that satisfies conditions \ref{associtivity-1} to \ref{associtivity-4} for \(\atf\).
	Let \(f : (q_1,q_3) \mapsto \sum_{q_3,q_4} H(q_1,q_2,q_3,q_4)\), we have \((f,\controlloc_1,\controlloc_2') \in \atf_1 \tftimes \atf_2\) with \(F: (q_1,q_2,q_3) \mapsto \sum_{q_4} H(q_1,q_2,q_3,q_4)\) as product witness.
	Moreover, let \(g: (q_3,q_4) \mapsto \sum_{q_1,q_2} H(q_1,q_2,q_3,q_4)\); we have \(\atf_3 \letf (g,\controlloc_3, \controlloc_3')\).
	Finally, we have \(\atf \in (f,\controlloc_1, \controlloc_2') \tftimes (g,\controlloc_3, \controlloc_3')\) with \((q_1,q_3,q_4) \mapsto \sum_{q_2} H(q_1,q_2,q_3,q_4)\) as product witness, hence by \ref{basic2} we conclude that \(\atf \in (\atf_1 \tftimes \atf_2) \tftimes \atf_3\).
	This proves that \(T_{1,2,3} \subseteq (\atf_1 \tftimes \atf_2) \tftimes \atf_3\); a symmetric argument proves that \(T_{1,2,3} \subseteq \atf_1 \tftimes (\atf_2 \tftimes \atf_3)\).
	In the end, we obtain \((\atf_1 \tftimes \atf_2) \tftimes \atf_3 = \atf_1 \tftimes (\atf_2 \tftimes \atf_3) = T_{1,2,3}\).

	\textbf{Proof of \ref{basic4}.}
	Let \(\atf_1 = (f_1, \controlloc_1, \controlloc_2)\), \(\atf_2 = (f_2, \controlloc_2, \controlloc_3)\) and \(\atf= (f,\controlloc_1, \controlloc_3) \in \minwqo{\atf_1 \tftimes \atf_2}\); let \(H : \states^3 \to \natsanddummy\) be a product witness that \(\atf \in \atf_1 \tftimes \atf_2\).
	We know that \(\tfweight{\atf} = \sum_{q_1,q_3} f(q_1,q_3) = \sum_{q_1,q_2,q_3} H(q_1,q_2,q_3)\).
	We thus prove that \(\sum_{q_1,q_2,q_3} H(q_1,q_2,q_3) \leq \tfweight{\atf_1} + \tfweight{\atf_2}\).
	Suppose by contradiction that \(\sum_{q_1,q_2,q_3} H(q_1,q_2,q_3) > \tfweight{\atf_1} + \tfweight{\atf_2}\).
	We claim that there is \(H': \states^3 \to \natsanddummy\) such that:
	\begin{itemize}[noitemsep,topsep=0pt,parsep=0pt,partopsep=0pt]
		\item \(H' \leq H\),
		\item \(\sum_{q_1,q_2,q_3}
		      H'(q_1,q_2,q_3) < \sum_{q_1,q_2,q_3} H(q_1,q_2,q_3)\), \item \(\sum_{q_3} H'(q_1,q_2) \geq f_1(q_1,q_2)\) for all \(q_1,q_2\), \item \(\sum_{q_1} H'(q_1,q_2,q_3) \geq f_{2}(q_2,q_3)\) for all \(q_2,q_3\).
	\end{itemize}
	Indeed, if we have such a function \(H'\), then letting \(f' : (q_1,q_2) \mapsto H'(q_1,q_2,q_3)\), we would have \((f', \controlloc_1, \controlloc_3) \in \atf_1 \tftimes \atf_2\) and \((f', \controlloc_1, \controlloc_3) \letf \atf\), contradicting minimality of \(\atf\) in \(\atf_1 \tftimes \atf_2\).

	To build \(H'\), it suffices to prove the existence of \(\tilde{q_1},\tilde{q_2},\tilde{q_3}\) such that \linebreak\(\sum_{q_3} H(\tilde{q_1},\tilde{q_2}, q_3) > f_1(\tilde{q_1},\tilde{q_2})\) and \(\sum_{q_1} H(q_1,\tilde{q_2},\tilde{q_3}) > f_{2}(\tilde{q_2}, \tilde{q_3})\), so that we can set \(H'\) equal to \(H\) except that \(H'(\tilde{q_1},\tilde{q_2},\tilde{q_3}) = H(\tilde{q_1},\tilde{q_2},\tilde{q_3}) -1\).

	To find \(\tilde{q_1},\tilde{q_2}\) and \(\tilde{q_3}\), we prove the following statement:
	\begin{quote}
		For all \(h:\states^3 \to \natsanddummy\), \(g_1: \states^2 \to \natsanddummy\) and \(g_2: \states^2 \to \natsanddummy\) such that \(\sum_{q_3} h(q_1,q_2,q_3) \geq g_1(q_1,q_2)\) for all \(q_1,q_2\), \(\sum_{q_1} h(q_1,q_2,q_3) \geq g_2(q_2,q_3)\) for all \(q_2,q_3\) and \\\(\sum_{q_1,q_2,q_3} h(q_1,q_2,q_3) > \sum_{q_1,q_2} g_1(q_1,q_2) + \sum_{q_2,q_3} g_2(q_2,q_3)\), there are \(\tilde{q_1}, \tilde{q_2}\) and \(\tilde{q_3}\) such that \(\sum_{q_3} h(\tilde{q_1},\tilde{q_2},q_3) > g_1(\tilde{q_1},\tilde{q_2})\) and \(\sum_{q_1} h(q_1,\tilde{q_2}, \tilde{q_3}) > g_2(\tilde{q_2},\tilde{q_3})\).
	\end{quote}
	The proof is by induction on \(\sum_{q_1,q_2,q_3} h(q_1,q_2,q_3)\).
	The base case is when \linebreak \(\sum_{q_1,q_2,q_3} h(q_1,q_2,q_3) \allowbreak = 1\) and \(g_1\) and \(g_2\) only have value \(0\) and \(\dummysymb\), in which case it suffices to take \(\tilde{q_1},\tilde{q_2}, \tilde{q_3}\) such that \(h(\tilde{q_1}, \tilde{q_2}, \tilde{q_3}) = 1\).
	For the induction step, let \(r_1,r_2,r_3\) such that \(h(r_1,r_2,r_3) > 0\).
	This implies that \(g_1(r_1,r_2), g_2(r_2,r_3) \in \nats\).
	If \(g_1(r_1,r_2) =0\) and \(g_2(r_2,r_3) = 0\) then we let \((\tilde{q_1}, \tilde{q_2}, \tilde{q_3}) \deff (r_1,r_2,r_3)\) and we are done.
	Assume now that \(g_1(r_1,r_2) >0\) or \(g_2(r_2,r_3)>0\).
	Let \(h'\) equal to \(h\) except that \(h'(r_1,r_2,r_3) = h(r_1,r_2,r_3) -1\); let \(g_1'\) equal to \(g_1\) except if \(g_1(r_1,r_2)>0\) in which case \(g_1'(r_1,r_2) = g_1(r_1,r_2) - 1\); let \(g_2'\) equal to \(g_2\) except if \(g_2(r_2,r_3) >0\) in which case \(g_2'(r_2,r_3) = g_2(r_2,r_3) -1\).
	For every \((q_1,q_2) \ne (r_1,r_2)\), we have \(\sum_{q_3} h'(q_1,q_2,q_3) = \sum_{q_3} h(q_1,q_2,q_3) \geq g_1(q_1,q_2) = g_1'(q_1,q_2)\).
	Moreover, if \(g_1(r_1,r_2) = 0\) then \(g_1'(r_1,r_2) =0\) and \(\sum_{q_3} h'(r_1,r_2,q_3) \in \nats\) so that \(\sum_{q_3} h'(r_1,r_2,q_3) \geq 0 = g_1'(r_1,r_2)\).
	If \(g_1(r_1,r_2) > 0\) then \(g_1'(r_1,r_2) = g_1(r_1,r_2) -1\) and \(\sum_{q_3} h'(r_1,r_2,q_3) = \sum_{q_3} h(r_1,r_2,q_3)-1 \geq g_1(r_1,r_2)-1 = g_1'(r_1,r_2)\).
	Overall, we have proved that, for all \(q_1,q_2\), \(\sum_{q_3} h'(q_1,q_2) \geq g_1'(q_1,q_2)\).
	A similar argument proves that, for all \(q_2,q_3\), \(\sum_{q_1} h'(q_1,q_2,q_3) \geq g_2(q_2,q_3)\).
	Finally, by hypothesis, we have either \(g_1(r_1,r_2) >0\) or \(g_2(r_2,r_3)>0\) so that \(\sum_{q_1,q_2} g_1'(q_1,q_2) + \sum_{q_2,q_3} g_2'(q_2,q_3) \leq \sum_{q_1,q_2} g_1(q_1,q_2) + \sum_{q_2,q_3} g_2(q_2,q_3)-1\).
	Therefore, \(\sum_{q_1,q_2,q_3} h'(q_1,q_2,q_3) = \sum_{q_1,q_2,q_3} h(q_1,q_2,q_3)-1 \geq \sum_{q_1,q_2} g_1(q_1,q_2) + \sum_{q_2,q_3} g_2(q_2,q_3)-1 \geq \sum_{q_1,q_2} g_1'(q_1,q_2) + \sum_{q_2,q_3} g_2'(q_2,q_3)\).
	We have proved that we may apply the induction hypothesis on \(h'\), \(g_1'\) and \(g_2'\).
	By doing so, we obtain \(\tilde{q_1}, \tilde{q_2}, \tilde{q_3}\) such that \(\sum_{q_3} h'(\tilde{q_1}, \tilde{q_2}, q_3) > g_1'(\tilde{q_1}, \tilde{q_2})\) and \(\sum_{q_1} h'(q_1, \tilde{q_2}, \tilde{q_3}) > g_2'(\tilde{q_2}, \tilde{q_3})\).
	We prove that the same holds for \(h, g_1\) and \(g_2\).
	If \((\tilde{q_1}, \tilde{q_2}) \ne (r_1,r_2)\) then \(\sum_{q_3} h(\tilde{q_1}, \tilde{q_2}, q_3) = \sum_{q_3} h'(\tilde{q_1}, \tilde{q_2}, q_3) > g_1'(\tilde{q_1}, \tilde{q_2}) = g_1(\tilde{q_1}, \tilde{q_2})\).
	Moreover, if \((\tilde{q_1}, \tilde{q_2})= (r_1,r_2)\), we have \(\sum_{q_3} h(\tilde{q_1}, \tilde{q_2}, q_3) = \sum_{q_3} h'(\tilde{q_1}, \tilde{q_2}, q_3)+1 > g_1'(\tilde{q_1}, \tilde{q_2}) +1 \geq g_1(\tilde{q_1}, \tilde{q_2})\).
	Overall, this proves that \(\sum_{q_3} h(\tilde{q_1}, \tilde{q_2}, q_3) > g_1(\tilde{q_1}, \tilde{q_2})\); a similar argument proves that \(\sum_{q_1} h(q_1, \tilde{q_2}, \tilde{q_3}) > g_2(\tilde{q_2}, \tilde{q_3})\).
	This concludes the induction.

	Applying the property to \(H\), \(f_1\) and \(f_2\) allow to obtain \(\tilde{q_1},\tilde{q_2},\tilde{q_3}\) such that \(\sum_{q_3} H(\tilde{q_1},\tilde{q_2}, q_3) > f_1(\tilde{q_1},\tilde{q_2})\) and \(\sum_{q_1} H(q_1,\tilde{q_2},\tilde{q_3}) > f_{2}(\tilde{q_2}, \tilde{q_3})\), so that we can set \(H'\) equal to \(H\) except that \(H'(\tilde{q_1},\tilde{q_2},\tilde{q_3}) = H(\tilde{q_1},\tilde{q_2},\tilde{q_3}) -1\).
	We then let \(f' : (q_1,q_2) \mapsto H'(q_1,q_2,q_3)\); \(H'\) is a product witness that \((f', \controlloc_1, \controlloc_3) \in \atf_1 \tftimes \atf_2\), but \((f', \controlloc_1, \controlloc_3) \letf \atf\), contradicting minimality of \(\atf\) in \(\atf_1 \tftimes \atf_2\).
\end{appendixproof}

\begin{example}
	Consider \cref{fig:example-composition}.
	Let \(\atf_1 = (f_1, \controlloc_1, \controlloc_2)\) and \(\atf_2 = (f_2, \controlloc_2, \controlloc_3)\), with \(f_1(q_1,q_2) = 2\), \(f_2(q_2,q_3) = 3\), \(f_1(q,q) = f_2(q,q) = 0\) for all \(q\), \(f_2(q_2,q_1) = 0\) and all other values equal to \(\dummysymb\).
	Let \(\atf = (f, \controlloc_1, \controlloc_3)\), with \(f(q_1,q_1) = 1\), \(f(q_1,q_3) = 1\), \(f(q_2,q_3) = 2\), \(f(q_2,q_2) = f(q_3,q_3) = f(q_1,q_2) = f(q_2,q_1) = 0\) and \(f(q,q') = \dummysymb\) for all other \((q,q')\).
	We have \(\atf \in \atf_1 \tftimes \atf_2\).
	Indeed, we have a product witness \(H\) defined by \(H(q_1, q_2,q_1) = 1\), \(H(q_1,q_2,q_3) = 1\), \(H(q_2,q_2,q_3) = 2\), \(H(q_1,q_2,q_2) = H(q_2,q_2,q_1)\) \(= H(q_2,q_2,q_2) = H(q_3,q_3,q_3) = H(q_1,q_1,q_1) = 0\) and all other values equal to \(\dummysymb\).
	In fact, \(\atf\) is minimal for \(\letf\) in \(\atf_1 \tftimes \atf_2\).
\end{example}

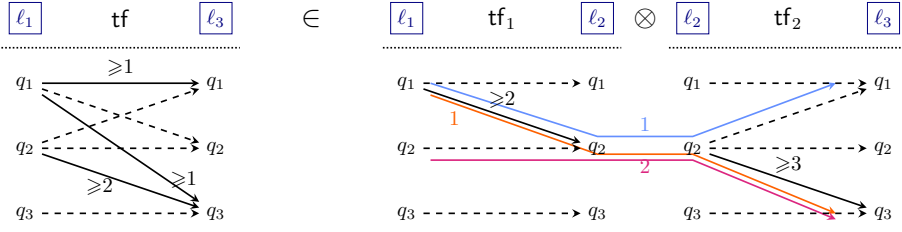
\begin{figure}
	\centering
	\resizebox{\linewidth}{!}{
		%
		\begin{tikzpicture}[xscale = 0.8]
			\node at (2,-0.5) {\large $\atf_1$};
			\node[controlpart] at (0,-0.5) (controlloc1) {$\controlloc_1$};
			\node[controlpart] at (4,-0.5) (controlloc2) {$\controlloc_2$};
			\foreach \i in {1,2,3}{
					\node (start\i) at (0, -0.5-1.1*\i) {$q_{\i}$};
					\node (end\i) at (4, -0.5-1.1*\i) {$q_{\i}$};
					\draw (start\i) edge[zerotransfer] (end\i);
				}
			\draw (start1) edge[transferedge] node[above] {${\geq} 2$} (end2);
			\draw[sepline] (-0.5, -1) -- (4.5,-1);
			\begin{scope}[xshift = 6cm]
				\node at (2,-0.5) {\large $\atf_2$};
				\node[controlpart] at (0,-0.5) (controlloc2prime) {$\controlloc_2$};
				\node[controlpart] at (4,-0.5) (controlloc3) {$\controlloc_3$};
				\foreach \i in {1,2,3}{
						\node (startbis\i) at (0, -0.5-1.1*\i) {$q_{\i}$};
						\node (endbis\i) at (4, -0.5-1.1*\i) {$q_{\i}$};
						\draw (startbis\i) edge[zerotransfer] (endbis\i);
					}
				\draw (startbis2) edge[zerotransfer]  (endbis1);
				\draw (startbis2) edge[transferedge] node[above] {${\geq} 3$} (endbis3);
				\draw[sepline] (-0.5, -1) -- (4.5,-1);
			\end{scope}
			\node at (5,-0.5) {\Large $\tftimes$}; %

			\begin{scope}[xshift = -8cm]
				\node at (2,-0.5) {\large $\atf$};
				\node[controlpart] at (0,-0.5) (controlloc1ter) {$\controlloc_1$};
				\node[controlpart] at (4,-0.5) (controlloc3ter) {$\controlloc_3$};
				\foreach \i in {2,3}{
						\node (startter\i) at (0, -0.5-1.1*\i) {$q_{\i}$};
						\node (endter\i) at (4, -0.5-1.1*\i) {$q_{\i}$};
						\draw (startter\i) edge[zerotransfer] (endter\i);
					}
				\node (startter1) at (0, -0.5-1.1) {$q_{1}$};
				\node (endter1) at (4, -0.5-1.1) {$q_{1}$};
				\draw (startter1) edge[transferedge] node[above] {${\geq} 1$} (endter1);
				\draw (startter1) edge[zerotransfer]  (endter2);
				\draw (startter2) edge[zerotransfer]  (endter1);
				\draw (startter2) edge[transferedge] node[below, xshift = -1em, yshift = 0.5em] {${\geq} 2$} (endter3);
				\draw (startter1) edge[transferedge] node[right, yshift = -1.5em, xshift = 2em] {${\geq} 1$} (endter3);
				\draw[sepline] (-0.5, -1) -- (4.5,-1);
			\end{scope}

			\node at (-2,-0.5) {\Large $\in$}; %

			\begin{scope}[on background layer]
				\draw[-stealth, color = color1, thick] (0.5,-1.6) -- (4,-2.5) -- (6,-2.5) -- (9,-1.6);
				\draw[-stealth, color = color2, thick] (0.5,-1.8) -- (4,-2.8) -- (6,-2.8) -- (9,-3.8);
				\draw[-stealth, color = color3, thick] (0.5,-2.9) -- (4,-2.9) -- (6,-2.9) -- (9,-3.9);
				\node[text = color1] at (5,-2.3) {$1$};
				\node[text = color2] at (1,-2.2) {$1$};
				\node[text = color3] at (5,-3.0) {$2$};
			\end{scope}

		\end{tikzpicture}
	}
	\caption{%
		Dashed arrows correspond to value \(0\), no arrow corresponds to \(\dummysymb\).
		The product witness \(H\) is represented with colored arrows.
		We do not depict \(H\) when its value is \(0\).
	}\label{fig:example-composition}
\end{figure}

Given a sequence \(\atrans_1 \dots \atrans_k\) of transitions, we let \(\transfersof{\atrans_1 \dots \atrans_k} \deff \transfersof{\atrans_1} \tftimes \transfersof{\atrans_2} \tftimes \dots \tftimes \transfersof{\atrans_k}\).
For the empty sequence \(\emptysequence\), we define \(\transfersof{\emptysequence}\) as the set of \((f,\bstate,\bstate')\) where \(\bstate = \bstate'\), \(f(q,q) \in \nats\) for all \(q\) and \(f(q,q') = \dummysymb\) for all \(q \ne q'\).
For all upward-closed sets \(F \subseteq \transferflows\), we have \(F \tftimes \transfersof{\emptysequence} = \transfersof{\emptysequence} \tftimes F = F\).
Observe that, for every \(\atrans_1 \dots \atrans_k\), all transfer flows \((f, \bstate, \bstate') \in \transfersof{\atrans_1 \dots \atrans_k}\) are such that \(f(q,q) \in \nats\) for all \(q\).

\begin{lemmarep}
	\label{sequence-transitions-product-tf}
	For all \(k \geq 0\), for all \(\atrans_1, \atrans_2 \dots, \atrans_k \in \transitions\) , and for all \(\prodconfig, \prodconfig' \in \prodconfigs\), \(\prodconfig \step{\atrans_1 \dots \atrans_k} \prodconfig' \) if and only if there exists \(\atf \in \transfersof{\atrans_1 \dots \atrans_k}\) such that \( \prodconfig \flowstep{\atf} \prodconfig'.
	\)
\end{lemmarep}
\begin{appendixproof}
	We first prove the following auxiliary lemma:
	\begin{lemma}
		\label{product-basic-property}
		Let \(\atf_1, \atf_2 \in \transferflows\) be two "transfer flows", \(\prodconfig_1, \prodconfig_3 \in \prodconfigs\).
		We have the following equivalence:
		\begin{equation*}
			( \exists \,  \atf \in \atf_1 \tftimes \atf_2, \prodconfig_1 \flowstep{\atf} \prodconfig_3 )
			\iff ( \exists \, \prodconfig_2 \in \prodconfigs, \prodconfig_1 \flowstep{\atf_1} \prodconfig_2 \flowstep{\atf_2} \prodconfig_3 ). \end{equation*}
	\end{lemma}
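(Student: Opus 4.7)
The plan is to prove both directions by moving between 2-dimensional step witnesses and 3-dimensional product-style witnesses via marginalization. Write $\atf_1 = (f_1, \ell_1, \ell_1')$, $\atf_2 = (f_2, \ell_2, \ell_2')$, $\prodconfig_1 = (\config_1, \ell_1)$, $\prodconfig_3 = (\config_3, \ell_2')$; if $\ell_1' \neq \ell_2$ both sides are vacuously false, so assume $\ell_1' = \ell_2$.

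For $(\Rightarrow)$, fix $\atf = (h,\ell_1,\ell_2') \in \atf_1 \tftimes \atf_2$ with product witness $H$, and a step witness $g \geq h$ for $\prodconfig_1 \flowstep{\atf} \prodconfig_3$. The idea is to lift $H$ to $G : \states^3 \to \natsanddummy$ whose $(q_1,q_3)$-marginal is exactly $g$ (rather than $h$): keep $G(q_1,q_2,q_3) = \dummysymb$ precisely where $H(q_1,q_2,q_3) = \dummysymb$, and for each $(q_1,q_3)$ with $h(q_1,q_3) \in \nats$ add the excess $g(q_1,q_3) - h(q_1,q_3)$ to any $q_2^*$ with $H(q_1,q_2^*,q_3) \in \nats$ (such a $q_2^*$ exists since $\sum_{q_2} H(q_1,q_2,q_3) = h(q_1,q_3)$). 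Setting $\config_2(q_2) \deff \sum_{q_1,q_3} G(q_1,q_2,q_3)$, the functions $g_1(q_1,q_2) \deff \sum_{q_3} G(q_1,q_2,q_3)$ and $g_2(q_2,q_3) \deff \sum_{q_1} G(q_1,q_2,q_3)$ are step witnesses for $\prodconfig_1 \flowstep{\atf_1} \prodconfig_2$ and $\prodconfig_2 \flowstep{\atf_2} \prodconfig_3$: the marginal inequalities of $H$ upgrade to $g_1 \geq f_1$ and $g_2 \geq f_2$ via $G \geq H$, and the marginal equalities of $G$ give the required row/column sums through $\config_2$.

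For $(\Leftarrow)$, take $\prodconfig_2 = (\config_2,\ell_2)$ together with step witnesses $g_1, g_2$ for the two half-steps. The key step is to glue $g_1$ and $g_2$ into a single $G : \states^3 \to \natsanddummy$ with marginals $\sum_{q_3} G(q_1,q_2,q_3) = g_1(q_1,q_2)$ and $\sum_{q_1} G(q_1,q_2,q_3) = g_2(q_2,q_3)$. For each $q_2$ the two marginals share total mass $\config_2(q_2)$, so a nonnegative coupling exists; matching dummies by forcing $G(q_1,q_2,q_3) = \dummysymb$ whenever either $g_1(q_1,q_2) = \dummysymb$ or $g_2(q_2,q_3) = \dummysymb$ is achieved by an induction on total mass, essentially the same coupling argument used inside the proof of associativity in \cref{prod-upward-closed}. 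Setting $h(q_1,q_3) \deff \sum_{q_2} G(q_1,q_2,q_3)$ and $\atf \deff (h,\ell_1,\ell_2')$, the function $G$ is a product witness of $\atf \in \atf_1 \tftimes \atf_2$, and $h$ itself serves as a step witness for $\prodconfig_1 \flowstep{\atf} \prodconfig_3$.

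The main obstacle is the consistent handling of $\dummysymb$ throughout. Additively $\dummysymb$ is absorbed by integers, yet $\letf$ requires dummy supports to match exactly, so each constructed function must inherit its dummy pattern coherently from the given data: in $(\Rightarrow)$ this is why the excess must be deposited on a position where $H$ is already an integer, and in $(\Leftarrow)$ this is why the coupling has to be built by an explicit induction rather than by a naive proportional split. Once the dummy discipline is verified at each step, the arithmetic identities collapse routinely.
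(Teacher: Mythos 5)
Your proof is correct and follows essentially the same route as the paper: in the forward direction you lift the step witness and product witness to a three-dimensional $G$ whose marginals give the intermediate configuration and the two step witnesses (the paper packages your "deposit the excess on a non-dummy coordinate" step as an invocation of upward-closedness of $\atf_1 \tftimes \atf_2$), and in the backward direction you glue $g_1,g_2$ into a coupling $H$ with prescribed marginals by the same induction-on-total-mass argument the paper uses, with the same care about matching $\dummysymb$ supports.
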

	\begin{proof}
		Let \(\atf_1 =: (f_1, \controlloc_1, \controlloc_2)\), \(\atf_2 =: (f_2, \controlloc_2', \controlloc_3)\).
		If we have \(\controlloc_2 \ne \controlloc_2'\) then \(\atf_1 \tftimes \atf_2 = \emptyset\), both assertions are false and the equivalence holds.
		Similarly, if the "control location" of \(\prodconfig_1\) is not equal to \(\controlloc_1\), then both assertions are false and the equivalence holds, and same for \(\prodconfig_3\) and \(\controlloc_3\).
		We now suppose that \(\prodconfig_1 =: \confpair{\config_1}{\controlloc_1}\) and \(\prodconfig =: \confpair{\config_3}{\controlloc_3}\).

		Assume first that there is \(\atf \in \atf_1 \tftimes \atf_2\) such that \(\prodconfig_1 \flowstep{\atf} \prodconfig_3\), let \(\atf =: (f, \controlloc_1, \controlloc_3)\).
		Let \(g \geq f\) witnessing that \(\prodconfig_1 \flowstep{\atf} \prodconfig_3\).
		By hypothesis, \(\atf \in \atf_1 \tftimes \atf_2\).
		By \cref{prod-upward-closed}, \(\atf_1 \tftimes \atf_2\) is upward-closed, therefore \(\atf' \deff (g,\controlloc_1,\controlloc_3) \in \atf_1 \tftimes \atf_2\).
		Let \(H : \states^3 \to \natsanddummy\) be a product witness of that.
		Let \(\config_2 : q_2 \in \states \mapsto \sum_{q_1,q_3} H(q_1,q_2,q_3)\), and let \(\prodconfig_2 \deff \confpair{\config_2}{\controlloc_2}\).
		Let \(h : (q_1,q_2) \in \states^2 \mapsto \sum_{q_3} H(q_1,q_2,q_3)\), we prove that \(h\) is a step witness that \(\prodconfig_1 \flowstep{\atf_1} \prodconfig_2\).
		By definition of \(H\), for all \(q_1,q_2\), \(\sum_{q_3} H(q_1,q_2,q_3) \geq f_1(q_1,q_2)\) hence \(h(q_1, q_2) \geq f_1(q_1,q_2)\), so that \(h \geq f_1\).
		By definition of \(H\), for all \(q_1\), \(\sum_{q_2} H(q_1,q_2,q_3) = g(q_1,q_3)\) and by definition of \(g\), \(\sum_{q_3} g(q_1,q_3) = \config_1(q_1)\).
		This gives, for all \(q_1\), \(\sum_{q_2} h(q_1,q_2) = \sum_{q_2,q_3} H(q_1,q_2,q_3) = \sum_{q_3} \sum_{q_2} H(q_1,q_2,q_3) = \sum_{q_3} g(q_1,g_3) = \config_1(q_1)\).
		Moreover, \(\sum_{q_1} h(q_1,q_2) = \sum_{q_1,q_3} H(q_1,q_2,q_3) = \config_2(q_2)\) by definition of \(\config_2\).
		This proves that \(\prodconfig_1 \flowstep{\atf_1} \prodconfig_2\); the proof that \(\prodconfig_2 \flowstep{\atf_2} \prodconfig_3\) is similar.

		Conversely, assume that there is \(\prodconfig_2\) such that \(\prodconfig_1 \flowstep{\atf_1} \prodconfig_2 \flowstep{\atf_2} \prodconfig_3\).
		Let \(g_1\geq f_1\) be a step witness that \(\prodconfig_1 \flowstep{\atf_1} \prodconfig_2\) and \(g_2 \geq f_2\) a step witness that \(\prodconfig_2 \flowstep{\atf_2} \prodconfig_3\).
		We build \(H : \states^3 \to \natsanddummy\) that satisfies the following conditions:
		\begin{enumerate}[label=(\roman*),ref=(\roman*),topsep=3pt]
			\item \label{comp-cond1} for all \(q_1,q_2\), \(\sum_{q_3}
			      H(q_1,q_2,q_3) = g_1(q_1,q_2)\), \item \label{comp-cond2} for all \(q_2,q_3\), \(\sum_{q_1} H(q_1,q_2,q_3) = g_2(q_2,q_3)\).
		\end{enumerate}
		Indeed, the existence of \(H\) would imply that, by letting \(h : (q_1,q_3) \mapsto \sum_{q_2} H(q_1,q_2,q_3)\) and \(\atf \deff (h,\controlloc_1,\controlloc_3)\), we have \(\atf \in \atf_1 \tftimes \atf_2\) (with \(H\) as product witness, because \(g_1 \geq f_1\) and \(g_2 \geq f_2\)) and \(\prodconfig_1 \flowstep{\atf} \prodconfig_3\) because \(\sum_{q_2} g_1(q_1,q_2) = \config_1(q_1)\) and \(\sum_{q_2} g_2(q_2,q_3) = \config_3(q_3)\).

		We now prove the following statement:
		\begin{quote}
			For every \(g_1: \states^2 \to \natsanddummy\) and \(g_2: \states^2 \to \natsanddummy\), if \(\sum_{q_1} g_1(q_1,q_2) = \sum_{q_3} g_2(q_2,q_3)\) for every \(q_2\), then there is \(H: \states^3 \to \natsanddummy\) such that \(\sum_{q_3} H(q_1,q_2,q_3) = g_1(q_1,q_2)\) and \(\sum_{q_1} H(q_1,q_2,q_3) = G(q_2,q_3)\).
		\end{quote}
		First, if \(F\) and \(G\) are constant equal to \(\dummysymb\) then we set \(H\) constant equal to \(\dummysymb\).
		Suppose that it is not the case; let \(n \deff \sum_{q_1,q_2} g_1(q_1,q_2) = \sum_{q_2,q_3} g_2(q_2,q_3) \in \nats\).
		We proceed by induction on \(n\).

		If \(n=0\) then all values in \(g_1\) and \(g_2\) are in \(\set{0,\dummysymb}\).
		For each \(q_1,q_2,q_3\), we let \(H(q_1,q_2,q_3) \deff 0\) whenever both \(g_1(q_1,q_2) =0\) and \(g_2(q_2,q_3) =0\), and \(H(q_1,q_2,q_3) \deff \dummysymb\) otherwise.
		We first prove that, for all \(q_1,q_2\), \(\sum_{q_3} H(q_1,q_2,q_3) = g_1(q_1,q_2)\).
		Let \(q_1,q_2 \in \states\); if \(g_1(q_1,q_2,q_3) = \dummysymb\) then \(H(q_1,q_2,q_3) = \dummysymb\) for all \(q_3\) hence \(\sum_{q_3} H(q_1,q_2,q_3) = \dummysymb\).
		Suppose now that \(g_1(q_1,q_2) = 0\).
		This implies that \(\sum_{q_3} g_2(q_2,q_3) = 0\) therefore there is \(\tilde{q_3}\) such that \(g_2(q_2,\tilde{q_3}) =0\), so that \(H(q_1,q_2,\tilde{q_3})=0\) and \(\sum_{q_3} H(q_1,q_2,q_3) = 0\).
		Similarly, for every \(q_2,q_3\), if \(g_2(q_2,q_3) = \dummysymb\) then \(\sum_{q_1} H(q_1,q_2,q_3) = \dummysymb\) and if \(g_2(q_2,q_3) = 0\) then there is \(\tilde{q_1}\) such that \(g_1(\tilde{q_1},q_2) =0\) hence \(H(\tilde{q_1},q_2,q_3) =0\) and \(\sum_{q_1} H(q_1,q_2,q_3) =0\).

		Suppose now that \(n>0\).
		There exists \(\tilde{q_2}\) such that \(\sum_{q_1} g_1(q_1, \tilde{q_2}) = \sum_{q_3} g_2(\tilde{q_2},q_3)>0\).
		Let \(\tilde{q_1}\) such that \(g_1(\tilde{q_1}, \tilde{q_2})>0\) and \(\tilde{q_3}\) such that \(g_2(\tilde{q_2},\tilde{q_3})>0\).
		Let \(g_1'\) equal to \(g_1\) except that \(g_1'(\tilde{q_1}, \tilde{q_2}) \deff g_1(\tilde{q_1}, \tilde{q_2})-1\) and let \(g_2'\) equal to \(g_2\) except that \(g_2'(\tilde{q_2},\tilde{q_3}) \deff g_2(\tilde{q_2},\tilde{q_3})-1\).
		We have \(\sum_{q_1} g_1'(q_1,q_2) = \sum_{q_3} g_2'(q_2,q_3)\) for all \(q_2\), and \(\sum_{q_1,q_2} g_1'(q_1,q_2) = \sum_{q_2,q_3} g_2(q_2,q_3)-1 = n-1\).
		We apply the induction hypothesis on \(g_1'\) and \(g_2'\) to obtain \(H'\) such that \(\sum_{q_3} H'(q_1,q_2,q_3) = g_1'(q_1,q_2)\) for all \(q_1,q_2\) and \(\sum_{q_1} H'(q_1,q_2,q_3) = g_2'(q_2,q_3)\) for all \(q_2,q_3\).
		It suffices to let \(H\) equal to \(H'\) except that \(H(\tilde{q_1}, \tilde{q_2}, \tilde{q_3}) \deff H'(\tilde{q_1}, \tilde{q_2}, \tilde{q_3})+1\).
		Note that it could be that \(H'(\tilde{q_1}, \tilde{q_2}, \tilde{q_3}) = \dummysymb\), in which case \(H(\tilde{q_1}, \tilde{q_2}, \tilde{q_3}) = 1\).
		We know that \(g_1'(\tilde{q_1}, \tilde{q_2}) \ne \dummysymb\) therefore \(\sum_{q_3} H'(\tilde{q_1}, \tilde{q_2}, q_3) \ne \dummysymb\) so that we indeed have \(\sum_{q_3} H(\tilde{q_1}, \tilde{q_2}, q_3) = g_1'(\tilde{q_1}, \tilde{q_2})+1 = g_1(\tilde{q_1}, \tilde{q_2})\).
		With the same argument, \(\sum_{q_1} H(q_1, \tilde{q_2},\tilde{q_3}) = g_2(\tilde{q_2}, \tilde{q_3})\).
		This concludes the induction.

		By letting \(h : (q_1,q_3) \mapsto \sum_{q_2} H(q_1,q_2,q_3)\) and \(\atf = (h, \controlloc_1, \controlloc_3)\), we have \(\atf \in \atf_1 \tftimes \atf_2\) and \(\prodconfig_1 \flowstep{\atf} \prodconfig_3\), concluding the proof.
	\end{proof}

	We now prove \cref{sequence-transitions-product-tf}.
	We proceed by induction on \(k\).
	The case \(k=0\) corresponds to the fact that \(\prodconfig = \prodconfig'\) if and only if there is \(\atf \in \transfersof{\emptysequence}\) such that \(\prodconfig \flowstep{\atf} \prodconfig'\).
	We assume that the property is true for sequences of length up to \(k\), and we prove it for sequence of length \(k+1\).
	Let \(\atrans_1, \dots, \atrans_{k+1} \in \transitions\).
	First, assume that \(\prodconfig \accstep{\atrans_1 \dots \atrans_{k+1}} \prodconfig'\); split this execution into \(\prodconfig = \prodconfig_0 \accstep{\atrans_1} \prodconfig_1 \accstep{\atrans_2} \dots \accstep{\atrans_{k+1}} \prodconfig_{k+1} = \prodconfig'\).
	By induction hypothesis, there is \(\atf \in \transfersof{\atrans_1 \dots \atrans_k}\) such that \(\prodconfig_0 \flowstep{\atf} \prodconfig_k\).
	By \cref{flow-one-trans}, there is \(\atf_{k+1} \in \transfersof{\atrans_{k+1}}\) such that \(\prodconfig_k \flowstep{\atf_{k+1}} \prodconfig_{k+1}\).
	We apply \cref{product-basic-property} to obtain the existence of \(\atf' \in \transfersof{\atrans_1 \dots \atrans_k} \tftimes \transfersof{\atrans_{k+1}} = \transfersof{\atrans_1 \dots \atrans_{k+1}}\) such that \(\prodconfig_0 \flowstep{\atf'} \prodconfig_{k+1}\).
	Conversely, assume that there is \(\atf' \in \transfersof{\atrans_1 \dots \atrans_{k+1}}\) such that \(\prodconfig \flowstep{\atf'} \prodconfig'\).
	We have \(\atf' \in \transfersof{\atrans_1 \dots \atrans_k} \tftimes \transfersof{\atrans_{k+1}}\), hence by \cref{product-basic-property} there is \(\prodconfig_k\), \(\atf \in \transfersof{\atrans_1 \dots \atrans_k}\) and \(\atf_{k+1} \in \transfersof{\atrans_{k+1}}\) such that \(\prodconfig \flowstep{\atf} \prodconfig_k \flowstep{\atf_{k+1}} \prodconfig'\).
	We conclude by applying the induction hypothesis to \(\prodconfig \flowstep{\atf} \prodconfig_k\) and by applying \cref{flow-one-trans} to \(\prodconfig_k \flowstep{\atf_{k+1}} \prodconfig'\).
\end{appendixproof}

Given \(T \subseteq \transitions^*\), we let \(\transfersof{T} \deff \bigcup_{w \in T} \transfersof{w}\).
For all \(k \geq 0\), we denote by \(\transitions^{\leq k} \subseteq \transitions^*\) the set of sequences of length at most \(k\).
Let \(m = \size{\states}\) and \(M = \size{\bstates}\).
We prove \cref{thm:k-blind-post} using the following theorem, which proved in \cref{subsec:stuctural-bound}.

\begin{theorem}[Structural theorem]
	\label{structural-theorem-flows}
	Let \(\structuralbound \deff (M+1)^{3^{m^2+2} \cdot 2(\log(m^2+2) +1) m^2}\).
	We have \(\transfersof{\transitions^{\leq B}} = \transfersof{\transitions^*}\) and elements of \(\minwqo{\transfersof{\transitions^*}}\) have norm at most \(2B\).
\end{theorem}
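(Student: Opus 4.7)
The plan is to argue that the chain $\Tleq{0} \subseteq \Tleq{1} \subseteq \cdots$ of upward-closed subsets of the well-quasi-order $(\transferflows, \letf)$ stabilizes by step $B$, and then derive both statements of the theorem from this.

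\textbf{Saturation from stabilization.} First I would show: if $\Tleq{k} = \Tleq{k+1}$ then $\Tleq{k} = \T$. The argument is by induction on word length. For a word $wt$ with $|w| \geq k$, associativity (\cref{basic3}) gives $\transfersof{wt} = \transfersof{w} \tftimes \transfersof{t}$. By the inductive hypothesis $\transfersof{w} \subseteq \Tleq{k}$, so each $\atf \in \transfersof{w}$ admits $\atf_0 \in \minwqo{\Tleq{k}}$ with $\atf_0 \letf \atf$; by \cref{basic2} this yields $\atf \tftimes \transfersof{t} \subseteq \atf_0 \tftimes \transfersof{t}$. Writing $\atf_0 \in \transfersof{u}$ with $|u| \leq k$, we obtain $\atf_0 \tftimes \transfersof{t} \subseteq \transfersof{ut} \subseteq \Tleq{k+1} = \Tleq{k}$, closing the induction.

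\textbf{Weight of minimal elements.} Minimal elements of $\neutraltransferflows$ have weight $0$, while those of $\transfersof{t}$ for any transition $t$ have weight at most $2$ (direct inspection: one observer agent in $q_1$ and one transfer from $q_2$ to $q_3$, or $f(q_1,q_1)=2$ when $q_1=q_2=q_3$). Iterating \cref{basic4}, every element of $\minwqo{\transfersof{w}}$ has weight at most $2|w|$. Any $\atf \in \minwqo{\Tleq{k}}$ lies in some $\transfersof{w}$ with $|w| \leq k$, hence satisfies $\atf' \letf \atf$ for some $\atf' \in \minwqo{\transfersof{w}}$ of weight at most $2k$; minimality of $\atf$ in $\Tleq{k}$ forces $\atf = \atf'$, giving the weight bound $2k$. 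Combined with stabilization at $B$, this yields the second part of the theorem.

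\textbf{Bounding the stabilization length (main obstacle).} The core difficulty is proving that the chain stabilizes by step $B$. Suppose for contradiction that $\Tleq{k} \subsetneq \Tleq{k+1}$ for every $k < B$, and pick $\atf_k \in \minwqo{\Tleq{k+1}} \setminus \Tleq{k}$ (which is nonempty by upward-closedness). For $j < k$ we have $\atf_j \in \Tleq{j+1} \subseteq \Tleq{k}$, which rules out $\atf_k \succeq \atf_j$; so $(\atf_k)_{k<B}$ is a bad sequence in $(\transferflows, \letf)$ whose $k$-th element has weight at most $2(k+1)$ by the previous step. Partitioning $\transferflows$ by the control pair $(\ell,\ell') \in \bstates^2$ and by the $\dummysymb$-pattern of the agent part yields at most $M^2 \cdot 2^{m^2}$ classes, each order-isomorphic to $(\nats^d, \leq)$ for some $d \leq m^2$. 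This reduces the problem to bounding linearly-controlled bad sequences in a finite disjoint union of $\nats^d$'s. Plugging the control function $g(i) = 2(i+1)$ into the length-function theorem for Dickson's lemma and iterating the Schmitz-Schnoebelen-style recursion over the $m^2$ coordinates produces a bound of the asserted shape $(M+1)^{\mathrm{poly}(m)}$. Matching the precise value $B = (M+1)^{3^{m^2+2} \cdot 2(\log(m^2+2)+1) m^2}$ requires careful tracking of how the $M^2$-many control pairs and $2^{m^2}$-many $\dummysymb$-patterns interact with the dimension bound, and this constant-hunting is where the bulk of the technical work lies.
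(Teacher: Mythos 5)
Your first two steps (saturation from stabilization, and the weight bound via iterated applications of \ref{basic4}) match the paper's argument.

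The gap is in the third step. You propose to build a controlled \emph{bad sequence} $(\atf_k)$ in $(\transferflows,\letf)$, reduce to $(\nats^d,\leprod)$ with $d\leq m^2$, and invoke ``the length-function theorem for Dickson's lemma'' with linear control $g(i)=2(i+1)$, claiming this produces a bound of shape $(M+1)^{\mathrm{poly}(m)}$ after constant-hunting. This is not what the length-function theorem gives. For $g$-controlled bad sequences over $\nats^d$ with $g$ linear, the classical Figueira--Figueira--Schmitz--Schnoebelen bounds place the maximal length at roughly level $d$ of the fast-growing hierarchy — a tower of exponentials of height about $d$, which for $d=m^2+2$ is nowhere near doubly-exponential in $m$. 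The constant you need to hunt for does not exist along this route; the obstacle is not bookkeeping but a qualitatively worse worst case.

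What the paper does instead is work with \emph{descending chains} of downward-closed sets $D_k=\nats^d\setminus U_k$ (where $U_k$ is the upward-closure of $\mapencoding(\Tleq{k})\cup\zerovectors$) and invoke \cref{bound-descending-chains} from~\cite{SS23}, which yields the single-exponential-in-$d$ bound $n^{3^d(\log d+1)}$ \emph{only} under the additional structural hypothesis of $\omega$-monotonicity. Proving that $(D_k)$ is $\omega$-monotone is the real technical core of the argument: given a proper ideal $I_{k+1}$ of $D_{k+1}$ whose representing vector has $\omega$'s on a coordinate set $E$, one must exhibit a proper ideal $I_k$ of $D_k$ whose $\omega$-coordinates cover $E$. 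The paper does this by producing, for every $j$, a transfer flow $\atf_k^{(j)}$ that pumps up to $j$ precisely the coordinates in $E$, and then showing (the lemma ``there-is-m'') that some pumped-up $\atf_{k+1}^{(p)}$ still factors as $\atf_k^{(j)}\tftimes\tftrans$. This pumping argument exploits the specific structure of the composition $\tftimes$ and of the diagonal entries $f(q,q)$ of transfer flows (the ``idle-compliance'' used there), and is not a generic consequence of well-quasi-orderedness. Without it, you cannot reach the stated $B$.

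So the proposal, as written, does not close: you need either to establish $\omega$-monotonicity of the induced descending chain (the paper's route) or to find some other structural restriction on the bad sequence that brings the generic Ackermannian bound down to $(M+1)^{\exp(O(m^2))}$.
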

%

%
\subsection{Proof of Theorem~\ref{thm:k-blind-post}}

Again, we write \(m = \size{\states}\) and \(M = \size{\bstates}\).
Let \(K' \geq 0\), \(K \deff m^2 \max(K', 2 B)\) and \(\cSet\) a \(K'\)-blind set.
We prove that \(\poststar{\cSet}\) is \(K\)-blind; the proof for \(\prestar{\cSet}\) is similar.
We start with the following observation.

\begin{lemma}
	\label{poststar-chacterization}
	A configuration \(\prodconfig\) is in \(\poststar{\cSet}\) if and only if there are \(\prodconfig_\cSet \in \cSet\) and \(\atf \in \transfersof{\transitions^*}\) such that \(\prodconfig_\cSet \flowstep{\atf} \prodconfig\) and \(\tfweight{\atf} \leq 2B\).
\end{lemma}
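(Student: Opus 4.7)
The plan is to combine Lemma~\ref{sequence-transitions-product-tf} (which translates reachability into the existence of a transfer flow) with the Structural Theorem~\ref{structural-theorem-flows} (which bounds the minimal transfer flows of $\transfersof{\transitions^*}$) and the monotonicity fact that smaller transfer flows are ``more powerful''. The backward direction is essentially immediate; only the forward direction requires work, and the weight bound will be extracted by descending to a minimal element of $\transfersof{\transitions^*}$.

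For the backward direction, assume there exist $\prodconfig_\cSet \in \cSet$ and $\atf \in \transfersof{\transitions^*}$ with $\prodconfig_\cSet \flowstep{\atf} \prodconfig$. By definition of $\transfersof{\transitions^*}$, there is a sequence $w = \atrans_1 \dots \atrans_k \in \transitions^*$ with $\atf \in \transfersof{w}$. Lemma~\ref{sequence-transitions-product-tf} then gives $\prodconfig_\cSet \step{\atrans_1 \dots \atrans_k} \prodconfig$, so $\prodconfig \in \poststar{\cSet}$; the weight bound is not needed here.

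For the forward direction, suppose $\prodconfig \in \poststar{\cSet}$, so that there is $\prodconfig_\cSet \in \cSet$ and $w \in \transitions^*$ with $\prodconfig_\cSet \step{w} \prodconfig$. Lemma~\ref{sequence-transitions-product-tf} provides some $\atf \in \transfersof{w} \subseteq \transfersof{\transitions^*}$ with $\prodconfig_\cSet \flowstep{\atf} \prodconfig$. I would then observe that $\transfersof{\transitions^*}$ is upward-closed for $\letf$: indeed, each $\transfersof{\atrans}$ is upward-closed by construction, Lemma~\ref{prod-upward-closed}\ref{basic1} propagates upward-closedness through $\tftimes$, so each $\transfersof{w}$ is upward-closed, and unions of upward-closed sets are upward-closed. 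Since $(\transferflows, \letf)$ is a well-quasi-order, $\transfersof{\transitions^*}$ admits a finite basis $\minwqo{\transfersof{\transitions^*}}$, and we can pick $\atf' \in \minwqo{\transfersof{\transitions^*}}$ with $\atf' \letf \atf$.

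To conclude, I would invoke the rule of thumb ``smaller transfer flows are more powerful'' recalled after the definition of $\flowstep{}$: from $\prodconfig_\cSet \flowstep{\atf} \prodconfig$ and $\atf' \letf \atf$ we obtain $\prodconfig_\cSet \flowstep{\atf'} \prodconfig$. The Structural Theorem~\ref{structural-theorem-flows} ensures $\tfweight{\atf'} \leq 2B$, so $\atf'$ is the desired witness. There is no real obstacle here beyond checking that $\transfersof{\transitions^*}$ is upward-closed so that minimal elements can be extracted; the heavy lifting has been done by the Structural Theorem.
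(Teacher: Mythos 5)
Your proof is correct and follows essentially the same route as the paper's: both directions rely on Lemma~\ref{sequence-transitions-product-tf}, and the weight bound comes from the Structural Theorem~\ref{structural-theorem-flows}. The paper compresses the last step into the phrase ``one may assume that $\tfweight{\atf}\leq 2B$''; your version just spells out what that phrase is hiding (upward-closedness of $\transfersof{\transitions^*}$, descent to a minimal element, and the monotonicity of $\flowstep{}$ under $\letf$), which is exactly the intended reasoning.
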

\begin{proof}
	By \cref{sequence-transitions-product-tf}, if we have such \(\prodconfig_\cSet\) and \(\atf\) then \(\prodconfig \in \poststar{\cSet}\).
	Conversely, if \(\prodconfig = (\config,\bstate) \in \poststar{\cSet}\), there are \(\prodconfig_\cSet = (\config_\cSet, \bstate_\cSet) \in \cSet\) and \(w \in \transitions^*\) such that \(\config_\cSet \step{w} \config\).
	By \cref{sequence-transitions-product-tf}, there is \(\atf = (f, \bstate_\cSet, \bstate) \in \transfersof{w} \subseteq \transfersof{\transitions^*}\) such that \(\prodconfig_\cSet \flowstep{\atf} \prodconfig\); by \cref{structural-theorem-flows}, one may assume that \(\tfweight{\atf} \leq 2B\).
\end{proof}

Let \(\prodconfig = (\config, \bstate) \in \prodconfigs\) and \(q \in \states\) such that \(\config(q) \geq K\); we show that \((\config, \bstate) \in \poststar{\cSet}\) if and only if \((\config + \vec{q}, \bstate) \in \poststar{\cSet}\).
First, suppose that \(\prodconfig = (\config,\bstate) \in \poststar{\cSet}\).
Let \(\atf, \prodconfig_\cSet = (\config_\cSet, \bstate_\cSet)\) obtained thanks to \cref{poststar-chacterization}.
Let \(g : \states^2 \to \natsanddummy\) be a step witness that \(\prodconfig_\cSet \flowstep{\atf} \prodconfig\).
We have \(\sum_{r \in \states} g(r,q) = \config(q) \geq K\).
By the pigeonhole principle, there is \(r\) such that \(g(r,q) \geq \frac{K}{m^2} \geq K'\) therefore \(\config_\cSet(r) \geq K'\).
Let \(g'\) such that \(g'(q_1,q_2) = g(q_1,q_2)\) for all \((q_1,q_2) \ne (r,q)\) and \(g'(r,q) = g(r,q)+1\); \(g'\) is a witness that \((\config_\cSet+ \vec{r},\bstate_\cSet) \flowstep{\atf} (\config +\vec{q}, \bstate)\).
Thanks to \cref{sequence-transitions-product-tf}, this proves that \((\config_\cSet+ \vec{r},\bstate_\cSet) \step{*} (\config +\vec{q}, \bstate)\).
Because \(\cSet\) is \(K'\)-blind, we conclude that \((\config +\vec{q}, \bstate) \in \poststar{\cSet}\).
Conversely, suppose that \((\prodconfig + \vec{q}, \bstate) \in \poststar{\cSet}\).
With the same reasoning as above, we obtain \(\prodconfig_\cSet = (\config_\cSet, \bstate_\cSet)\in \cSet, \atf = (f, \bstate_\cSet, \bstate), g, r\) such that \(g\) is a witness that \(\prodconfig_\cSet \flowstep{\atf} \prodconfig\).
By the pigeonhole principle, there is \(r\) such that \(g(r,q) \geq K'+1\) and \(g(r,q) \geq 2B+1 > f(r,q)\).
Because \(\cSet\) is \(K'\)-blind and \(\config_\cSet(r) \geq g(r,q) \geq K' +1\), we have \((\config_\cSet- \vec{r},\bstate_\cSet) \in \cSet\).
Let \(g'(q_1,q_2) = g(q_1,q_2)\) for all \((q_1,q_2) \ne (r,q)\) and \(g'(r,q) = g(r,q)-1\).
Because \(g' \geq f\), \(g'\) is a step witness that \((\config_\cSet-\vec{r},\bstate_\cSet) \flowstep{\atf} (\config, \bstate)\).
Thanks to \cref{sequence-transitions-product-tf}, this proves that \((\config, \bstate) \in \poststar{\cSet}\).
\subsection{Proving the Structural Theorem with Descending Chains}
\label{subsec:stuctural-bound}

To prove \cref{structural-theorem-flows}, we use a result bounding the length of descending chains in \(\nats^d\) from \cite{LS21,SS23}.
We recall the result and some definitions.
Let \(d \geq 1\).
Given \(\vec{v}\) of \(\nats^d\) and \(i \in \nset{1}{d}\), we denote by \(\vec{v}(i)\) its \(i\)-th component.
Let \(\leprod\) be the order over \(\nats^d\) such that \(\vec{u} \leprod \vec{v}\) if and only if, for all \(i \in \nset{1}{d}\), \(\vec{u}(i) \leq \vec{v}(i)\).
The obtained \((\nats^d, \leprod)\) is a well-quasi-order (Dickson's lemma \cite{Dickson}).
A ""descending chain"" is a sequence \(D_0 \supsetneq D_1 \supsetneq D_2 \dots\) of sets \(D_k \subseteq \nats^d\) that are downward-closed for \(\leprod\).
Because \((\nats^d,\leprod)\) is a "well-quasi-order", all "descending chains" have finite length, \emph{i.e.}, are of the form \(D_0, \dots, D_\ell\) with \(\ell \in \nats\).
To bound the length of "descending chains" \cite{LS21,SS23} we need the sequence to be \emph{controlled} and \emph{\(\omega\)-monotone}.

We extend \(\nats\) to \(\omeganats \deff \nats \cup \set{\omega}\) with \(n < \omega\) for all \(n \in \nats\).
Given \(\vec{v} \in \omeganats^d\), its ""norm@@vector"" \(\norm{\vec{v}}\) is the largest \(\vec{v}(i)\) that is not \(\omega\).
An ""ideal"" \(I\) is the downward-closure in \(\nats^d\) of a vector \(\vec{v} \in \omeganats^d\), \emph{i.e.}, \(I = \downwardclosure{\{\vec{v}\}}\cap\nats^d\); its ""norm@@ideal"" \(\idealnorm{I}\) is \(\norm{\vec{v}}\).
A downward-closed set \(D \subseteq \nats^d\) is canonically represented as a finite union of "ideals"; its ""norm@@dcset"" \(\norm{D}\) is the maximum of the norms of its ideals.
Given \(N > 0\) and a "descending chain" \((D_k)\), we call \((D_k)\) ""\(N\)-controlled"" when, for all \(k\), \(\norm{D_k} \leq (k+1) N\).
In a "descending chain" \((D_k)\), an ideal \(I\) is proper at step \(k\) if \(I\) is in the canonical representation of \(D_k\) but \(I \nsubseteq D_{k+1}\).
The sequence \((D_k)\) is ""\(\omega\)-monotone"" if, when an ideal \(I_{k+1}\) represented by some vector \(\vec{v}_{k+1}\) is "proper" at step \(k+1\), there is \(I_k\) that is "proper" at step \(k\) and that is represented by \(\vec{v}_k\) such that, for all \(i \in \nset{1}{d}\), if \(\vec{v}_{k+1}(i) = \omega\) then \(\vec{v}_k(i) = \omega\).

\begin{theorem}[{\cite{SS23}}]
	\label{bound-descending-chains}
	Let \(d, n > 0\).
	Every "descending chain" \((D_k)\) of \(\nats^d\) that is "\(n\)-controlled" and "\(\omega\)-monotone" has length at most \(n^{3^{d} (\log(d)+1)}\).
\end{theorem}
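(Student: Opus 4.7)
I would prove the bound by induction on the dimension $d$, following the length-function approach for well-quasi-orders but exploiting the two hypotheses of the statement: $n$-control bounds how fast coordinates can grow along the chain, while $\omega$-monotonicity heavily constrains how new proper ideals can appear. The strategy is to extract from each step a strictly decreasing integer-valued invariant whose total range is controlled by the claimed quantity.

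For the base case $d = 1$, a downward-closed subset of $\nats$ is either $\nats$ itself (the $\omega$-ideal) or a finite initial segment $\{0,\ldots,k\}$. By $\omega$-monotonicity, once the $\omega$-ideal $\nats$ disappears it cannot reappear; after that, every proper step must strictly decrease the maximum finite element. Combined with $n$-control (which bounds this maximum by $(k+1)n$ at step $k$), this yields length $O(n)$, well within the claimed bound.

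For the inductive step I would classify each drop by the $\omega$-signature $S \subseteq \{1,\ldots,d\}$ of the ideal that becomes proper at that step. The $\omega$-monotonicity hypothesis says that a proper ideal with signature $S$ at step $k+1$ traces back to a proper ideal at step $k$ whose signature contains $S$; so signatures evolve in a controlled, roughly decreasing manner across the chain. For a fixed signature $S$, the proper ideals are parameterised by their finite coordinates outside $S$ and form a descending chain in a lower-dimensional space $\nats^{d - |S|}$, to which the induction hypothesis applies (with a slightly inflated controller that accounts for the accumulated offsets from $n$-control). Summing the contributions over all possible signatures and composing the controllers through the recursion gives a polynomial-in-$n$ bound.

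\textbf{The main obstacle} will be obtaining the tight exponent $3^d(\log d + 1)$ rather than a tower. A naive union bound over the $2^d$ signatures, iterated through the recursion, easily produces bounds like $n^{d!}$ or even Ackermannian quantities, which are what one gets for general (non-$\omega$-monotone) descending chains. The factor $3^d$ in the exponent suggests a careful three-regime decomposition at each recursive level (e.g., steps where the signature is unchanged, strictly shrinks, or where a genuinely new ideal appears), and the $\log d + 1$ factor reflects a depth-based accounting of nested recursive openings. Working this bookkeeping out tightly, as done in \cite{SS23}, is the delicate combinatorial heart of the proof; once the induction is set up correctly, the verification that $\omega$-monotonicity rules out the blowups is essentially automatic.
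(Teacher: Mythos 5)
The statement you are asked to prove is not proved in this paper at all: it carries the attribution \cite{SS23} directly in its header and is imported as a black-box prerequisite for the structural theorem on transfer flows. There is therefore no proof in the paper against which to compare your attempt.

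Taken on its own terms, your sketch identifies the right general direction --- induction on the dimension $d$, using $n$-control to bound coordinate growth and $\omega$-monotonicity to constrain how the $\omega$-signatures of proper ideals evolve --- and your $d=1$ base case is correct: in $\nats$ each $D_k$ is a single ideal, $\omega$-monotonicity prevents the $\omega$-ideal from reappearing once lost, and $n$-control then bounds the strictly decreasing sequence of maxima. But the proposal does not carry out the inductive step. You explicitly defer the derivation of the exponent $3^d(\log d + 1)$ --- what you yourself call ``the delicate combinatorial heart'' --- to the very reference you are meant to be reproving, and you offer the ``three-regime decomposition'' only as speculation about where the factor $3^d$ might come from, with no supporting argument. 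The closing claim that, once the induction is set up, ``the verification that $\omega$-monotonicity rules out the blowups is essentially automatic'' is precisely backwards: naively recursing over all $2^d$ $\omega$-signatures is exactly how one obtains $n^{d!}$-type or non-elementary bounds, and the entire content of \cite{SS23} is a nonobvious accounting scheme that avoids this. As written, this is a plan that names the hard part but does not solve it.
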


We now use this bound to prove \cref{structural-theorem-flows}.
Recall that we write \(m = \size{\states}\) and \(M = \size{\bstates}\).
Let \(d \deff m^2 + 2\) and \(N \deff M^2 \cdot 2^{m^2} = \size{\bstates^2 \times 2^{\states^2}}\).
We fix two arbitrary bijective mappings \(\finitemapping: \bstates^2 \times 2^{\states^2} \to \nset{1}{N}\) and \(\indexof:\states^2 \to \nset{1}{m^2}\).
We map "transfer flows" to sets of elements of \(\nats^d\) with \(\mapencoding: \transferflows \to 2^{\nats^d}\).
Let \(\atf = (f,\bstate, \bstate') \in \transferflows\) and \(S \deff \set{(q,q') \mid f(q,q') = \dummysymb}\).
A vector \(\vec{v} \in \nats^{d}\) is in \(\mapencoding(\atf)\) when:
\begin{itemize}[noitemsep,topsep=0pt,parsep=0pt,partopsep=0pt]
	\item for all \((q,q')\) such that \(f(q,q') \ne \dummysymb\), \(\vec{v}(\indexof(q,q')) = f(q,q')\);
	\item \(\vec{v}(m^2+1) = \finitemapping(\bstate, \bstate', S)\);
	\item \(\vec{v}(m^2+2) = N+1 -  \finitemapping(\bstate, \bstate', S)\).
\end{itemize}
Note that there is no restriction to \(\vec{v}(i)\) when the corresponding pair \((q,q') = \indexof^{-1}(i)\) is such that \(f(q,q') = \dummysymb\).
Also, if \(\vec{v} \in \mapencoding(\atf)\) and \(\vec{u} \in \mapencoding(\atf')\) are such that \(\vec{v} \leprod \vec{u}\), then \(\vec{u}(m^2+1) = \vec{v}(m^2+1)\) and \(\vec{u}(m^2+2) = \vec{v}(m^2+2)\), so that \(\atf\) and \(\atf'\) have the same states of \(\bstates\) and the same \(\dummysymb\) components.
For \(\atf \ne \atf'\), we have \(\mapencoding(\atf), \mapencoding(\atf') \ne \emptyset\) but \(\mapencoding(\atf) \cap \mapencoding(\atf') = \emptyset\), a property that we call ""strong injectivity"" of \(\mapencoding\).
The vectors of \(\nats^d \cap \mapencoding(\transferflows)\) are exactly those whose last two components are strictly positive and sum to \(N+1\).
We build a "decreasing chain" \((D_k)\) such that \(D_k \cap \mapencoding(\transferflows) = \mapencoding(\transferflows \setminus \transfersof{\transitions^{\leq k}})\).

Let \(\zerovectors\) denote the set of vectors \(\vec{v}\) such that either \((\vec{v}(m^2+1),\vec{v}(m^2+2))= (N+1,0)\) or \((\vec{v}(m^2+1),\vec{v}(m^2+2))= (0,N+1)\).
For technical reasons (related to \(\omega\)-monotonicity), we will enforce that \(D_k \cap \zerovectors = \emptyset\) for every \(k\).
Note that \(\zerovectors \cap \relevantvectors = \emptyset\): vectors in \(\zerovectors\) have no relevance in terms of "transfer flows".
For all \(k \geq 0\), let \(U_k\deff \upwardclosure{\mapencoding(\transfersof{\transitions^{\leq k}}) \cup \zerovectors}\), and let \(D_k = \nats^d \setminus U_k\); \((D_k)\) is a "decreasing chain" because all \(D_k\) are downward-closed and \(\transfersof{\transitions^{\leq k}} \subseteq \transfersof{\transitions^{\leq k+1}}\) for all \(k\).

\begin{lemmarep}
	\label{upward-closure-relevant-vectors}
	For all \(k\), \(U_k \cap \relevantvectors = \mapencoding(\transfersof{\transitions^{\leq k}})\) and \(D_k \cap \relevantvectors = \mapencoding(\transferflows \setminus \transfersof{\transitions^{\leq k}})\).
\end{lemmarep}
\begin{appendixproof}
	It suffices to prove the claim for \(U_k\).
	Trivially, \(\mapencoding(\transfersof{\transitions^{\leq k}}) \subseteq U_k \cap \relevantvectors\).
	Conversely, let \(\vec{v} \in U_k \cap \relevantvectors\).
	There exists \(\vec{u} \in \mapencoding(\transfersof{\transitions^{\leq k}}) \cup \zerovectors\) such that \(\vec{u} \leprod \vec{v}\).
	Since \(\vec{v} \in \relevantvectors\), the last two components of \(\vec{v}\) sum to \(N\) and same for \(\vec{u}\), so that \(\vec{u}(m^2+1) = \vec{v}(m^2+1)\) and \(\vec{u}(m^2+2) = \vec{v}(m^2+2)\).
	This proves that \(\vec{u} \notin \zerovectors\) because \(\vec{v} \in \relevantvectors\), therefore \(\vec{u} \in \mapencoding(\transfersof{\transitions^{\leq k}})\).
	Let \(\atf_u = (f_u,\bstate_u, \bstate_u') \in \transfersof{\transitions^{\leq k}}\) such that \(\vec{u} \in \mapencoding(\atf_u)\); let \(\atf_v = (f_v, \bstate_v, \bstate_v') \in \transferflows\) such that \( \vec{v} \in \mapencoding(\atf_v)\).
	Because \(\vec{u}\) and \(\vec{v}\) coincide on the last two component, we have \(\bstate_u = \bstate_v\), \(\bstate_u' = \bstate_v'\) and \(f_u(q,q') = \dummysymb\) whenever \(f_v(q,q') = \dummysymb\).
	When \(f_u(q,q'), f_v(q,q') \ne \dummysymb\), we have \(f_v(q,q') \leq f_u(q,q')\), hence \(\atf_v \in \transfersof{\transitions^{\leq k}}\) because \(\transfersof{\transitions^{\leq k}}\) is upward-closed for \(\letf\).
\end{appendixproof}

Note that if \(D_{k+1} = D_k\) then, by \cref{upward-closure-relevant-vectors}, \(\mapencoding(\transfersof{\transitions^{\leq k+1}}) = \mapencoding(\transfersof{\transitions^{\leq k}})\) and, by injectivity of \(\mapencoding\), \(\transfersof{\transitions^{\leq k+1}} = \transfersof{\transitions^{\leq k}}\).
This means that if \(D_{k+1} = D_k\) then \(\transfersof{\transitions^{\leq k}}\) is stable under product by \(\transfersof{\atrans}\) for all \(\atrans\), hence that \(\transfersof{\transitions^*} = \transfersof{\transitions^{\leq k}}\).
Let \(L\) be the smallest \(k \in \nats\) such that \(D_{k} \ne D_{k-1}\); it exists because \((\nats^d, \leprod)\) is a well-quasi-order.
To prove \cref{structural-theorem-flows}, we want \(L \leq N^{3^{d} (\log(d)+1)}\).
To apply \cref{bound-descending-chains}, we need to prove that \((D_k)\) is "\((N+1)\)-controlled" and "\(\omega\)-monotone".

Transfer flows in \(\minwqo{\T}\) have weight bounded by \(2\).
Let \(\atf \in \minwqo{\Tleq{k}}\), there are \(\ell \leq k\) and \(\tftrans_1,\dots,\tftrans_\ell \in \minwqo{\transfersof{\transitions}}\) such that \(\atf \in \tftrans_1 \tftimes \ldots \tftimes \tftrans_\ell\).
A straightforward induction using \ref{basic2} proves that \(\tfweight{\atf} \leq 2 \ell \leq 2 k\).
This proves that minimal elements of \(\Tleq{k}\) have weight bounded by \(2k\).
In turn, this bounds the norm of minimal elements of \(U_k\) by \(\max(N+1,2k)\).
Because \(D_k = \nats^d \setminus U_k\), this last bound applies to the norm of \(D_k\).

\begin{toappendix}
	\begin{lemma}
		\label{min-Uk-min-Tleqk}
		Minimal vectors of \(\mapencoding(\Tleq{k})\) are in \(\mapencoding(\minwqo{\Tleq{k}})\).
	\end{lemma}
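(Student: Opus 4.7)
The plan is to use the structural properties of $\mapencoding$, in particular the strong injectivity noted just above and the fact that $\mapencoding(\atf)$ has a canonical (unique) minimal vector obtained by setting to $0$ every coordinate associated with a $\dummysymb$-pair of $\atf$.

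First, I would observe the following. Given $\atf = (f, \bstate, \bstate')$ with $\dummysymb$-pattern $S = \{(q,q') : f(q,q') = \dummysymb\}$, the set $\mapencoding(\atf)$ consists of all vectors whose last two components encode $(\bstate, \bstate', S)$, whose coordinates at indices $\indexof(q,q')$ with $(q,q') \notin S$ take the value $f(q,q')$, and whose coordinates at indices $\indexof(q,q')$ with $(q,q') \in S$ can be arbitrary in $\nats$. Consequently $\mapencoding(\atf)$ has a unique minimal element for $\leprod$: the vector $\vec{v}^\atf$ with $0$'s on every $\dummysymb$-coordinate.

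Next, let $\vec{v}$ be minimal in $\mapencoding(\Tleq{k})$. By strong injectivity, there is a unique $\atf \in \Tleq{k}$ with $\vec{v} \in \mapencoding(\atf)$; and since $\mapencoding(\atf) \subseteq \mapencoding(\Tleq{k})$, $\vec{v}$ is also minimal in $\mapencoding(\atf)$, so $\vec{v} = \vec{v}^\atf$. It remains to show $\atf \in \minwqo{\Tleq{k}}$. Suppose for contradiction there exists $\atf' = (f', \bstate, \bstate') \in \Tleq{k}$ with $\atf' \letf \atf$ and $\atf' \neq \atf$. The definition of $\letf$ forces $\atf'$ to have the same control part $(\bstate, \bstate')$ and the same $\dummysymb$-pattern $S$ as $\atf$, with $f'(q,q') \leq f(q,q')$ on non-$\dummysymb$ coordinates and strict inequality on at least one such coordinate. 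Then $\vec{v}^{\atf'} \in \mapencoding(\atf') \subseteq \mapencoding(\Tleq{k})$ has the same last two components as $\vec{v}$ (same $(\bstate, \bstate', S)$), the same $0$ values on $\dummysymb$-coordinates, and strictly smaller or equal values on non-$\dummysymb$-coordinates. Hence $\vec{v}^{\atf'} \leprod \vec{v}$ and $\vec{v}^{\atf'} \neq \vec{v}$, contradicting the minimality of $\vec{v}$.

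I do not foresee a serious obstacle here: the whole argument is essentially a bookkeeping verification that the encoding $\mapencoding$ faithfully transports the order $\letf$ to $\leprod$ on minimal representatives, which is exactly what the careful inclusion of $\finitemapping(\bstate, \bstate', S)$ into the last two coordinates was designed to guarantee. The only subtlety is to double-check that $\atf' \letf \atf$ with $\atf' \ne \atf$ really implies strict inequality at some non-$\dummysymb$ coordinate, but this follows immediately because the control parts and $\dummysymb$-patterns coincide, so any difference between $\atf'$ and $\atf$ must occur in a non-$\dummysymb$ value of $f'$ vs.\ $f$.
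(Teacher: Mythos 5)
Your proof is correct and follows essentially the same route as the paper's: both identify the unique transfer flow $\atf$ with $\vec{v} \in \mapencoding(\atf)$ via strong injectivity, then derive a contradiction from any $\atf' \letf \atf$ with $\atf' \ne \atf$ by exhibiting a vector of $\mapencoding(\atf')$ that is $\leprod$-below $\vec{v}$ and distinct from it. The only presentational difference is that you first establish $\vec{v}$ is the canonical minimal representative $\vec{v}^{\atf}$ before comparing, whereas the paper compares $\vec{v}$ directly with a suitably chosen $\vec{u}\in\mapencoding(\atf')$; this is immaterial.
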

	\begin{proof}
		Let \(\vec{v}\) minimal in \(\mapencoding(\Tleq{k})\).
		In particular, \(\vec{v} \in \mapencoding(\Tleq{k})\); let \(\atf = (f, \controlloc_1, \controlloc_2) \in \Tleq{k}\) such that \(\vec{v} \in \mapencoding(\atf)\).
		Our aim is to prove that \(\atf \in \minwqo{\Tleq{k}}\).
		Let \(S \deff \set{(q,q') \mid f(q,q') = \dummysymb}\).
		Let \(\atf' = (f', \controlloc_1', \controlloc_2) \letf \atf\); we prove that \(\atf' = \atf\).
		Because \(\atf' \letf \atf\), by letting \(S' \deff \set{(q,q') \mid f(q,q') = \dummysymb}\), we have \(S' = S\).
		Therefore, there exists \(\vec{u} \in \mapencoding(\atf')\) such that \(\vec{u}(i) = 0\) for all \(i \in \indexof^{-1}(S)\).
		We claim that \(\vec{u} \leprod \vec{v}\).
		We have \(\vec{u}(m^2+1) = \vec{v}(m^2+1)\) and \(\vec{u}(m^2+2) = \vec{v}(m^2+2)\); for all \(i \in \indexof^{-1}(S)\), \(\vec{u}(i) = 0 \leq \vec{v}(i)\); for all \(i \notin \indexof^{-1}(S)\), by letting \((q,q') \deff \indexof^{-1}(i)\), we have \(\vec{u}(i) = f'(q,q') \leq f(q,q') = \vec{v}(i)\).
		We have therefore \(\vec{u} \in \mapencoding(\Tleq{k})\) and \(\vec{u} \leprod \vec{v}\), but \(\vec{v}\) is minimal in \(\minwqo{\mapencoding(\Tleq{k})}\) therefore \(\vec{u} = \vec{v}\).
		This implies that \(f = f'\) hence that \(\atf = \atf'\).
	\end{proof}

	\begin{lemma}
		\label{bound-minwqo-Tleqk}
		For all \(k \geq 0\), for all \(\atf \in \minwqo{\Tleq{k}}\), \(\tfweight{\atf} \leq 2 k\).
	\end{lemma}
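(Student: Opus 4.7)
The plan is to decompose any minimal transfer flow in $\Tleq{k}$ into a product of at most $k$ minimal single-transition transfer flows, and then bound the weight by summing.

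First, I would unfold the definition: $\Tleq{k} = \bigcup_{w \in \transitions^{\leq k}} \transfersof{w}$, so for $\atf \in \minwqo{\Tleq{k}}$ I can fix a word $w = \atrans_1 \cdots \atrans_\ell$ with $\ell \leq k$ and $\atf \in \transfersof{w} = \transfersof{\atrans_1} \tftimes \cdots \tftimes \transfersof{\atrans_\ell}$ (using associativity, \ref{basic3}). This means there exist $\atf'_1 \in \transfersof{\atrans_1}, \ldots, \atf'_\ell \in \transfersof{\atrans_\ell}$ with $\atf \in \atf'_1 \tftimes \cdots \tftimes \atf'_\ell$. Since each $\transfersof{\atrans_i}$ is upward-closed, I can pick $\tftrans_i \in \minwqo{\transfersof{\atrans_i}}$ with $\tftrans_i \letf \atf'_i$; applying \ref{basic2} factor by factor yields $\atf \in \tftrans_1 \tftimes \cdots \tftimes \tftrans_\ell$.

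Next, I would observe that this product is a subset of $\Tleq{k}$, hence $\atf$ is minimal in $\tftrans_1 \tftimes \cdots \tftimes \tftrans_\ell$ as well. A straightforward induction on $\ell$ using associativity (\ref{basic3}) and the weight bound \ref{basic4} gives $\tfweight{\atf} \leq \sum_{i=1}^\ell \tfweight{\tftrans_i}$: in the inductive step, write the product as $(\tftrans_1 \tftimes \cdots \tftimes \tftrans_{\ell-1}) \tftimes \tftrans_\ell$, apply \ref{basic4} to bound the weight of $\atf$ by the sum of the weight of some minimal element of the first factor (bounded inductively by $2(\ell-1)$) and $\tfweight{\tftrans_\ell}$.

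Finally, I would verify by direct inspection of the definition of $\transfersof{\atrans}$ that every $\tftrans \in \minwqo{\transfersof{\atrans}}$ has weight exactly $2$: in all four cases (whether $q_1, q_2, q_3$ coincide or not), the constraints force exactly two unit contributions (either $f(q_1,q_1) = 2$ when all three coincide, or one unit on $f(q_1,q_1)$ and one unit on $f(q_2,q_3)$, possibly merged when indices collide), while every other non-$\dummysymb$ entry can be set to $0$ in a minimal flow. Combining these bounds yields $\tfweight{\atf} \leq 2\ell \leq 2k$, as desired. The only subtle point is the inductive use of \ref{basic4} together with \ref{basic2}, which is what lets us pass from ``$\atf$ is minimal in a big product set'' to ``$\atf$ is minimal in a specific product of individual transfer flows''; everything else is a routine case analysis or direct computation.
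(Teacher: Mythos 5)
Your proof is correct and takes essentially the same approach as the paper: both arguments peel off one transition at a time, using \ref{basic2} to pass to minimal elements of the factors and \ref{basic4} to bound the weight of a minimal element of a two-factor product. The only cosmetic difference is that the paper inducts directly on \(k\) (handling the trivial case \(\atf\in\Tleq{k}\) separately), whereas you first fix a witness word of length \(\ell\leq k\) and induct on \(\ell\).
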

	\begin{proof}
		The proof is by induction on \(k\) and relies on the bound from \ref{basic4}.
		For \(k=0\), \(\Tleq{0} = \neutraltransferflows\) and "transfer flows" in \(\minwqo{\neutraltransferflows}\) only have values \(0\) and \(\dummysymb\), so that they have "weight@@tf" \(0\).
		Suppose that the statement is true for \(k\), and prove it for \(k+1\).
		Let \(\atf \in \minwqo{\Tleq{k+1}}\).
		If \(\atf \in \Tleq{k}\) then, because \(\Tleq{k} \subseteq \Tleq{k+1}\), \(\atf \in \minwqo{\Tleq{k}}\) and it suffices to apply the induction hypothesis on \(\atf\).
		Otherwise, there is \(\atf_k \in \Tleq{k}\), \(\tftrans \in \Tmin\) such that \(\atf \in \atf_k \tftimes \tftrans\).
		By \ref{basic2}, we may assume that \(\atf_k \in \minwqo{\Tleq{k}}\).
		By applying the induction hypothesis, we have \(\tfweight{\atf_k} \leq \maxweight k\); by construction of \(\T\), we have \(\tfweight{\tftrans} \leq \maxweight\).
		By \ref{basic4}, we obtain that \(\tfweight{\atf} \leq \maxweight k + \maxweight = \maxweight(k+1)\), concluding the induction.
	\end{proof}
\end{toappendix}

\begin{lemmarep}
	\label{Dk-controlled}
	\((D_k)\) is \((N+1)\)-controlled and \(\omega\)-monotone.
\end{lemmarep}
\begin{appendixproof}
	Towards proving \cref{Dk-controlled}, we start by bounding the norm of minimal elements of \(U_k\) which is the result of \cref{min-Uk-min-Tleqk}.
	For all \(k\), \(U_k\) is "upward-closed" for \(\leprod\) hence it has a finite "basis" \(\minwqo{U_k}\).

	Note that, because \(\mapencoding(\Tleq{k})\) is not "upward-closed", we cannot write that minimal vectors of \(\mapencoding(\Tleq{k})\) are in the "basis" of the set.
	The remaining task is to bound the values of "transfer flows" in \(\minwqo{\Tleq{k}}\), which is the result of \cref{bound-minwqo-Tleqk}.

	Because \(U_k\) is the upward-closure of \(\mapencoding(\Tleq{k}) \cup \zerovectors\), we have \(\minwqo{U_k} \subseteq {\mapencoding(\Tleq{k})} \cup {\zerovectors}\).
	A vector \(\vec{v} \in \minwqo{\zerovectors}\) is such that \(\vec{v}(i) = 0\) for all \(i \in \nset{1}{m^2}\), and \(\max(\vec{v}(m^2+1), \vec{v}(m^2+2)) =N+1\), so that \(\norm{\vec{v}} = N+1\).
	We now consider vectors in \(\minwqo{U_k} \cap {\mapencoding(\Tleq{k})}\); such vectors must be minimal in \({\mapencoding(\Tleq{k})}\).
	We now conclude the proof of \cref{Dk-controlled}.
	Let \(k \in \nats\), and let \(\vec{v} \in \omeganats^d\) be the representing vector of some ideal of \(D_k\).
	First, we argue that \(\vec{v}(i) \leq N+1\) for \(i \in \set{m^2+1, m^2+2}\).
	Indeed, we would otherwise have a vector \(\vec{u} \leprod \vec{v}\) such that \(\vec{u} \in \zerovectors\), which contradicts the fact that \(\zerovectors \subseteq U_k\).
	Let \(i \in \nset{1}{m^2}\) such that \(\vec{v}(i) \ne \omega\).
	Let \(\vec{u}\) denote the vector equal to \(\vec{v}\) except that \(\vec{u}(i) \deff \vec{v}(i) + 1\).
	Because \(\vec{v}\) is maximal in \(D_k\), \(\downwardclosure{\set{\vec{u}}} \nsubseteq D_k\).
	Therefore, there is a vector \(\vec{u}_m \in \minwqo{U_k}\) such that \(\vec{u}_m\leprod \vec{u}\).
	We must have \(\vec{u}_m(i) = \vec{v}(i)+1\) because we would otherwise have \(\vec{u}_m \leprod \vec{v}\), which would imply that \(\vec{u}_m \in D_k\) and would contradict \(\vec{u}_m \in U_k\).
	By definition of \(U_k\), we have \(\vec{u}_m \in \zerovectors \cup \mapencoding(\Tleq{k})\); but \(\vec{u}_m(i)>0\), hence \(\vec{u}_m \notin \zerovectors\) therefore \(\vec{u}_m \in \mapencoding(\Tleq{k})\).
	Moreover, because \(\vec{u}_m \in \minwqo{U_k}\), by \cref{min-Uk-min-Tleqk}, there is \(\atf_m = (f_m, \controlloc, \controlloc') \in \minwqo{\Tleq{k}}\) such that \(\vec{u}_m \in \mapencoding(\atf_m)\).
	By \cref{bound-minwqo-Tleqk}, we have \(\tfweight{\atf_m} \leq \maxweight k\) so that \(f_m(q,q') \in \nset{0}{\maxweight k} \cup \set{\dummysymb}\) for all \(q,q'\).
	This proves in particular that \(\vec{v}(i) \leq \vec{u}_m(i) \leq \maxweight k\).
	Overall, we have proved that, for all \(i \in \nset{1}{m^2}\) such that \(\vec{v}(i) \ne \omega\), \(\vec{v}(i) \leq \maxweight k\), and that \(\vec{v}(m^2+1) \leq N+1\) and \(\vec{v}(m^2+2) \leq N+1\), so that \(\norm{\vec{v}} \leq \max(\maxweight k, N+1)\).
	Because \(N+1 \geq 2\), the norm of \(D_k\) is bounded by \((N+1) (k+1)\), concluding the part of the proof that the sequence is \((N+1)\)-controlled.

	Let us now turn to the \(\omega\)-monotonicity which is very technical.

	Let \(k \geq 0\), let \(I_{k+1} \subseteq D_{k+1}\) be a "proper ideal at step \(k+1\)".
	Let \(\vec{v}_{k+1} \in \omeganats^d\) be the vector representing \(I_{k+1}\).
	Because \(I_{k+1}\) is proper, \(I_{k+1} \nsubseteq D_{k+2}\).

	\begin{lemma}
		\label{ikplisone-intersects}
		\(I_{k+1} \cap (\mapencoding(\Tpow{k+2}) \setminus \mapencoding(\Tleq{k+1})) \ne \emptyset\).
	\end{lemma}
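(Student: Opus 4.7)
The plan is to exploit the very definition of ``proper ideal at step $k+1$'' to extract a witness in the desired intersection. Since $I_{k+1}$ is proper at step $k+1$, we have $I_{k+1} \subseteq D_{k+1}$ and $I_{k+1} \not\subseteq D_{k+2}$, so I pick some $\vec{w} \in I_{k+1} \cap U_{k+2}$. The vector $\vec{w}$ therefore dominates some minimal element of $U_{k+2} = \upwardclosure{\mapencoding(\Tleq{k+2}) \cup \zerovectors}$.

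Next I would rule out that $\vec{w}$ dominates an element of $\zerovectors$: since $\upwardclosure{\zerovectors} \subseteq U_{k+1}$ by definition of $U_{k+1}$, while $\vec{w} \in I_{k+1} \subseteq D_{k+1} = \nats^d \setminus U_{k+1}$, such domination is impossible. Hence there is $\vec{u} \in \mapencoding(\Tleq{k+2})$ with $\vec{u} \leprod \vec{w}$. Symmetrically, $\vec{u} \notin \mapencoding(\Tleq{k+1})$: otherwise $\vec{w} \in \upwardclosure{\mapencoding(\Tleq{k+1})} \subseteq U_{k+1}$, again contradicting $\vec{w} \in D_{k+1}$.

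From the decomposition $\transitions^{\leq k+2} = \transitions^{\leq k+1} \cup \transitions^{k+2}$ I get $\Tleq{k+2} = \Tleq{k+1} \cup \Tpow{k+2}$, so taking a transfer flow $\atf_u$ with $\vec{u} \in \mapencoding(\atf_u)$, the condition $\atf_u \in \Tleq{k+2} \setminus \Tleq{k+1}$ forces $\atf_u \in \Tpow{k+2}$, i.e. $\vec{u} \in \mapencoding(\Tpow{k+2}) \setminus \mapencoding(\Tleq{k+1})$. Finally, as $I_{k+1}$ is an ideal (a downward-closed set) and $\vec{u} \leprod \vec{w} \in I_{k+1}$, we conclude $\vec{u} \in I_{k+1}$, so $\vec{u}$ lies in the claimed intersection.

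The only slightly delicate step is making sure the domination witness $\vec{u}$ can be chosen in $\mapencoding(\Tleq{k+2})$ rather than in $\zerovectors$; this is precisely why the construction of $(D_k)$ includes $\zerovectors$ in every $U_k$ (so that dominance by a zerovector would already place $\vec{w}$ into $U_{k+1}$), and once this is observed the rest of the argument is essentially bookkeeping.
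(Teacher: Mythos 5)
Your argument is correct and follows essentially the same route as the paper's proof: extract a witness $\vec{w} \in I_{k+1} \cap (U_{k+2}\setminus U_{k+1})$ from properness, pull out a dominated $\vec{u}$ from the generating set of $U_{k+2}$, exclude $\zerovectors$ and $\mapencoding(\Tleq{k+1})$ using $\vec{w}\in D_{k+1}$, and finish with downward-closedness of $I_{k+1}$. The one spot worth tightening is the clause ``taking a transfer flow $\atf_u$ with $\vec{u}\in\mapencoding(\atf_u)$'': to deduce $\atf_u\in\Tleq{k+2}$ you should pick $\atf_u$ directly from the membership $\vec{u}\in\mapencoding(\Tleq{k+2})$ (or cite the ``strong injectivity'' of $\mapencoding$), rather than an arbitrary preimage; alternatively, note that $\mapencoding(\Tleq{k+2})=\mapencoding(\Tleq{k+1})\cup\mapencoding(\Tpow{k+2})$ as sets, which makes that step immediate.
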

	\begin{proof}
		We know that \(I_{k+1} \cap (U_{k+2} \setminus U_{k+1}) \ne \emptyset\) because \(I_{k+1}\) is a "proper ideal at step \(k+1\)".
		Let \(\vec{v} \in I_{k+1} \cap (U_{k+2} \setminus U_{k+1})\).
		Because \(\vec{v} \in U_{k+2}\), there is \(\vec{u} \in \mapencoding(\Tleq{k+2}) \cup \zerovectors\) such that \(\vec{u} \leprod \vec{v}\).
		We have \(\vec{u} \notin \zerovectors\) as it would otherwise imply that \(\vec{v} \in U_{k+1}\), therefore \(\vec{u} \in \mapencoding(\Tleq{k+2})\).
		Also, \(\vec{u} \notin \mapencoding(\Tleq{k+1})\) as it would otherwise imply that \(\vec{v} \in U_{k+1}\).
		Because \(I_{k+1}\) is downward-closed, \(\vec{u} \in I_{k+1}\), so that \(\vec{u} \in I_{k+1} \cap (\mapencoding(\Tpow{k+2}) \setminus \mapencoding(\Tleq{k+1}))\).
	\end{proof}

	Given a set \(I \subseteq \nats^d\) of vectors and \(\tftrans \in \Tmin\), let \(\preflows{\tftrans}{I}\) be the set of vectors \(\vec{v}\) such that there are "transfer flows" \(\atf_{I} \in \transferflows\), \(\atf_{\vec{v}} \in \atf_I \tftimes \tftrans \) with \(\vec{v} \in \mapencoding(\atf_{\vec{v}})\) and \(\mapencoding(\atf_I) \subseteq I\).

	\begin{lemma}
		For all \(\tftrans \in \Tmin\), \(\preflows{\tftrans}{I_{k+1}} \subseteq D_k\).
	\end{lemma}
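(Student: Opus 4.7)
The plan is to prove the inclusion by contradiction. Suppose $\vec v \in \preflows{\tftrans}{I_{k+1}} \cap U_k$; I derive that $\mapencoding(\atf_I) \cap U_{k+1} \ne \emptyset$, which contradicts $\mapencoding(\atf_I) \subseteq I_{k+1} \subseteq D_{k+1} = \nats^d \setminus U_{k+1}$.

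First I unfold both assumptions. From $\vec v \in \preflows{\tftrans}{I_{k+1}}$ I pick $\atf_I \in \transferflows$ and $\atf_{\vec v} \in \atf_I \tftimes \tftrans$ with $\vec v \in \mapencoding(\atf_{\vec v})$ and $\mapencoding(\atf_I) \subseteq I_{k+1}$. From $\vec v \in U_k$ I pick $\vec u \leprod \vec v$ with $\vec u \in \mapencoding(\Tleq k) \cup \zerovectors$. Because $\vec v \in \relevantvectors$, its last two coordinates are both strictly positive and sum to $N+1$, which rules out $\vec u \in \zerovectors$; hence $\vec u \in \mapencoding(\atf_u)$ for some $\atf_u \in \Tleq k$. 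The strong injectivity of $\mapencoding$ combined with $\vec u \leprod \vec v$ then gives $\atf_u \letf \atf_{\vec v}$ (same control part, same $\dummysymb$-positions, smaller numeric entries).

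The central step is to build $\atf_I' \letf \atf_I$ lying in $\Tleq{k+1}$. The intuition is that since $\atf_u \in \Tleq k$ already realises in at most $k$ steps something $\letf \atf_{\vec v}$, one should be able to run the decomposition $\atf_{\vec v} \in \atf_I \tftimes \tftrans$ backwards along this smaller witness and peel off the final transition $\tftrans$. Concretely, by Lemma~\ref{sequence-transitions-product-tf} I fix $w \in \transitions^{\leq k}$ with $\atf_u \in \transfersof{w}$, so $\transfersof{w} \tftimes \tftrans \subseteq \Tleq{k+1}$. Combining the product witness $H : \states^3 \to \natsanddummy$ for $\atf_{\vec v} \in \atf_I \tftimes \tftrans$ with the $w$-decomposition of $\atf_u$, I reduce entries of $H$ to obtain $H' \le H$ whose $(q_1,q_3)$-marginal $\sum_{q_2} H'(q_1,q_2,q_3)$ defines some $\atf_I' \letf \atf_I$ that can be read as an element of $\transfersof{w} \subseteq \Tleq k \subseteq \Tleq{k+1}$, while the $(q_2,q_3)$-marginal still majorises $\tftrans$. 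The rigidity of minimal elements of $\T$ arising from IOPP transitions of the form $q_1 \trans{q_2} q_3$ is what makes this peeling-off consistent with the product-witness conditions \ref{composition-condition1}--\ref{composition-condition3} and with the arithmetic on $\natsanddummy$.

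Finally, from $\atf_I' \letf \atf_I$ (matching $\dummysymb$-positions) every $\vec v_I \in \mapencoding(\atf_I)$ dominates some $\vec v_I' \in \mapencoding(\atf_I') \subseteq \mapencoding(\Tleq{k+1}) \subseteq U_{k+1}$, so $\vec v_I \in U_{k+1}$ by upward-closure of $U_{k+1}$. Hence $\mapencoding(\atf_I) \cap U_{k+1} \ne \emptyset$, yielding the announced contradiction with $\mapencoding(\atf_I) \subseteq I_{k+1} \subseteq D_{k+1}$. The hard part is clearly the central ``inverse product'' construction of $\atf_I'$; the rest is a straightforward unfolding of definitions, exploiting the $(N{+}1)$-sum constraint on the last two coordinates of vectors in $\relevantvectors$ to rule out $\zerovectors$ and the strong injectivity of $\mapencoding$ to import comparisons between vectors into comparisons between transfer flows.
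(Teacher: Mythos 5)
Your proof builds on a literal reading of the paper's definition of \(\preflows{\tftrans}{I_{k+1}}\), which indeed says \(\atf_{\vec v} \in \atf_I \tftimes \tftrans\), but that definition contains a typo: both \cref{pret-intersects} (where \(\mapencoding(\atf')\subseteq \preflows{\tftrans}{I_{k+1}}\) is deduced from \(\atf\in\atf'\tftimes\tftrans\) and \(\mapencoding(\atf)\subseteq I_{k+1}\)) and the paper's own proof of this lemma use the reading \(\atf_I \in \atf_{\vec v} \tftimes \tftrans\), with the flow mapping into \(I_{k+1}\) as the \emph{result} of composing \(\atf_{\vec v}\) with \(\tftrans\). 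Under this intended reading, the proof is a short forward argument: from \(\vec v \in U_k \cap \relevantvectors\) one gets \(\vec v \in \mapencoding(\Tleq{k})\) by \cref{upward-closure-relevant-vectors}, hence \(\atf_{\vec v} \in \Tleq{k}\) by strong injectivity, hence \(\atf_I \in \atf_{\vec v} \tftimes \tftrans \subseteq \Tleq{k+1}\), hence \(\emptyset \neq \mapencoding(\atf_I) \subseteq U_{k+1}\), contradicting \(\mapencoding(\atf_I) \subseteq I_{k+1} \subseteq D_{k+1}\). No inverse construction is needed.

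Under your reading the product goes the other way, and the entire burden falls on the ``peeling off'' step, which you only sketch and which is precisely the crux; it does not go through. First, knowing that the \emph{product} \(\atf_{\vec v}\) is realizable by a word of length \(\leq k\) says nothing about any \emph{particular} left factor \(\atf_I\) with \(\atf_{\vec v}\in\atf_I\tftimes\tftrans\): the map \(\atf_I \mapsto \atf_I \tftimes \tftrans\) is far from injective, and there is no canonical way to extract a realizable left factor from a realizable product. Second, the manipulation of the product witness \(H\) you describe mixes up the marginals: by \ref{composition-condition1}--\ref{composition-condition3}, the \((q_1,q_3)\)-marginal \(\sum_{q_2} H(q_1,q_2,q_3)\) equals the agent part of the \emph{product} \(\atf_{\vec v}\), while the left factor \(\atf_I\) is only bounded from below via the \((q_1,q_2)\)-marginal \(\sum_{q_3} H(q_1,q_2,q_3)\). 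Lowering \(H\) and taking its \((q_1,q_3)\)-marginal gives a flow \(\letf \atf_{\vec v}\), not a flow \(\letf \atf_I\), so ``whose \((q_1,q_3)\)-marginal defines some \(\atf_I' \letf \atf_I\)'' does not follow from anything you wrote. The peripheral steps (ruling out \(\zerovectors\) via the sum constraint on the last two coordinates, strong injectivity, the final upward-closure argument) are fine, but without the central construction the proof has a genuine gap.
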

	\begin{proof}
		Suppose by contradiction that we have \(\vec{v} \in \preflows{\tftrans}{I_{k+1}} \cap U_k\).
		There are \(\atf_{k}, \atf_{k+1}\) such that \(\vec{v} \in \mapencoding(\atf_k)\) and \(\mapencoding(\atf_{k+1}) \subseteq I_{k+1}\).
		In particular \(\vec{v} \in \relevantvectors \cap U_k\) hence \(\vec{v} \in \mapencoding(\Tleq{k})\) by \cref{upward-closure-relevant-vectors}.
		By "strong injectivity", this implies that \(\atf_k \in \Tleq{k}\), so that \(\atf_{k+1} \in \Tleq{k+1}\).
		Therefore, \(\mapencoding(\atf_{k+1}) \cap I_{k+1} \ne \emptyset\) but \(\mapencoding(\atf_{k+1}) \subseteq \mapencoding(\Tleq{k+1}) \subseteq U_{k+1}\), which contradicts \(I_{k+1} \subseteq D_{k+1}\).
	\end{proof}

	\begin{lemma}
		\label{pret-intersects}
		There is \(\tftrans \in \Tmin\) such that \(\preflows{\tftrans}{I_{k+1}} \cap U_{k+1} \ne \emptyset\).
	\end{lemma}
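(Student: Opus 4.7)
The plan is to witness the non-emptiness of $\preflows{\tftrans}{I_{k+1}} \cap U_{k+1}$ by constructing a pair $(\atf_I, \atf_v)$ with $\mapencoding(\atf_I) \subseteq I_{k+1}$, $\atf_v \in \atf_I \tftimes \tftrans$ and $\atf_v \in \Tleq{k+1}$ (which, via strong injectivity and the same calculation used in the proof just before the statement, is equivalent to $\mapencoding(\atf_v) \cap U_{k+1} \neq \emptyset$).

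First I would invoke \cref{ikplisone-intersects} to pick a vector $\vec{u} \in I_{k+1} \cap (\mapencoding(\Tpow{k+2}) \setminus \mapencoding(\Tleq{k+1}))$, and let $\atf_u$ be the unique transfer flow with $\vec{u} \in \mapencoding(\atf_u)$ (uniqueness by strong injectivity); necessarily $\atf_u \in \Tpow{k+2}\setminus\Tleq{k+1}$. Using associativity of $\tftimes$ (\ref{basic3}), decompose $\atf_u \in \alpha \tftimes \beta$ with $\alpha \in \Tpow{k+1}$ and $\beta \in \T$, then apply \ref{basic2} to replace $\beta$ by a minimal element $\tftrans \in \Tmin$ below $\beta$: the composition only grows, so we still have $\atf_u \in \alpha \tftimes \tftrans$. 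This $\tftrans$ will be our witness.

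Next, construct $\atf_I$ by coordinatewise trimming $\alpha$ against $\vec{v}_{k+1}$: keep the same control states and the same $\dummysymb$-support as $\alpha$, and on every non-$\dummysymb$ entry set $f_I(q,q') := \min\bigl(f_\alpha(q,q'),\,\vec{v}_{k+1}(\indexof(q,q'))\bigr)$ (interpreting $\min(x,\omega) = x$). By construction $\atf_I \letf \alpha$, and I would then verify $\mapencoding(\atf_I) \subseteq I_{k+1}$: the non-control coordinates are bounded by $\vec{v}_{k+1}$ by definition of $f_I$, and on coordinates where $\vec{v}_{k+1}$ is finite the trimming forces the bound, while on $\omega$-coordinates no bound is imposed. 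The control coordinates $m^2+1, m^2+2$ are handled by comparing the encoding of the $(\bstate_\alpha, \bstate_\alpha', S_\alpha)$-datum with the control components of $\vec{u}$, for which $\vec{u} \leprod \vec{v}_{k+1}$ gives the needed inequalities (using that $\bstate_\alpha = \bstate_u$ and that $S_\alpha$ is determined by the decomposition that produces $\atf_u$). Once this holds, \ref{basic2} gives $\alpha \tftimes \tftrans \subseteq \atf_I \tftimes \tftrans$, so $\atf_u \in \atf_I \tftimes \tftrans$.

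The main obstacle is the final step: since $\atf_u \notin \Tleq{k+1}$, we cannot simply take $\atf_v = \atf_u$; we must extract a \emph{different} $\atf_v \in \atf_I \tftimes \tftrans$ that lies in $\Tleq{k+1}$. The point is that $\atf_I$ is strictly below $\alpha$ at at least one coordinate (otherwise $\mapencoding(\alpha)$ would meet $I_{k+1}$, contradicting $\mapencoding(\alpha) \subseteq U_{k+1}$ disjoint from $D_{k+1} \supseteq I_{k+1}$). Starting from the product witness $H$ of $\atf_u \in \alpha \tftimes \tftrans$, I would modify $H$ along such a slack coordinate to obtain a lighter product witness $H'$ whose associated transfer flow $\atf_v$ still lies in $\atf_I \tftimes \tftrans$ but whose minimal realization re-routes through fewer transition steps, landing in $\Tleq{k+1}$. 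Verifying that this rerouting is always possible is the technically delicate part of the proof and is where the structure of IOPP transitions encoded in $\Tmin$ is essential.
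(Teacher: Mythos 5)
Your opening steps---invoking the previous lemma, lifting a witness vector to a transfer flow $\atf_u \in \Tpow{k+2}\setminus\Tleq{k+1}$ with $\mapencoding(\atf_u)\subseteq I_{k+1}$, and decomposing $\atf_u \in \alpha \tftimes \tftrans$ with $\alpha\in\Tpow{k+1}$, $\tftrans\in\Tmin$---are exactly the right ones, and in fact at that point the paper's proof is essentially done. The reason it looks hard to you is that the definition of $\preflows{\tftrans}{I}$ as printed has the product membership the wrong way round: the intended condition is $\atf_I \in \atf_{\vec{v}}\tftimes\tftrans$, not $\atf_{\vec{v}}\in\atf_I\tftimes\tftrans$. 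You can check this against the proof of $\preflows{\tftrans}{I_{k+1}}\subseteq D_k$ (the inference from $\atf_k\in\Tleq{k}$ to $\atf_{k+1}\in\Tleq{k+1}$ only works when $\atf_{k+1}\in\atf_k\tftimes\tftrans$), and against the sentence immediately after the present lemma, which spells out ``$\atf_{k+1}\in\atf_k\tftimes\tftrans$'' with $\mapencoding(\atf_k)\subseteq\preflows{\tftrans}{I_{k+1}}$ and $\mapencoding(\atf_{k+1})\subseteq I_{k+1}$. With the corrected reading one just takes $\atf_{\vec{v}}:=\alpha$ and $\atf_I:=\atf_u$: then $\mapencoding(\alpha)\subseteq\preflows{\tftrans}{I_{k+1}}$, and $\mapencoding(\alpha)\subseteq U_{k+1}$ since $\alpha\in\Tleq{k+1}$. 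No new transfer flow needs to be constructed.

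Reading the definition literally, your plan needs $\atf_I$ with $\mapencoding(\atf_I)\subseteq I_{k+1}$ and some $\atf_v\in\atf_I\tftimes\tftrans$ lying in $\Tleq{k+1}$, and both halves break. The coordinatewise trimming of $\alpha$ does not give $\mapencoding(\atf_I)\subseteq I_{k+1}$: vectors in $\mapencoding(\atf_I)$ carry $\alpha$'s control pair $(\bstate_\alpha,\bstate'_\alpha)$ and $\dummysymb$-support $S_\alpha$, which in general differ from $\atf_u$'s (in particular $\bstate'_\alpha$ is the intermediate Rabin state, not the final one), so the two control coordinates need not be bounded by those of $\vec{v}_{k+1}$; and on any index $i$ where $\alpha$ has $\dummysymb$ but $\vec{v}_{k+1}(i)$ is finite, vectors of $\mapencoding(\atf_I)$ are unconstrained at $i$ and escape $I_{k+1}$. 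More fundamentally, once $\mapencoding(\atf_I)\subseteq I_{k+1}\subseteq D_{k+1}$, the identity $D_{k+1}\cap\relevantvectors=\mapencoding(\transferflows\setminus\Tleq{k+1})$ together with strong injectivity forces $\atf_I\notin\Tleq{k+1}$; the ``re-routing'' you need---an element of $(\atf_I\tftimes\tftrans)\cap\Tleq{k+1}$---has no argument behind it and no structural reason to exist, and you concede as much. Flip the direction of the product in the definition and the lemma follows in a few lines.
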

	\begin{proof}
		By \cref{ikplisone-intersects}, there is \(\vec{v} \in I_{k+1} \cap (\mapencoding(\Tpow{k+2}) \setminus \mapencoding(\Tleq{k+1}))\).
		Let \(\atf = (f, \controlloc, \controlloc') \in \Tpow{k+2}\) such that \(\vec{v} \in \mapencoding(\atf)\).
		Because \(\vec{v}\notin \mapencoding(\Tleq{k+1})\), \(\atf \notin \Tleq{k+1}\).
		Also, \(\mapencoding(\atf) \subseteq I_{k+1}\).
		Indeed, by \cref{upward-closure-relevant-vectors}, \(\mapencoding(\atf) \cap U_{k+1} \subseteq \mapencoding(\Tleq{k+1})\) but \(\mapencoding(\atf) \cap \mapencoding(\Tleq{k+1}) = \emptyset\) by "strong injectivity", so that \(\mapencoding(\atf) \subseteq D_{k+1}\).
		This means that the representing vector of \(I_{k+1}\) must have value \(\omega\) on every \(i\) such that \(f(\indexof^{-1}(i)) = \dummysymb\) by maximality of \(I_{k+1}\) in \(D_{k+1}\), so that \(\vec{v} \in I_{k+1}\) implies that \(\vec{u} \in I_{k+1}\) for every \(\vec{u} \in \mapencoding(\atf)\).
		Overall, we have proved that \(\mapencoding(\atf) \subseteq I_{k+1}\).

		Because \(\atf \in \Tpow{k+2}\), there is \(\tftrans \in \Tmin\), \(\atf' \in \Tpow{k+1}\) such that \(\atf \in \atf' \tftimes \tftrans\).
		By definition of \(\preflows{\tftrans}{I_{k+1}}\), \(\mapencoding(\atf') \subseteq \preflows{\tftrans}{I_{k+1}}\).
		Also, \(\mapencoding(\atf') \subseteq \mapencoding(\Tleq{k+1}) \subseteq U_{k+1}\), so that \(\mapencoding(\atf') \subseteq \preflows{\tftrans}{I_{k+1}} \cap U_{k+1}\).
		By strong injectivity, \(\mapencoding(\atf') \ne \emptyset\) therefore \(\preflows{\tftrans}{I_{k+1}} \cap U_{k+1}\ne \emptyset\).
	\end{proof}

	In all the following, we fix \(\tftrans \in \Tmin\) such that \(\preflows{\tftrans}{I_{k+1}} \cap U_{k+1} \ne \emptyset\).
	By applying the definition, there are \(\atf_k, \atf_{k+1}\) such that \(\atf_{k+1} \in \atf_k \tftimes \tftrans\), \(\mapencoding(\atf_k) \subseteq \preflows{\tftrans}{I_{k+1}} \cap U_{k+1}\) and \(\mapencoding(\atf_{k+1}) \subseteq I_{k+1}\).
	We write \(\atf_{k+1} = (f_{k+1}, \controlloc_{k+1}, \controlloc_{k+1}')\).
	Also, let \(E \subseteq \nset{1}{d}\) be the set of components at which the representing vector of \(I_{k+1}\) is equal to \(\omega\).
	We know that \(m^2+1, m^2+2 \notin E\) as it would otherwise imply that \(\zerovectors \cap D_k \ne \emptyset\), which contradicts definition of \(U_k\)\footnote{In fact, this argument is the reason why we enforced that \(\zerovectors \subseteq U_k\).
	}.
	This means that \(E \subseteq \nset{1}{m^2}\).
	Let \(S \deff \indexof^{-1}(E)\); \(S\) is the set of pairs of states \((q,q')\) such that \(\indexof(q,q') \in E\).
	For every \(j \in \nats\), for every "transfer flow" \(\atf = (f , \controlloc, \controlloc')\), we denote by \(\atf^{(j)}\) the "transfer flow" \((f^{(j)},\controlloc, \controlloc')\) where \(f^{(j)}\) is such that, for all \(q,q' \in \states\):
	\begin{itemize}[noitemsep,topsep=0pt,parsep=0pt,partopsep=0pt]
		\item if \((q,q') \notin S\) then \(f^{(j)}(q,q') = f(q,q')\);
		\item if \((q,q') \in S\) and \(f(q,q') \ne \dummysymb\) then \(f^{(j)}(q,q') = \max(f(q,q'), j)\);
		\item if \((q,q') \in S\) and \(f(q,q') = \dummysymb\) then \(f^{(j)}(q,q') = \dummysymb\).
	\end{itemize}

	Intuitively, \(\atf^{(j)}\) is equal to \(\atf\) except that, in all components in \(E\) where \(\atf\) does not have value \(\dummysymb\), the values will tend to infinity as \(j\) grows.
	We define a similar notion for vectors.
	For every \(j \in \nats\), for every vector \(\vec{v}\), we let \(\vec{v}^{(j)}\) be the vector defined by, for all \(i \in \nset{1}{d}\):
	\begin{itemize}[noitemsep,topsep=0pt,parsep=0pt,partopsep=0pt]
		\item if \(i \in E\) and \(\vec{v}^{(j)}(i) = \max(\vec{v}(i),j)\)
		\item if \(i \notin E\) then \(\vec{v}^{(j)}(i) = \vec{v}(i)\).
	\end{itemize}

	We connect the definition of \(\vec{v}^{(j)}\) and of \(\atf^{(j)}\) with the following lemma:

	\begin{lemma}
		\label{vj-tfj}
		Let \(\atf \in \transferflows\) and \(\vec{v} \in \mapencoding(\atf)\).
		For all \(j \in \nats\), \(\vec{v}^{(j)} \in \mapencoding(\atf^{(j)})\).
	\end{lemma}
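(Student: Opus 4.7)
The plan is to verify the three defining conditions of $\mapencoding(\atf^{(j)})$ for the vector $\vec{v}^{(j)}$, exploiting the fact that the operations $\atf \mapsto \atf^{(j)}$ and $\vec{v} \mapsto \vec{v}^{(j)}$ are designed to match component-wise through $\indexof$.

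First I would observe that $\atf$ and $\atf^{(j)}$ have exactly the same $\dummysymb$-set: the definition of $f^{(j)}$ preserves $\dummysymb$ on $S$ (third bullet), and leaves $f$ unchanged off $S$ (first bullet). Hence the set $S_\atf = \{(q,q') \mid f(q,q') = \dummysymb\}$ equals the $\dummysymb$-set of $\atf^{(j)}$, so the last two components prescribed by $\mapencoding(\atf^{(j)})$ coincide with those prescribed by $\mapencoding(\atf)$, namely $\finitemapping(\bstate,\bstate',S_\atf)$ and $N+1 - \finitemapping(\bstate,\bstate',S_\atf)$. Since in the surrounding context we have $m^2+1, m^2+2 \notin E$, the definition of $\vec{v}^{(j)}$ leaves these two coordinates untouched, so $\vec{v}^{(j)}(m^2+1) = \vec{v}(m^2+1)$ and $\vec{v}^{(j)}(m^2+2) = \vec{v}(m^2+2)$, giving the required values.

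Next I would check the condition on the first $m^2$ coordinates by a case analysis on whether $(q,q') \in S$, restricted to pairs with $f(q,q') \ne \dummysymb$. If $(q,q') \notin S$, then $\indexof(q,q') \notin E$, so $\vec{v}^{(j)}(\indexof(q,q')) = \vec{v}(\indexof(q,q')) = f(q,q') = f^{(j)}(q,q')$, using $\vec{v} \in \mapencoding(\atf)$ and the first bullet of the definition of $f^{(j)}$. If $(q,q') \in S$ and $f(q,q') \ne \dummysymb$, then $\indexof(q,q') \in E$ and the second bullet gives $f^{(j)}(q,q') = \max(f(q,q'),j)$; on the vector side, $\vec{v}^{(j)}(\indexof(q,q')) = \max(\vec{v}(\indexof(q,q')),j) = \max(f(q,q'),j)$. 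Both sides agree, which completes the verification.

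There is no real obstacle here: the lemma is an essentially definitional compatibility between the two lifts, and the main thing to keep track of is that the $\dummysymb$-set is preserved so that the ``finitary coordinates'' of $\mapencoding(\atf^{(j)})$ are exactly those where $f(q,q') \in \nats$, which is where both definitions of the lift act in parallel.
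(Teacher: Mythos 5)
Your proof is correct and follows essentially the same component-wise verification as the paper's, checking that the lifts $\atf\mapsto\atf^{(j)}$ and $\vec{v}\mapsto\vec{v}^{(j)}$ act in parallel on the coordinates indexed by $S$ and leave the others alone. You are slightly more thorough than the paper — you explicitly check that the last two components still encode $(\bstate,\bstate',S)$ correctly (using $m^2+1,m^2+2\notin E$ and preservation of the $\dummysymb$-set), whereas the paper leaves that to the reader and only treats $i\in\nset{1}{m^2}$ — but this is the same argument, not a different route.
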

	\begin{proof}
		Let \(j \in \nats\), \(\atf = (f, \controlloc, \controlloc')\) and \(\vec{v} \in \mapencoding(\atf)\).
		Let \(i \in \nset{1}{m^2}\) and \((q,q') \deff \indexof^{-1}(i)\).
		First, if \(i \notin E\) then \((q,q') \notin S\), so that \(\vec{v}^{(j)}(i) = \vec{v}(i) = f(q,q') = f^{(j)}(q,q')\).
		Suppose now that \(i \in E\).
		If \(f(q,q') = \dummysymb\) then \(f^{(j)}(q,q') = \dummysymb\) and the value at component \(i\) in \(\vec{v}^{(j)}\) plays no role in whether \(\vec{v}^{(j)} \in \mapencoding(\atf^{(j)})\).
		If \(f(q,q') \ne \dummysymb\) then \(f(q,q') = \vec{v}(i)\) so that \(\vec{v}^{(j)}(i) = \max(\vec{v}(i),j) = \max(f(q,q'),j) = f^{(j)}(q,q')\), concluding the proof.
	\end{proof}

	By applying this construction to \(\atf_{k+1}\), we obtain a sequence that remains in \(\mapencoding^{-1}(I_{k+1})\):

	\begin{lemma}
		\label{atfj-remains-in-Ikplusone}
		For all \(j\), we have \(\mapencoding(\atf_{k+1}^{(j)}) \subseteq I_{k+1}\).
	\end{lemma}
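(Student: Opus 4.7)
\textbf{Proof plan for Lemma~\ref{atfj-remains-in-Ikplusone}.}

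The plan is to fix an arbitrary $\vec{w} \in \mapencoding(\atf_{k+1}^{(j)})$ and show $\vec{w} \leprod \vec{v}_{k+1}$, where $\vec{v}_{k+1}$ is the representing vector of $I_{k+1}$. Recall that $\vec{v}_{k+1}(i) = \omega$ for $i \in E$, and earlier in the section it was noted that $E \subseteq \nset{1}{m^2}$. So the constraint to check reduces to the components $i \in \{m^2+1, m^2+2\}$ and $i \in \nset{1}{m^2}\setminus E$, i.e. the indices where $\vec{v}_{k+1}$ is finite.

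First I would observe that the construction of $\atf^{(j)}$ preserves both the $\dummysymb$-pattern of $\atf$ and its control part, so $\atf_{k+1}^{(j)}$ and $\atf_{k+1}$ have the same set $S_\atf = \set{(q,q') \mid f_{k+1}(q,q') = \dummysymb}$ and the same pair $(\controlloc_{k+1},\controlloc_{k+1}')$. By the definition of $\mapencoding$, this means $\vec{w}(m^2+1)$ and $\vec{w}(m^2+2)$ coincide with the corresponding components of any vector in $\mapencoding(\atf_{k+1})$. Since $\mapencoding(\atf_{k+1}) \subseteq I_{k+1}$, these two components are already bounded by $\vec{v}_{k+1}(m^2+1)$ and $\vec{v}_{k+1}(m^2+2)$.

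The key remaining step is the case $i \in \nset{1}{m^2}\setminus E$, and this is where I would use maximality of $\vec{v}_{k+1}$ as the representing vector of the ideal $I_{k+1}$. Let $(q,q') = \indexof^{-1}(i)$. I claim $f_{k+1}(q,q') \neq \dummysymb$: otherwise, by the definition of $\mapencoding$, the value at component $i$ of vectors in $\mapencoding(\atf_{k+1})$ is unconstrained, so $\mapencoding(\atf_{k+1}) \subseteq I_{k+1}$ would force $\vec{v}_{k+1}(i) = \omega$, contradicting $i \notin E$. Hence $f_{k+1}(q,q') \in \nats$, and since $(q,q') \notin S$ we have $f_{k+1}^{(j)}(q,q') = f_{k+1}(q,q')$ by the definition of $\atf^{(j)}$. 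Thus $\vec{w}(i) = f_{k+1}(q,q')$, which is also the value at component $i$ of every vector in $\mapencoding(\atf_{k+1})$, hence bounded by $\vec{v}_{k+1}(i)$.

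Combining the three cases yields $\vec{w} \leprod \vec{v}_{k+1}$, so $\vec{w} \in I_{k+1}$. I expect the only subtle point to be the maximality argument in the last paragraph — i.e.\ using that $\vec{v}_{k+1}$ is the componentwise sup of $I_{k+1}$ to rule out $f_{k+1}(q,q') = \dummysymb$ outside $S$. Everything else is a direct unfolding of the definitions of $\mapencoding$ and of the $(\cdot)^{(j)}$ operation.
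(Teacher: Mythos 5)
Your proposal is correct and takes essentially the same approach as the paper: both hinge on $\mapencoding(\atf_{k+1}) \subseteq I_{k+1}$, the fact that $\atf_{k+1}^{(j)}$ differs from $\atf_{k+1}$ only on components indexed by $E = \indexof(S)$, and the fact that the representing vector of $I_{k+1}$ is $\omega$ on $E$. The only presentational difference is that the paper builds an explicit intermediate vector $\vec{v}\in\mapencoding(\atf_{k+1})$ dominated by your $\vec{w}$ and then lifts back up using the $\omega$-components, whereas you argue component-by-component against $\vec{v}_{k+1}$ directly (your extra observation that $i\notin E$ forces $f_{k+1}(\indexof^{-1}(i))\ne\dummysymb$ is sound and is in fact the same maximality argument the paper uses in the preceding lemma to establish $\mapencoding(\atf)\subseteq I_{k+1}$).
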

	\begin{proof}
		By definition of \(\atf_{k+1}\), \(\mapencoding(\atf_{k+1}) \subseteq I_{k+1}\).
		Let \(\vec{v}_j \in \mapencoding(\atf_{k+1}^{(j)})\).
		Let \(\vec{v}\) equal to \(\vec{v}_j\) except that \(\vec{v}(\indexof(q,q')) = f(q,q')\) for all \(q,q' \in S\) such that \(f(q,q') \ne \dummysymb\).
		We obtain that \(\vec{v} \in \mapencoding(\atf_{k+1})\) so that \(\vec{v}\in I_{k+1}\).
		Observe that, for such values of \(q\) and \(q'\), \(\vec{v}_j(q,q') = \max(j, f(q,q'))\), so that \(\vec{v}\leprod \vec{v}_j\).
		Moreover, \(\vec{v}_j\) is equal to \(\vec{v}\) on components that are not in \(E\), because by definition \(E = \indexof(S)\).
		By definition of \(E\), the representing vector of \(I_{k+1}\) is equal to \(\omega\) on all components in \(E\), so that membership in \(E\) is not sensitive to the values at these components; this proves that \(\vec{v}_j \in I_{k+1}\).
	\end{proof}

	The following lemma is where the magic really happens:

	\begin{lemma}
		\label{there-is-m}
		For all \(j \in \nats\), there is \(p \in \nats\) such that \(\atf_{k+1}^{(p)} \in \atf_k^{(j)} \tftimes \tftrans\).
	\end{lemma}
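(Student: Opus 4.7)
My plan is to modify the given product witness $H$ of $\atf_{k+1} \in \atf_k \tftimes \tftrans$ into a product witness $H'$ of $\atf_{k+1}^{(p)} \in \atf_k^{(j)} \tftimes \tftrans$, for a suitably large $p$. Since $\atf_k^{(j)} \gtrf \atf_k$ differs from $\atf_k$ only by inflating entries indexed by $S$ up to at least $j$, it suffices to add non-negative increments $\Delta(q_1,q_2,q_3) \geq 0$ to $H(q_1,q_2,q_3)$ so that, for every $(q_1,q_2) \in S$ with $f_k(q_1,q_2) \neq \dummysymb$, the row-sum $\sum_{q_3} H'(q_1,q_2,q_3)$ reaches $j$, while simultaneously preserving the $\dummysymb$-pattern and the column-sum identity required by (prod.i).

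The first step is to identify the \emph{admissible dumping positions}. A careful inspection of how the conditions (prod.i)–(prod.iii) interact with the convention $\dummysymb + x = x$ shows that one can increase $H$ at $(q_1,q_2,q_3)$ without breaking the witness property for $\atf_k^{(j)} \tftimes \tftrans$ only if (i) $(q_1,q_3) \in S$, so that the column-sum can be absorbed into $f_{k+1}^{(p)}(q_1,q_3) = \max(f_{k+1}(q_1,q_3), p)$; (ii) $f_{k+1}(q_1,q_3) \neq \dummysymb$, otherwise $f_{k+1}^{(p)}(q_1,q_3) = \dummysymb$ and no integer flow may be added to the column; and (iii) $f_\tftrans(q_2,q_3) \neq \dummysymb$, so that the row-sum $\sum_{q_1'} H'(q_1',q_2,q_3)$ is consistent with (prod.iii). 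Since $H$ already satisfies (prod.iii), positions with $H(q_1,q_2,q_3) \neq \dummysymb$ automatically fulfil (ii) and (iii).

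The second step, which I expect to be the main obstacle, is a combinatorial claim: for every $(q_1,q_2) \in S$ with $f_k(q_1,q_2) \neq \dummysymb$, there exists $q_3$ with $H(q_1,q_2,q_3) \neq \dummysymb$ and $(q_1,q_3) \in S$. The proof is by contradiction, leveraging both hypotheses on $\atf_k$ and $\atf_{k+1}$. If all $q_3$'s in the support of $H(q_1,q_2,\cdot)$ lay outside $S$, one could construct vectors in $\mapencoding(\atf_k)$ whose ``only'' way to complete through $\tftrans$ into $I_{k+1}$ would push some coordinate $\indexof(q_1,q_3) \notin E$ above the finite bound $\vec{v}_{k+1}(\indexof(q_1,q_3))$, contradicting $\mapencoding(\atf_k) \subseteq \preflows{\tftrans}{I_{k+1}}$. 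The argument uses that shifting a unit of $\atf_k$'s value from $(q_1,q_2) \in S$ (unbounded in $I_{k+1}$) propagates through $H$ to a column outside $S$ (bounded in $I_{k+1}$), which the ideal cannot accommodate.

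Once the key claim is established, I fix, for every $(q_1,q_2) \in S$ with $f_k(q_1,q_2) \neq \dummysymb$, a witness $q_3(q_1,q_2)$ given by the claim, and set
\[
    \Delta(q_1,q_2,q_3) := \begin{cases} \max(0, j - f_k(q_1,q_2)) & \text{if } q_3 = q_3(q_1,q_2), \\ 0 & \text{otherwise,} \end{cases}
\]
taking $H' := H + \Delta$ (with the convention that additions happen only where $H$ is non-$\dummysymb$, which is guaranteed by the choice of $q_3(q_1,q_2)$). By construction (prod.ii) holds for $\atf_k^{(j)}$, (prod.iii) is preserved since $\Delta \geq 0$, and (prod.i) determines the resulting $h'(q_1,q_3) := \sum_{q_2} H'(q_1,q_2,q_3)$; this $h'$ agrees with $f_{k+1}$ outside $S$ and inflates $f_{k+1}$ at $S$-positions by at most $m^2 \cdot j$. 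Choosing $p := j \cdot m^2 + \tfweight{\atf_{k+1}}$ makes $h' = f_{k+1}^{(p)}$ on the inflated coordinates, concluding $\atf_{k+1}^{(p)} \in \atf_k^{(j)} \tftimes \tftrans$.
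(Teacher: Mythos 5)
Your proposal takes a genuinely different route from the paper's, and the gap lies exactly where you flag it: the ``key claim'' that for every \((q_1,q_2)\in S\) with \(f_k(q_1,q_2)\ne\dummysymb\) there is some \(q_3\) with \(H(q_1,q_2,q_3)\ne\dummysymb\) and \((q_1,q_3)\in S\). This claim is forced on you because you insist on adding increments only at positions where \(H\) is already non-\(\dummysymb\), so as to preserve the \(\dummysymb\)-pattern. But the claim does not follow from the hypotheses, and your proof-by-contradiction sketch misreads the structure: for any \((q,q')\) with \(f_k(q,q')\ne\dummysymb\), every vector of \(\mapencoding(\atf_k)\) takes the \emph{fixed} value \(f_k(q,q')\) at coordinate \(\indexof(q,q')\), so you cannot ``push'' that coordinate upward inside \(\mapencoding(\atf_k)\); it is the ideal \(I_{k+1}\), not \(\mapencoding(\atf_k)\), that is unbounded on \(E\)-coordinates, and that gives no leverage here. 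Concretely, if \(\tftrans\) encodes a transition \(q_1^t\trans{q_2^t}q_3^t\) (with \(q_2^t\ne q_3^t\)), the support of \(H(q_1,q_2,\cdot)\) is contained in \(\{q_2\}\cup\{q_3^t\mid q_2=q_2^t\}\), and nothing in the premises rules out the configuration \(H(q_1,q_2,q_2)=\dummysymb\), \(H(q_1,q_2,q_3^t)\ne\dummysymb\), \((q_1,q_3^t)\notin S\), which falsifies your claim.

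The paper's proof avoids needing this claim at all. It proceeds by induction on \(j\) and always dumps one extra unit at the \emph{diagonal} position \((q_1,q_2,q_2)\), crucially allowing this even when \(H(q_1,q_2,q_2)=\dummysymb\), using the convention \(\dummysymb+1=1\) (so the \(\dummysymb\)-pattern of the witness may change). The reason this is safe is idle-compliance: because \(\tftrans\) arises from an IOPP transition, \(f_\tftrans(q_2,q_2)\ne\dummysymb\), so the column-sum \(\sum_{q_1'}H(q_1',q_2,q_2)\) is already a genuine integer and adding a unit cannot break the constraint corresponding to (prod.iii). Compatibility with the resulting row-sums \(h(q_1,q_2)\) is then handled by a separate (and rather delicate) argument that, at those positions where an increment is performed, \(f_{k+1}(q_1,q_2)\ne\dummysymb\). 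In short, the paper's extra degree of freedom --- being allowed to turn a \(\dummysymb\) into an integer at a diagonal slot --- is precisely what removes the combinatorial obstacle you ran into; if you want a one-shot version of the argument, you would still need to adopt the diagonal dump position and the surrounding \(\dummysymb\)-pattern bookkeeping, rather than try to establish your current claim, which is both unproved and, as far as the given premises go, not true.
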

	\begin{proof}
		We proceed by induction on \(j\).
		For \(j=0\), \(\atf_k^{(0)} = \atf_k\), \(\atf_{k+1}^{(0)} = \atf_{k+1}\) and indeed \(\atf_{k+1} \in \atf_k \tftimes \tftrans\) so that the property holds by letting \(p=0\).

		We suppose that the property is true for \(j\), and we prove it for \(j+1\).
		By induction hypothesis, there is \(p\) such that \(\atf_{k+1}^{(p)} \in \atf_k^{(j)} \tftimes \tftrans\); let \(H_j : \states^3 \to \natsanddummy\) be a product witness of that.
		We build a function \(H_{j+1}: \states^3 \to \natsanddummy\) as follows.
		For every \((q_1,q_2) \in \states^2\):
		\begin{itemize}[noitemsep,topsep=0pt,parsep=0pt,partopsep=0pt]
			\item if \(f_{k}^{(j)}(q_1,q_2) \ne j\) or \(f_{k}^{(j+1)}(q_1,q_2) \ne j+1\), we set \(H_{j+1}(q_1,q_2,q_3) \deff H_j(q_1,q_2,q_3)\) for all \(q_3\);
			\item if \(f_{k}^{(j)}(q_1,q_2)= j\) and \(f_{k}^{(j+1)}(q_1,q_2)= j+1\), we set \(H_{j+1}(q_1,q_2,q_2) \deff H_j(q_1,q_2,q_2) +1\) and, for all \(q_3 \ne q_2\), \(H_{j+1}(q_1,q_2,q_3) \deff H_j(q_1,q_2,q_3)\).
		\end{itemize}
		There is a subtlety in the second case: it could be that \(H_j(q_1,q_2,q_2) = \dummysymb\), in which case we set \(H_{j+1}(q_1,q_2,q_2) = 1\) (recall that \(\dummysymb +1 = 1\)).
		We therefore do not always have \(H_{j+1} \geq H_j\).
		Let \(f: (q_1,q_3) \mapsto \sum_{q_2} H_{j+1}(q_1,q_2,q_3)\).
		We claim that \((f, \controlloc_{k+1}, \controlloc_{k+1}') \in \atf_k^{(j+1)} \tftimes \tftrans\), with \(H_{j+1}\) as product witness.

		First, we prove that, for all \(q_1,q_2\), \(\sum_{q_3} H_{j+1}(q_1,q_2,q_3) \geq f_k^{(j+1)}(q_1,q_2)\).
		Let \(q_1,q_2 \in \states\).
		Because \(H_j\) is a product witness that \(\atf_{k+1}^{(p)} \in \atf_k^{(j)} \tftimes \tftrans\), we have \(\sum_{q_3} H_j(q_1,q_2,q_3) \geq f_{k}^{(j)}(q_1,q_2)\).
		If \(f_{k}^{(j)}(q_1,q_2) \ne j\) or \(f_{k}^{(j+1)}(q_1,q_2) \ne j+1\), we have \(f_k^{(j+1)}(q_1,q_2) = f_k^{(j)}(q_1,q_2)\) and \(H_{j+1}(q_1,q_2,q_3) = H_j(q_1,q_2,q_3)\) for all \(q_3\), so that \(\sum_{q_3} H_{j+1}(q_1,q_2,q_3) = \sum_{q_3} H_{j}(q_1,q_2,q_3) \geq f_k^{(j)}(q_1,q_2) = f_k^{(j+1)}(q_1,q_2)\).
		If \(f_{k}^{(j)}(q_1,q_2) = j\) and \(f_{k}^{(j+1)}(q_1,q_2) = j+1\), then \(\sum_{q_3} H_j(q_1,q_2,q_3) \geq j\) and \(\sum_{q_3} H_{j+1}(q_1,q_2,q_3) = \sum_{q_3} H_j(q_1,q_2,q_3) +1 \geq j+1\).

		Let \(\tftrans =: (f_{\tftrans}, \controlloc_{\tftrans}, \controlloc'_\tftrans)\).
		We now prove that, for all \(q_2,q_3\), \(\sum_{q_1} H_{j+1}(q_1,q_2,q_3) \geq f_{\tftrans}(q_2,q_3)\).
		We know that this is true for \(H_j\).
		For every \(q_2 \ne q_3\), we have \(\sum_{q_1} H_{j+1}(q_2,q_3) = \sum_{q_1} H_j(q_2,q_3) \geq f_{\tftrans}(q_2,q_3)\).
		For every \(q_2\), we have either \(\sum_{q_1} H_{j+1}(q_1,q_2,q_2) = \sum_{q_1} H_{j}(q_1,q_2,q_2)\) or \(\sum_{q_1} H_{j+1}(q_1,q_2,q_2) = \sum_{q_1} H_{j}(q_1,q_2,q_2)+1\).
		By "idle-compliance", we have \(f_{\tftrans}(q_2,q_2) \ne \dummysymb\) so that \(\sum_{q_1} H_{j}(q_1,q_2,q_2) \ne \dummysymb\), therefore \(\sum_{q_1} H_{j+1}(q_1,q_2,q_2) \geq \sum_{q_1} H_{j}(q_1,q_2,q_2) \geq f_{\tftrans}(q_2,q_2)\).
		Note that we need "idle-compliance" here: if we had \(f_{\tftrans}(q_2,q_2) = \dummysymb\) then we would have \(\sum_{q_1} H_{j}(q_1,q_2,q_2) = \dummysymb\) but \(\sum_{q_1} H_{j+1}(q_1,q_2,q_2)= 1\) so that \(\sum_{q_1} H_{j+1}(q_1,q_2,q_2)\) would be incomparable with \(f_{\tftrans}(q_2,q_2)\).

		It remains to prove that \(H_{j+1}\) is a product witness that \(\atf_{k+1}{(p')} \in \atf_k^{(j+1)} \tftimes \tftrans\) for some \(m'\).
		To do that, let \(f : (q_1,q_3) \mapsto \sum_{q_2} H_{j+1}(q_1,q_2,q_3)\) and \(\atf_{k+1}' \deff (f, \controlloc_{k+1}, \controlloc_{k+1}')\).
		By the above arguments, we know that \(\atf_{k+1}' \in \atf_{k}^{(j+1)} \tftimes \tftrans\).
		It therefore suffices to prove that there exists \(p'\) such that \((f, \controlloc_{k+1}, \controlloc_{k+1}') \letf (f_{k+1}^{(p')}, \controlloc_{k+1}, \controlloc_{k+1}') = \atf_{k+1}^{(p')}\).
		Indeed, this would imply that \(\atf_{k+1}^{(p')} \in \atf_k^{(j+1)} \tftimes \tftrans\) by \ref{basic1}.

		We now prove that there is \(p' \in \nats\) such that \(\atf_{k+1}' \letf \atf_{k+1}^{(p')}\).
		Let \(q_1,q_3 \in \states\).
		If \((q_1,q_3) \notin S\) then we must have \(f_k^{(j)}(q_1,q_2) = f_k^{(j+1)}(q_1,q_2)\) so that, by definition of \(H_{j+1}\), \(\sum_{q_2} H_{j+1}(q_1,q_2,q_3) = \sum_{q_2} H_j(q_1,q_2,q_3) = f_{k+1}^{(p)}(q_1,q_3) = f_{k+1}^{(p')}(q_1,q_3)\) for all \(p'\).
		Suppose now that \((q_1,q_3) \in S\).
		There are two cases: \(f_{k+1}(q_1,q_3)= \dummysymb\) and \(f_{k+1}(q_1,q_3) \ne \dummysymb\).

		Let \((q_1,q_3) \in S\) such that \(f_{k+1}(q_1,q_3) = \dummysymb\).
		We must prove that \(f(q_1,q_3) = \dummysymb\).
		Recall that we have \(\atf_{k+1} \in \atf_k \tftimes \tftrans\).
		Upon defining the "compositional product" \(\tftimes\), we have made sure that \(\atf_{k+1} \in \atf_k \tftimes {\tftrans}\) implies that, for all \(q,q'\), if \(f_{k+1}(q,q') = \dummysymb\) then \(f_k(q,q') = \dummysymb\).
		This proves that \(f_k(q_1,q_3) = \dummysymb\).
		By definition of \(f_k^{(j+1)}\), this implies that \(f_k^{(j+1)}(q_1,q_3) = \dummysymb\).
		In particular, \(f_k^{(j+1)}(q_1,q_3) \ne j+1\) so that, by definition of \(H_{j+1}\), for all \(q_2\), we have \(H_{j+1}(q_1,q_2,q_3) = H_j(q_1,q_2,q_3)\) for all \(q_2\).
		However, \(\sum_{q_2} H_{j}(q_1,q_2,q_3) = f_{k+1}^{(p)}(q_1,q_3) = \dummysymb\), so that \(f(q_1,q_3) = \sum_{q_2} H_{j+1}(q_1,q_2,q_3) = \sum_{q_2} H_j(q_1,q_2,q_3) = \dummysymb\).

		Let \((q_1,q_2) \in S\) such that \(f_{k+1}(q_1,q_3) \ne \dummysymb\); we have \(f_{k+1}^{(p')}(q_1,q_3) = \max(f_{k+1}(q_1,q_3), p')\) for all \(p'\).
		Also, \(f_{k+1}^{(p)}(q_1,q_3) \ne \dummysymb\) therefore \(\sum_{q_2} H_j(q_1,q_2,q_3) \ne \dummysymb\).
		By definition of \(H_{j+1}\), this also implies \(\sum_{q_2} H_{j+1}(q_1,q_2,q_3) \ne \dummysymb\) so that \(f(q_1,q_3) \ne \dummysymb\).
		For \(p' \geq f(q_1,q_3)\), we have \(f(q_1,q_3) \leq f_{k+1}^{(p')}(q_1,q_3)\).

		Let \(p'\) large enough so that, for every \(q_1,q_3\) such that \(f(q_1,q_3) \ne \dummysymb\), \(p' \geq f(q_1,q_3)\).
		We have proved that \(\atf_{k+1}' \letf \atf_{k+1}^{(p')}\).
		This implies that \(\atf_{k+1}^{(p')} \in \atf_k^{(j+1)} \tftimes \tftrans\), concluding the induction.
	\end{proof}

	We now claim that, for all \(j \in \nats\), \(\mapencoding(\atf_k^{(j)}) \cap U_k = \emptyset\).
	Indeed, suppose by contradiction that this is not the case: let \(j\) such that \(\mapencoding(\atf_k^{(j)}) \cap U_k \ne \emptyset\).
	By \cref{upward-closure-relevant-vectors} and by strong injectivity, this implies that \(\atf_k^{(j)} \in \Tleq{k}\).
	We apply \cref{there-is-m} to obtain \(p\) such that \(\atf_{k+1}^{(p)} \in \atf_k^{(j)} \tftimes \tftrans\), so that \(\atf_{k+1}^{(p)} \in \Tleq{k+1}\).
	We have \(\mapencoding(\atf_{k+1}^{(p)}) \subseteq U_{k+1}\), but by \cref{atfj-remains-in-Ikplusone} \(\mapencoding(\atf_{k+1}^{(p)}) \subseteq I_{k+1}\).
	Because \(\mapencoding(\atf_{k+1})^{(p)} \) is not empty by strong injectivity, this contradicts that \(I_{k+1} \subseteq D_{k+1}\).

	We have proved that \(\mapencoding(\atf_k^{(j)}) \subseteq D_k\) for every \(j\).
	Recall that our objective is to exhibit an ideal \(I_k\) "proper at step \(k\)" whose representing vector is equal to \(\omega\) at any component in \(E\).
	Let \(\vec{v}_{k+1}\) be an arbitrary vector in \(\mapencoding(\atf_{k+1})\), and \(\vec{v}_k\) be an arbitrary vector of \(\mapencoding(\atf_k)\).

	\begin{lemma}
		\label{vk-vkplusone-sets} For every \(j \in \nats\):
		\begin{itemize}[noitemsep,topsep=0pt,parsep=0pt,partopsep=0pt]
			\item \(\vec{v}_{k+1}^{(j)} \in I_{k+1}\),
			\item \(\vec{v}_k^{(j)} \in U_{k+1} \setminus U_k\).
		\end{itemize}
	\end{lemma}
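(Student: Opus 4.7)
The plan is to prove each of the two bullets in turn, relying crucially on the three facts that have just been established just above the lemma: namely \cref{vj-tfj} (which says $(\vec{v})^{(j)} \in \mapencoding(\atf^{(j)})$ whenever $\vec{v} \in \mapencoding(\atf)$), \cref{atfj-remains-in-Ikplusone} (which says $\mapencoding(\atf_{k+1}^{(j)}) \subseteq I_{k+1}$ for all $j$), and the conclusion just proven via \cref{there-is-m}, namely that $\mapencoding(\atf_k^{(j)}) \cap U_k = \emptyset$ for all $j$.

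For the first bullet, $\vec{v}_{k+1}^{(j)} \in I_{k+1}$, I would start from $\vec{v}_{k+1} \in \mapencoding(\atf_{k+1})$, then invoke \cref{vj-tfj} to get $\vec{v}_{k+1}^{(j)} \in \mapencoding(\atf_{k+1}^{(j)})$, and finally apply \cref{atfj-remains-in-Ikplusone} to conclude $\vec{v}_{k+1}^{(j)} \in I_{k+1}$. This is a direct composition of two already-proven lemmas.

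For the second bullet, $\vec{v}_k^{(j)} \in U_{k+1} \setminus U_k$, I would argue the two memberships separately. To show $\vec{v}_k^{(j)} \notin U_k$: use \cref{vj-tfj} to place $\vec{v}_k^{(j)} \in \mapencoding(\atf_k^{(j)})$, then invoke the fact $\mapencoding(\atf_k^{(j)}) \subseteq D_k = \nats^d \setminus U_k$. To show $\vec{v}_k^{(j)} \in U_{k+1}$: observe that by the definition of $\vec{v}^{(j)}$ we have $\vec{v}_k^{(j)} \geq_{\times} \vec{v}_k$ coordinatewise (the operation only increases values at components in $E$ and leaves the others unchanged); since $\vec{v}_k \in \mapencoding(\atf_k) \subseteq U_{k+1}$ and $U_{k+1}$ is upward-closed with respect to $\leprod$, we obtain $\vec{v}_k^{(j)} \in U_{k+1}$.

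I do not expect any genuine obstacle: the technically hard work has been done in \cref{there-is-m} and in the subsequent paragraph deriving $\mapencoding(\atf_k^{(j)}) \cap U_k = \emptyset$. What remains here is essentially bookkeeping: passing from transfer flows to their vector representations via $\mapencoding$, and using upward-closure of $U_{k+1}$ one last time. The only small subtlety to flag is that $\vec{v}_k^{(j)}$ and $\vec{v}_k$ may disagree at components where $f_k$ takes value $\dummysymb$ (components not in $S$ remain untouched, while components in $S$ can only be raised), so the inequality $\vec{v}_k \leprod \vec{v}_k^{(j)}$ in $\nats^d$ really does hold and the invocation of upward-closure is legitimate.
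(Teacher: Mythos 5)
Your proof is correct and follows essentially the same route as the paper's: the first bullet is the direct composition of \cref{vj-tfj} with \cref{atfj-remains-in-Ikplusone}, and the second bullet splits into $\vec{v}_k^{(j)}\in U_{k+1}$ via $\vec{v}_k\leprod\vec{v}_k^{(j)}$ and upward-closure of $U_{k+1}$, plus $\vec{v}_k^{(j)}\notin U_k$ from the fact that $\mapencoding(\atf_k^{(j)})\cap U_k=\emptyset$. The only cosmetic difference is that the paper re-derives the exclusion $\vec{v}_k^{(j)}\notin U_k$ inside the lemma proof via the same contradiction argument (strong injectivity, \cref{upward-closure-relevant-vectors}, \cref{there-is-m}), whereas you simply cite the claim already established in the preceding paragraph, which is cleaner and entirely legitimate.
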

	\begin{proof}
		Let \(j \in \nats\).
		By \cref{vj-tfj}, \(\vec{v}_k^{(j)} \in \mapencoding(\atf_k^{(j)})\) and \(\vec{v}_{k+1}^{(j)} \in \mapencoding(\atf_{k+1}^{(j)})\).
		By \cref{atfj-remains-in-Ikplusone}, this directly proves that \(\vec{v}_{k+1}^{(j)} \in I_{k+1}\).

		We have \(\mapencoding(\atf_k) \subseteq U_{k+1}\) therefore \(v_k \in U_{k+1}\), thus \(\vec{v}_{k}^{(j)} \in U_{k+1}\) because \(U_{k+1}\) is "upward-closed".
		Suppose by contradiction that we have \(\vec{v}_{k}^{(j)} \in U_k\).
		This would imply that \(\mapencoding(\atf_k^{(j)}) \cap U_k \ne \emptyset\); by \cref{upward-closure-relevant-vectors} and by strong injectivity, this implies that \(\atf_k^{(j)} \in \Tleq{k}\).
		By \cref{there-is-m}, there is \(p\) such that \(\atf_{k+1}^{(p)} \in \atf_k^{(j)} \tftimes \tftrans\), so that \(\atf_{k+1}^{(p)} \in \Tleq{k+1}\).
		This implies that \(\mapencoding(\atf_{k+1}^{(p)}) \subseteq U_{k+1}\), which contradicts \cref{atfj-remains-in-Ikplusone} since \(I_{k+1}\subseteq D_{k+1}\).
	\end{proof}

	Let \(\vec{u}_k\) be the vector such that, for all \(i \in \nset{1}{d}\), \(\vec{u}_k(i) \deff \omega\) if \(i \in E\) and \(\vec{u}_k(i) \deff \vec{v}_k(i)\) if \(i \notin E\).
	Let \(J\) be the ideal represented by \(\vec{u}_k\), \emph{i.e.}, \(J := \set{\vec{u} \in \nats^d \mid \vec{u} \leprod \vec{u}_k}\).
	In particular, \(J\) contains vector \(\vec{v}_k^{(j)}\) for every \(j \in \nats\), which are all in \(U_{k+1} \setminus U_k\) by \cref{vk-vkplusone-sets}.
	This implies that \(J \nsubseteq D_{k+1}\).
	We now prove that \(J \subseteq D_k\).
	Let \(\vec{u} \in J\), and let \(j \deff \norm{\vec{u}}\).
	We have \(\vec{u} \leprod \vec{v}_k^{(j)}\): for all \(i \notin E\), \(\vec{u}(i) \leq \vec{u}_k(i) = \vec{v}_k^{(j)}(i)\), and for \(i \in E\), \(\vec{u}_k(i) \leq \norm{\vec{u}} = j \leq \vec{v}_k^{(j)}(i)\).
	Because \(\vec{v}_k^{(j)} \in D_k\) by \cref{vk-vkplusone-sets}, we have proved that \(\vec{v} \in D_k\).
	That being true for all \(\vec{v} \in J\), this proves that \(J \subseteq D_k\).
	In particular, \(J\) is contained in some ideal \(I_k\) in the "decomposition" of \(D_k\); because \(J \nsubseteq D_{k+1}\), \(I_k\) is "proper at step \(k\)".
	The representing vector of \(J\) is equal to \(\omega\) on every \(i \in E\), therefore the same is true for the representing vector of \(I_k\), concluding the proof of \cref{Dk-controlled}.
\end{appendixproof}

We apply \cref{bound-descending-chains} on \((D_k)\) to prove that \((D_k)\) and \((U_k)\) stabilize at index at most \((N+1)^{3^{d} (\log(d)+1)} \leq (M+1)^{3^{m^2+2} \cdot 2(\log(m^2+2) +1) m^2} = B\), so that \(\transfersof{\transitions^*} = \transfersof{\transitions^{\leq B}}\).
By above, transfer flows in \(\minwqo{\Tleq{k}}\) have weight bounded by \(k\), therefore transfer flows of \(\minwqo{\transfersof{\transitions^*}} = \minwqo{\Tleq{B}}\) have weight at most \(2B\).
This concludes the proof of \cref{structural-theorem-flows}.
%
%
\section{Conclusion}
When compared to the \NEXPTIME result for LTL\(\setminus \ltlnext\) verification of shared-memory systems with pushdown machines \cite{FortinMW17}, our 2-\EXPSPACE LTL result may seem weak.
However, their techniques are quite specific, while ours are generic, enabling us to go from LTL to monadic HyperLTL with little extra work.
Additionally, we believe transfer flows, \(K\)-blind sets and the results thereof apply to other problems and systems, such as reconfigurable broadcast networks \cite{DSTZ12} or asynchronous shared-memory systems \cite{EGM13}, which enjoy a similar monotonicity property to IOPP.
Most problems considered in this paper are undecidable; this was to be expected for infinite-state systems.
However, our decidability result (\cref{thm:hyperltl-iopp}) sheds light on a decidable fragment, suggesting that further research on verification of hyperproperties for infinite-state systems should be pursued.
%

%
%
%
%
%
%
%
%
%
%
%
%
%
%
%
%
%
%
\bibliographystyle{plain}
\end{document}